\documentclass[manuscript,screen,nonacm]{acmart}
\AtBeginDocument{%
  }

\setcopyright{none}
\renewcommand\footnotetextcopyrightpermission[1]{}

\usepackage{thmtools}
\usepackage{thm-restate}
\usepackage[ruled]{algorithm2e}
\usepackage{cleveref}
\usepackage{xspace}
\usepackage{tikz}
\usepackage{amsmath}
\allowdisplaybreaks[4]
\usepackage{subcaption}

\begin{document}

%%
%% The "title" command has an optional parameter,
%% allowing the author to define a "short title" to be used in page headers.
\title{Fractional Fully Online Matching}
\titlenote{Preliminary versions of parts of this article appeared in SODA 2019, FOCS 2020, and EC 2024~\cite{soda/HuangPTTWZ19,focs/HuangTWZ20,DBLP:conf/sigecom/Tang024}.}

%%
%% The "author" command and its associated commands are used to define
%% the authors and their affiliations.
%% Of note is the shared affiliation of the first two authors, and the
%% "authornote" and "authornotemark" commands
%% used to denote shared contribution to the research.
\author{Zhiyi Huang}
\affiliation{%
  \institution{The University of Hong Kong}
  \city{Hong Kong}
  \country{China}
}
\email{zhiyi@cs.hku.hk}

\author{Zhihao Gavin Tang}
\affiliation{%
  \institution{Shanghai University of Finance and Economics}
  \city{Shanghai}
  \country{China}
}
\email{tang.zhihao@mail.shufe.edu.cn}

\author{Xiaowei Wu}
\affiliation{%
  \institution{University of Macau}
  \city{Macau}
  \country{China}
}
\email{xiaoweiwu@um.edu.mo}

\author{Yuhao Zhang}
\affiliation{%
 \institution{Shanghai Jiao Tong University}
 \city{Shanghai}
 \country{China}
}
\email{zhang_yuhao@sjtu.edu.cn}

%%
%% By default, the full list of authors will be used in the page
%% headers. Often, this list is too long, and will overlap
%% other information printed in the page headers. This command allows
%% the author to define a more concise list
%% of authors' names for this purpose.
% \renewcommand{\shortauthors}{Trovato et al.}

%%
%% The abstract is a short summary of the work to be presented in the
%% article.
\begin{abstract}

    This paper studies fractional matching on general graphs in the fully online model of Huang et al. (JACM 2020), in which all vertices arrive online and remain available for only a limited time. The algorithm must make irrevocable fractional matching decisions while the relevant vertices are simultaneously available.
    We extend the classic Water-Filling algorithm, also known as Balance and originally introduced by Kalyanasundaram and Pruhs (TCS 2000), to the fully online setting. Using an online primal-dual framework, we prove that the generalized Water-Filling algorithm achieves a competitive ratio of $2-\sqrt{2}\approx 0.586$ in the fully online model, and that this analysis is tight.
    To surpass the $2-\sqrt{2}$ barrier, we incorporate the ideas of eager matching and history-based pricing into Water-Filling. We show that the resulting algorithm achieves an improved competitive ratio of $0.599$, thereby establishing that Water-Filling is not optimal in the fully online setting.
    On the hardness side, we further improve the known upper bound for fractional fully online matching, reducing the previous best bound of $0.6297$ due to Eckl et al. (ORL 2021) to $0.6132$.
\end{abstract}

%%
%% The code below is generated by the tool at http://dl.acm.org/ccs.cfm.
%% Please copy and paste the code instead of the example below.
%%
\begin{CCSXML}
<ccs2012>
   <concept>
       <concept_id>10003752.10003809.10010047</concept_id>
       <concept_desc>Theory of computation~Online algorithms</concept_desc>
       <concept_significance>500</concept_significance>
       </concept>
 </ccs2012>
\end{CCSXML}

\ccsdesc[500]{Theory of computation~Online algorithms}

%%
%% Keywords. The author(s) should pick words that accurately describe
%% the work being presented. Separate the keywords with commas.
\keywords{Online Matching, Fractional Algorithms, Primal Dual}

% \received{20 February 2007}
% \received[revised]{12 March 2009}
% \received[accepted]{5 June 2009}

%%
%% This command processes the author and affiliation and title
%% information and builds the first part of the formatted document.
\maketitle

\renewcommand{\theHlemma}{\thesection.\arabic{lemma}}

% \tableofcontents

\newcommand{\pw}{p}
\newcommand{\eqdef}{\stackrel{\textrm{def}}{=}}
\newcommand{\diff}{\mathop{}\!\mathrm{d}}
\newcommand{\expect}[2]{\operatorname{\mathbf E}_{#1}\left[#2\right]}
\newcommand{\vect}[1]{\ensuremath{\vec{#1}}}
\newcommand{\vecy}{\vect{y}}
\newcommand{\vecmv}[1][v]{\vect{y}_{\text{-}#1}}
\newcommand{\onev}{\ensuremath{\mathsf{1}}}
\newcommand{\idr}[1]{\onev\left(#1\right)}
\newcommand{\var}{\mathsf{var}}
\newcommand{\opt}{\textsc{Opt}\xspace}
\newcommand{\alg}{\textsc{Alg}\xspace}
\newcommand{\dd}{\mathop{}\!\mathrm{d}}

\newcommand{\xiaowei}[1]{\textcolor{red}{(#1)}}
\newcommand{\zhihao}[1]{\textcolor{blue}{(#1)}}
\newcommand{\yuhao}[1]{\textcolor{yellow}{(#1)}}
\newcommand{\zhiyi}[1]{\textcolor{green}{(#1)}}
\newenvironment{suggestedrevision}[1][Suggested revision]
{%
    \par\smallskip
    \begingroup
    \color{blue}
    \noindent\textbf{#1.}\ \ignorespaces
}
{%
    \par
    \endgroup
    \smallskip
}
\newcommand{\ewf}{\textsc{Eager} \textsc{Water-Filling}\xspace}
\newcommand{\wtf}{\textsc{Water-Filling}\xspace}
\newcommand{\WTF}{\textsc{Water-Filling}\xspace}
\newcommand{\balance}{\textsc{Balance}\xspace}
\newcommand{\greedy}{\textsc{Greedy}\xspace}
\newcommand{\ranking}{\textsc{Ranking}\xspace}

\section{Introduction}

Matching theory is a cornerstone of combinatorial optimization, renowned for its elegant mathematical structure and foundational techniques.
In this field, the online matching problem has been a central theme in the online algorithms literature since the seminal work of \citet{stoc/KarpVV90}.
Their model considers the classic one-sided online matching problem, which involves a known set of offline vertices and a sequence of online vertices that arrive sequentially.
The algorithm must make an immediate and irrevocable matching decision upon each vertex's arrival.

A significant branch of this research focuses on the fractional online matching problem.
In this setting, the algorithm is permitted to make fractional assignments: for example, when an online vertex arrives and is adjacent to three available offline neighbors, the algorithm may assign a $1/3$ matched portion to each of the three edges.
This fractional variant is practically motivated by the large-volume perspective in real-world online matching applications, where each vertex may represent a type of resource or request with a large volume, and each atomic matching decision only affects a small fraction.
From a theoretical perspective, compared with the original integral version of the problem, the online algorithm can avoid the imbalances in matching decisions caused by integral constraints.
This allows us to clearly capture a core algorithmic question in the fractional version: how to make the most balanced matching decisions to optimize robustness against future uncertainty?

In the classic one-sided arrival setting, this question has been well studied.
\citet{tcs/KalyanasundaramP00} proposed the \wtf algorithm, also known as \balance, which achieves the optimal competitive ratio of $1-1/e$.
The balancing rule in this algorithm always chooses an available neighbor with the lowest water level, i.e., the smallest already matched fraction, thereby mimicking the physical water-filling process.
The same water-filling principle has been successfully extended to several weighted settings, including the vertex-weighted setting \cite{soda/AggarwalGKM11}, the edge-weighted setting \cite{wine/FeldmanKMMP09}, and the \textit{AdWords} setting \cite{jacm/MehtaSVV07}.
Within a broad class of online divisible-resource allocation problems, Water-Filling was recently shown to be universally minimax optimal for a large family of balancing objectives \cite{DBLP:journals/corr/abs-2603-26893}.
All these weighted extensions define a proper version of the water level and, by following the water-filling principle, successfully achieve the optimal competitive ratio of $1-1/e$.

The online bipartite matching was generalized to a fully online model by \citet{jacm/HuangKTWZZ20}, in which all vertices of the graph arrive online, and each vertex is available only during a specified time window from its arrival time to its departure deadline.
Upon the arrival of an online vertex, its incident edges with the existing vertices are revealed.
The model is motivated by the fact that the classic one-sided arrival model does not fully capture the dynamic nature of many real-world e-commerce applications. A typical example is online ride-hailing platforms, such as Uber, DiDi, and Lyft, where both drivers and passengers arrive online and remain in the system for a period of time while waiting to be matched. This is naturally captured by the active-window structure in the fully online model. Furthermore, in applications where matching decisions are made among users of the same type, such as passenger-to-passenger ride-sharing, the underlying graph becomes non-bipartite. This naturally raises the following central open question:

\begin{center}
\textsf{In the fractional fully online problem, is the water-filling idea still the optimal way to balance different choices \\
against the uncertain future? }
\end{center}

To address this, we first clarify a proper way to utilize the water-filling idea within the fully online matching problem.
It is straightforward to observe that a \emph{lazy} algorithm, which only makes matching decisions at a vertex's deadline (the final moment of its active time window), dominates other approaches.
Therefore, we naturally exploit the water-filling idea at each vertex's deadline and formally denote this approach as the \wtf algorithm for the fully online model. Our first result is to establish the tight competitive ratio of the \wtf algorithm.

\begin{theorem}
    \label{thm:wtf}
    The \wtf algorithm is $(2-\sqrt{2})$-competitive for fractional fully online matching on general graphs, and the ratio is tight for \wtf.
\end{theorem}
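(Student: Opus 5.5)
The plan has two parts: an online primal-dual (randomized-free, fractional) charging argument for the lower bound, and an explicit instance for tightness.

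\textbf{Algorithm and dual.} Write $x_u$ for the water level of $u$. At the deadline of $u$, \wtf continuously pushes water from $u$ to its currently active neighbor $v$ of minimum level $x_v$. This continues until $x_u=1$ or every active neighbor is full.

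Every increment $\dd x$ on edge $(u,v)$ is split between the two endpoints' duals $\alpha_u,\alpha_v$, so that the primal gain equals the dual gain. The split is governed by a function of the current level of the \emph{passive} vertex $v$ (the one that is not at its deadline). I would give $v$ a gain of $g(x_v)\dd x$ and $u$ a gain of $(1-g(x_v))\dd x$, with $g$ nondecreasing, so that $\alpha_v=\int_0^{x_v} g$ as long as $v$ is only ever passive. A vertex that is later active at its own deadline also collects the active shares.

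\textbf{Approximate dual feasibility.} Fix an edge $(u,v)$ and assume $u$'s deadline comes first; at that time $v$ is active. I would show $\alpha_u+\alpha_v\ge \Gamma=2-\sqrt2$ by cases on the final state of $u$.

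\emph{Case 1: $u$ ends unmatched ($x_u<1$).} Water-filling then forces $x_v=1$ after $u$'s deadline. If $v$ reached level $1$ passively, then $\alpha_v\ge\int_0^1 g$, and if $v$ got filled at its own deadline, its active shares count too. This case needs $\int_0^1 g\ge\Gamma$, with a careful treatment of what $u$ received passively before its own deadline.

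\emph{Case 2: $u$ ends fully matched.} Let $\theta$ be the level of $v$ at the end of $u$'s processing. Every unit $u$ pushed went to a vertex of level at most $\theta$, so $u$ got at least $1-g(\theta)$ from its active phase. In the worst case $u$ also received some passive water at level $y$, gaining $\int_0^y g$ plus $(1-y)(1-g(\theta))$ actively. Meanwhile $v$ has $\alpha_v\ge\int_0^\theta g$, and it can gain more later only at its own deadline, which can only help.

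The resulting constraint is, for all $\theta,y\in[0,1]$,
\[
\int_0^y g + (1-y)(1-g(\theta)) + \int_0^\theta g \ge \Gamma .
\]
The worst case appears to be $y$ such that $g(y)=1-g(\theta)$. I would try a linear $g$ on a range, e.g.\ $g(x)=\min\{1, c+ x\}$-type, or solve the tight ODE version. Optimizing should yield $\Gamma=2-\sqrt2$; the value satisfies $\Gamma^2-4\Gamma+2=0$, suggesting a quadratic from a linear or piecewise-linear $g$.

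\textbf{Main obstacle.} The hardest step is the case where a vertex is both passive (receiving water earlier from others' deadlines) and later active. Its dual is then a mix, and at the time of $u$'s deadline its level $x_v$ overstates what it will have collected via $g$ in the bad subcases. I would handle this by proving a monotonicity lemma: a vertex's gain when active is at least its remaining capacity times $1-g(\text{current min level})$. I would also argue adversarially that the worst case has $v$ purely passive up to $u$'s deadline.

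\textbf{Tightness.} Build an instance using a prefix in which vertices get partially filled to level about $\sqrt2-1$ by earlier deadlines. Their true partners (in \opt) then arrive and find them partially used. Concretely, I would take a recursive or upper-triangular gadget in the spirit of the $1-1/e$ bad example: $u$'s deadline spreads water evenly among many active vertices, which later face their perfect partners with less capacity. Letting the gadget size grow and optimizing its parameters should give ratio $2-\sqrt2$, matching the equality case of the dual constraint above.
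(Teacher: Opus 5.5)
\paragraph{Lower bound.} This half follows the paper: the same split of each increment ($g(x_v)$ to the passive endpoint, $1-g(x_v)$ to the departing one), the same two cases, and the same key inequality
\[
\int_0^{y}g+(1-y)\bigl(1-g(\theta)\bigr)+\int_0^{\theta}g\;\ge\;\Gamma ,
\]
together with $\int_0^1 g\ge\Gamma$ when $x_v=1$. In that case $\alpha_u\ge 0$ is all you need about $u$.

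The ``main obstacle'' you anticipate does not arise. \wtf is lazy, so until $u$'s deadline has been processed neither $u$ nor $v$ has ever been active, and $\alpha_v=\int_0^{x_v}g$ holds exactly.

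What your plan leaves open is the choice of $g$, and your concrete suggestion fails. For $g(x)=\min\{1,c+x\}$, the pair $y=0$, $\theta=1-c$ gives only $(1-c^2)/2\le 1/2$. Restrict instead to $g(x)=ax+1-a$. With $s=y+\theta$, the left side equals $\frac a2 s^2+(1-2a)s+a$. For $a\in[\frac12,1]$ its minimum over $s$ is $2-a-\frac{1}{2a}$, which is maximized at $a=\frac{\sqrt2}{2}$. This gives the paper's $g(x)=\frac{\sqrt2}{2}x+1-\frac{\sqrt2}{2}$. For it the bound reads $\frac{\sqrt2}{4}\bigl(y+\theta-(2-\sqrt2)\bigr)^2+2-\sqrt2$, and $\int_0^1 g=1-\frac{\sqrt2}{4}>2-\sqrt2$.

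\paragraph{Tightness.} This half has a genuine gap. You give intuition but no instance, and ``optimizing its parameters should give $2-\sqrt2$'' is exactly what has to be proved. A single upper-triangular gadget does not suffice, since \wtf achieves $1-1/e>2-\sqrt2$ on it. A one-shot prefix that pre-fills the gadget also pays for itself, because the water it deposits is matched mass credited to \wtf.

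The paper's instance rests on two ideas absent from your plan.

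First, it chains $m$ upper triangles $U_t\cup V_t$ and adds edges from $U_t$ to $U_{t+1}$. The water $U_t$ pushes at its deadlines then becomes the passive water of $U_{t+1}$. The passive-level vectors evolve by the contraction $\mathbf p_{t+1}=\mathsf M(\mathbf 1-\mathbf p_t)$. Hence, as $m\to\infty$, the start-up cost is amortized away and the ratio tends to $1-\int_0^1\tau$ for the stationary profile $\tau$.

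Second, the $U_t$--$U_{t+1}$ edges follow a threshold function $h$, chosen so that every \opt pair $(u_{t,i},v_{t,i})$ sits in the equality case of the dual bound. Concretely, the passive level of $u_{t,i}$ plus the level of $v_{t,i}$ after $u_{t,i}$'s deadline equals $c=2-\sqrt2$. In the limit this reads $\tau(h(x))=c-\tau(x)$. It forces $f=\tau^{-1}$ to solve $1-f(\phi)+f(c-\phi)+(1-\phi)f'(\phi)=0$.

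You then still need to verify two facts: $\int_0^1\tau=1-c$, and $\int_0^1\frac{1}{1-y+h(y)}\,dy<1$. The latter guarantees that no passive level reaches $1$, so each $u_{t,i}$ fills completely.

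Your ``level about $\sqrt2-1$'' is the correct average, $\int_0^1\tau=\sqrt2-1$. Realizing it, however, requires this specific non-uniform profile and the matching choice of $h$.
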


The next question is whether the competitive ratio $2-\sqrt{2}$ is the best possible for fractional algorithms in general. The hard instance for \wtf suggests that an algorithm may be able to perform better if it uses the \emph{inter-available} structure when balancing its matching decisions. Here, the inter-available structure refers to the subgraph induced by vertices that are still available, i.e., vertices whose active time windows have not expired. Edges in this subgraph are not currently expiring: no endpoint has reached its deadline, and hence the algorithm is not forced to decide their matching values immediately.

This structure is special to the fully online model. It stands in contrast to \emph{expiring} structures, which consist of edges for which at least one endpoint has reached its deadline. In the classic one-sided arrival model, matching decisions are forced upon the arrival of an online vertex. Therefore, every observed decision is effectively an expiring decision. It can be shown that the water level is a simple yet optimal metric for capturing such expiring structures, which leads to the optimal $1-1/e$ competitive ratio in the one-sided model. In fact, even in the fully online model, the competitive ratio $2-\sqrt{2}$ remains an upper bound for all \emph{inter-available-independent} algorithms, namely algorithms whose matching decisions do not depend on the inter-available graph structure. Thus, the central theoretical question is:
\begin{center}
    \textsf{Can a new algorithmic approach exploit the inter-available structure to achieve a better competitive ratio \\
    in the fully online model?}
\end{center}

In our next contribution, we explore the concept of \emph{eager matching} (matching upon arrival) to capture the inter-available structure, and we link this approach to competitive analysis via an economic interpretation of matching algorithm design, grounded in the online primal-dual framework.
Specifically, we first interpret the \wtf algorithm from an economic perspective by defining a price function for each vertex relative to its water level.
Note that this price corresponds to the dual assignment for the vertex, thereby bridging the algorithm's logic to the primal-dual analysis.

Under this framework, the remaining task is to design the timing of matching decisions, where each vertex makes its matching decision by selecting the neighbor with the lowest current price.
For instance, the \wtf algorithm utilizes a monotone, water-level-based price function and requires all vertices to make continuous matching decisions only at their respective deadlines.

Consequently, we observe that it becomes advantageous to perform \emph{eager} matching at the time of a vertex's arrival under certain conditions.
We remark that this type of matching occurs precisely within the inter-available structure, as both the vertex and its neighbor remain available after the matching occurs. This contrasts with the \emph{lazy} matching at a deadline, where at least one vertex departs the system immediately following the decision.
This leads to the \ewf algorithm, which introduces an additional matching opportunity for each vertex, and to the following theorem.

\begin{restatable}{theorem}{thmEagerWf}
    \label{thm:eager-wf}
    The \ewf algorithm achieves a competitive ratio of $0.592$ for fractional fully online matching on general graphs.
\end{restatable}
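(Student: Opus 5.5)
Our approach is online primal-dual, following the analysis we would give for Theorem~\ref{thm:wtf} and adding one new ingredient for the eager step. We fix the price function $p(\cdot)$ of the water level, the same kind of increasing function used for \wtf, to be optimized at the end. Whenever a vertex $u$ spends a small amount $\dd x$ of matching on a neighbor $v$ whose level is $x_v$, we split the primal gain $2\,\dd x$ (both endpoints' levels rise) between the duals. Vertex $v$ gets $p(x_v)\dd x$, and $u$ gets the rest, $(2-p(x_v))\dd x$. With this rule $P = D$ holds by construction. It then suffices to prove approximate dual feasibility: for every edge $(u,v)$ of the graph, $\alpha_u+\alpha_v \ge \Gamma = 0.592$.

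The key steps, in order, are as follows.

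\textbf{Step 1: define the algorithm precisely.} At its arrival, a vertex $u$ eagerly water-fills into available neighbors whose price is below a threshold $\beta$, as long as its own level stays below some $\tau$. At its deadline, $u$ water-fills lazily as in \wtf until either its level reaches $1$ or every available neighbor has level at least some $\theta$. We would fix $(\tau,\beta,\theta)$ together with $p$ later.

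\textbf{Step 2: local invariants.} We would record the facts the case analysis needs:
\begin{itemize}
\item once a vertex departs, its level is at least $\theta$ if it had any neighbor still available, or $1$;
\item levels are monotone over time;
\item the gain a vertex collects as a \emph{receiver} is $\int_0^{x} p$;
\item the gain a vertex collects as an \emph{active} matcher is at least $(2-p(\text{level reached}))$ per unit of matching.
\end{itemize}

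\textbf{Step 3: case analysis on an edge $(u,v)$.} We take $u$ to be the vertex whose deadline comes first. At $u$'s deadline, $v$ is available, so after $u$'s lazy phase either $x_u=1$ or $x_v\ge\theta$.
\begin{itemize}
\item In the lazy case we get $\alpha_v\ge\int_0^{x_v}p$, and $\alpha_u$ is bounded as in the \wtf analysis.
\item The novelty is the eager case, where $u$'s matching happened at its arrival. The edge to $v$ (if $v$ arrived earlier) was then inter-available, so eagerness guarantees either $x_v$ reached the price threshold or $u$ collected a large gain early.
\item When $v$ arrives after $u$, $v$'s eager phase plays the symmetric role.
\end{itemize}
The goal of this step is to reduce $\alpha_u+\alpha_v\ge\Gamma$ to a finite family of inequalities in a few real parameters (the levels of $u$ and $v$ at key moments), each involving $p$.

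\textbf{Step 4: optimize.} We would choose $p$ and the thresholds to maximize the worst-case $\Gamma$. We expect this to need a numerical or piecewise-linear choice of $p$, certified by checking the inequalities on a fine grid with Lipschitz error bounds. Without the eager phase this should recover $2-\sqrt2$; the eager phase should push the ratio to $0.592$.

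\textbf{Main obstacle.} The hardest part is Step 3's treatment of the worst case for \wtf: $u$ is lazily matched at its deadline to vertices with low level, while its neighbor $v$ is later filled only partially by others. This is exactly where eager matching must help, and the charging there is delicate. Eager matches by $u$ may have gone to vertices other than $v$ and raised their prices, and vertices that eagerly matched into $v$ may have given $v$ only receiver gain. Our first attempt would be a strengthened invariant: at $v$'s arrival (or $u$'s, whichever is later), either $\min(x_u,x_v)$ reaches a level determined by $\beta$, or the eager gain is at least $(2-\beta)$ per unit. We would then combine this with the deadline inequality by taking the worse of the two configurations as a function of the level at the arrival time. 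If the resulting bound is not tight enough, we would let the dual split depend on whether a match is eager or lazy, giving eager receivers a discounted share.
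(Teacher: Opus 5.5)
\textbf{There is a genuine gap.} You analyze a threshold variant instead of the algorithm's own eager rule. Steps 3--4 never produce a concrete family of inequalities, and the constant $0.592$ is only asserted.

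\textbf{Normalization error.} Raising $x_{uv}$ by $\dd x$ increases the primal objective $\sum_{(u,v)\in E}x_{uv}$ by $\dd x$, not $2\,\dd x$. With your split of $p(x_v)$ versus $2-p(x_v)$, you get $\sum_u\alpha_u=2\sum_{(u,v)\in E}x_{uv}$. So $\alpha_u+\alpha_v\ge 0.592$ would certify only $0.296$. The split must be $g(x_v)\,\dd x$ to the receiver and $(1-g(x_v))\,\dd x$ to the active vertex. Likewise, stopping the deadline phase once all neighbors reach some $\theta<1$ only discards the active term $(1-p_u)(1-g(p_v))$. The deadline step should be plain \wtf.

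\textbf{Missing idea.} The eager stopping rule should be tied to the same price function used in the dual split, not to decoupled thresholds $(\tau,\beta)$. In the paper's algorithm, an arriving $u$ water-fills while $g(x_u)+g(x_v)\le 1$. That is, it matches eagerly exactly while its active rate $1-g(x_v)$ is at least the price $g(x_u)$ it would later collect passively. This gives two facts.
\begin{itemize}
\item The cheapest neighbor price only rises during the process, so every eager unit earns $u$ at least $g(t_u)$, where $t_u$ is its level after arrival. Hence $\alpha_u\ge t_u\,g(t_u)$. This holds regardless of which neighbors absorbed the eager mass or who later fills $v$, so your ``main obstacle'' disappears.
\item Since levels only increase, the stopping condition gives $g(t_u)+g(p_v)\ge 1$ if $v$ arrived first, and $g(t_v)+g(p_u)\ge 1$ if $u$ arrived first.
\end{itemize}

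\textbf{What the analysis yields.} Combine these with passive gain at rate $g$ after arrival and the lazy bound at $u$'s deadline:
\[
\alpha_u+\alpha_v\ge t_v\,g(t_v)+\int_{t_v}^{p_v}g(x)\,\dd x+t_u\,g(t_u)+\int_{t_u}^{p_u}g(x)\,\dd x+(1-p_u)\bigl(1-g(p_v)\bigr).
\]
The right-hand side is increasing in $t_u$ and $t_v$. In the $v$-first case it is also nondecreasing in $p_u$ on the feasible region. Push to the constraint boundary and substitute $h=g^{-1}$. This leaves a one-parameter inequality when $v$ arrives first and a two-parameter inequality when $u$ arrives first. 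A factor-revealing LP over piecewise-linear $h$, with explicit interpolation error bounds, then certifies $\Gamma=0.592$.

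\textbf{Why your version falls short.} Your thresholds replace the curve $g(t_u)+g(p_v)=1$ by the region $\{t_u\ge\tau\}\cup\{p_v\ge p^{-1}(\beta)\}$. They also replace the eager rate $g(t_u)$ by the fixed bound $1-\beta$. Nothing in your plan shows that the worst case then reaches $0.592$.
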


We further show that the algorithm and competitive ratio can be improved via a more fine-grained design of the price function from an economic viewpoint.
While both the \wtf and \ewf algorithms utilize purely water-level-based pricing, this approach is no longer optimal once eager matching is introduced to exploit the inter-available structure.
More fine-grained historical information becomes available to identify neighbors' states at the moment of a vertex's matching decision.
For instance, two neighbors with identical water levels may still differ as follows: one may have acquired its water level actively upon arrival through eager matching, while the other accumulated its water level passively at the deadlines of other vertices. In contrast to the \wtf algorithm, where all water levels are accumulated passively, this distinction becomes crucial.
To address this, we design a history-based pricing scheme to capture the different channels of accumulated water levels, thereby establishing an enhanced version of the \ewf algorithm.
As a result, we further improve the competitive ratio to $0.599$.

\begin{restatable}{theorem}{thmEagerWfHistory}
    \label{thm:eager-history}
	The \ewf algorithm with history-based pricing achieves a competitive ratio of $0.599$ for fractional fully online matching on general graphs.
\end{restatable}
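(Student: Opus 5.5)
We will prove \Cref{thm:eager-history} with the same online primal-dual framework used for \Cref{thm:wtf} and \Cref{thm:eager-wf}. The dual of the fractional matching LP on a general graph has a variable $\alpha_u \ge 0$ for every vertex and a constraint $\alpha_u+\alpha_v \ge \Gamma$ for every edge $(u,v)$. The goal is to maintain dual values, coupled to the algorithm, such that two conditions hold.
\begin{itemize}
\item[(i)] $\sum_u \alpha_u$ equals the size of the fractional matching produced by the algorithm.
\item[(ii)] At termination, $\alpha_u+\alpha_v\ge \Gamma$ for every edge, with $\Gamma=0.599$.
\end{itemize}
Weak duality then gives the ratio. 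Whenever $\dd x$ units flow along an edge $(u,v)$, we split the gain between the endpoints. The split depends on the \emph{history} of each endpoint: the water level it gained actively through eager matching at arrival, the level it gained passively at other vertices' deadlines, and the level it gained at its own deadline. This is exactly the information the history-based price function reads. Concretely, I will parametrize the price by two functions. The first, $f(\cdot)$, is applied to passively accumulated water. The second, $g(\cdot)$, is applied to eagerly accumulated water, and the price depends on the eager level reached at arrival. The shares are chosen so that a vertex matching toward the cheapest neighbor pays that neighbor its price, and keeps the remainder.

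The key steps, in order, are as follows.

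\textbf{Step 1: structural facts.} I will establish the basic properties of \ewf with history-based prices. Matching only moves toward the lowest-price neighbor. When some vertex $u$ reaches its deadline and its water level is below $1$, every neighbor still available has price at least the current threshold. Eager matching at arrival is capped at a prescribed level $\theta$, or stops when prices equalize.

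\textbf{Step 2: fix an edge and split into cases.} Fix an edge $(u,v)$ and assume $u$ arrives before $v$. The case analysis follows the timing relative to $v$'s arrival and to both deadlines, and each case yields a lower bound on $\alpha_u+\alpha_v$ in terms of a few variables. The cases are:
\begin{itemize}
\item[(a)] $u$'s deadline comes first. Then $u$ water-fills at its deadline while $v$ is still available, so either $u$ is saturated or $v$'s price at that time is high.
\item[(b)] $v$ arrives while $u$ is available, and $v$ eagerly matches. Then $v$'s eager level and the price it saw lower-bound its gain.
\item[(c)] $v$'s deadline comes first, with $u$ still available. This reduces to a bound in terms of $u$'s passive and eager levels.
\end{itemize}
The variables in these bounds are $u$'s eager level $e_u$, its passive level $p_u$, and $v$'s level when the relevant decision is made. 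Monotonicity of the water levels after each decision point is what makes the worst case reduce to these few quantities.

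\textbf{Step 3: reduce to a finite family of inequalities.} Each case becomes an inequality of the form $\Phi(e_u,p_u,\ell_v;f,g)\ge \Gamma$ that must hold over a compact parameter region. The parameter $\theta$ and the functions $f,g$ are then chosen, for example piecewise linear on a fine grid, and the inequalities are verified numerically or through a factor-revealing program. I will check that the choice $\Gamma=0.599$ is feasible.

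The main obstacle is Step 2, specifically the cases where $u$'s water level is a mixture of eager and passive contributions. Under purely level-based pricing, these are exactly the cases that forced the $0.592$ bound of \Cref{thm:eager-wf}. The history-based gain split has to reward passive water more, because passive accumulation signals neighbors that have not yet been exploited. At the same time it must not undercharge eager water, or the case where $v$'s deadline comes first fails. My first attempt is to make $\alpha_u$ separable across channels: integrate $f$ over the passive increments and $g$ over the eager increments. The edge inequality then depends only on the pair $(e_u,p_u)$ and not on the order in which they accrued. If order-dependence cannot be removed, I will instead show that the adversary's worst order is all eager water first, and bound that case directly.
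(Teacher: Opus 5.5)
Your skeleton matches the paper's: a per-edge primal-dual bound split by the relative order of arrivals and deadlines, followed by a factor-revealing program. Your case (a) is the paper's $\Phi_2$ case. Your case (c), where $v$'s window is nested inside $u$'s, is its $\Phi_1$ case, which indeed depends only on $u$'s levels. But you never pin down the algorithm or the per-edge inequality, and what you do say points the wrong way.

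In the paper the price is a single function $g(t_v,x_v)$ of two arguments: the eager level $t_v$, frozen at arrival, and the current total level $x_v$. The function $f(t)=g(t,t)$ is only the eager threshold: at arrival, $u$ keeps water-filling while $f(x_u)+\min_v g(t_v,x_v)\le 1$, with no cap $\theta$. Eager water always accrues first, so there is no order-dependence to remove. The history-dependence is that $u$'s later passive increments earn $g(t_u,x)$; your separable scheme, which essentially swaps the roles of $f$ and $g$, does not capture this.

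More importantly, the eager gain must be bounded by $t_uf(t_u)$, not by integrating a price over the eager increments. The cheapest neighbor price only rises during the eager phase, so every increment earns at least $1-(\text{final cheapest price})\ge f(t_u)$. Calling the earlier-deadline endpoint $u$, this gives the lower bound $t_uf(t_u)+\int_{t_u}^{p_u}g(t_u,x)\,dx+t_vf(t_v)+\int_{t_v}^{p_v}g(t_v,x)\,dx+(1-p_u)(1-g(t_v,p_v))$. It involves four quantities, not the three you list.

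The same stopping rule supplies the cross-constraint that your case (b) should state. Also, (b) is not a separate case: with overlapping windows it always occurs when $u$ arrives first. Since $u$ is available when $v$ arrives, $f(t_v)\ge 1-g(t_u,x_u)\ge 1-\theta_u$, where $\theta_u$ is $u$'s price at the relevant later time. In your case (c), this makes the bound nondecreasing in $v$'s pre-deadline level, so that level can be pushed down to $t_v$. Then $v$'s share collapses to $t_vf(t_v)+(1-t_v)(1-\theta_u)\ge 1-\theta_u$, and this inequality fails in general if $t_vf(t_v)$ is replaced by $\int_0^{t_v}f$. In your case (a), the same constraint is exactly the region $f(t_v)\ge 1-\theta_u$ (i.e.\ $\tau_v\ge 1-\theta_u$) on which $\Phi_2\ge\Gamma$ must hold.

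Step 3 also has a real gap. In $(t,x)$ coordinates, the parameter set over which the inequality must hold, e.g.\ $f(t_v)+g(t_u,p_u)\ge 1$, depends on the unknown functions. So imposing the inequalities on gridded $f,g$ is not a linear program, and ``I will check that $\Gamma=0.599$ is feasible'' is unsupported.

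The paper's key step is a change of variables into price space. Set $\tau=f(t)$, $\theta=g(t,p)$, and let $h(\tau,\cdot)$ be the inverse of $x\mapsto g(f^{-1}(\tau),x)$. A layer-cake (Fubini) argument gives $\int_{h(\tau,\tau)}^{h(\tau,\theta)}g(f^{-1}(\tau),x)\,dx=\theta h(\tau,\theta)-\tau h(\tau,\tau)-\int_\tau^\theta h(\tau,y)\,dy$. This turns the two cases into $\Phi_1$ and $\Phi_2$, which are linear in $h$ over the fixed regions $\tau\le\theta$ and $1-\theta_u\le\tau_v\le\theta_v$.

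The paper then solves this LP on a grid with $n=200$. It bounds the bilinear-interpolation loss by $2/n^2$ and $4/n^2$ using a one-cell halo of constraints. It also imposes row- and diagonal-monotonicity constraints so that valid $f,g$ can be recovered from $h$. That is where the real work lies, not in the eager/passive mixtures you flag in Step 2, which the paper dispatches in one short lemma.
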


\paragraph{Hardness Results}
The final contribution of this paper is an improved hardness result for fractional fully online matching. Recall that any randomized integral algorithm can be simulated by a deterministic fractional algorithm whose fractional matching value equals the expected matching value of the randomized algorithm. Therefore, any impossibility result for the fractional setting immediately implies the same impossibility result for randomized integral algorithms. Almost all previous upper bounds in the online matching literature were established in the fractional model. For example, the classic $1-1/e$ upper bound in the one-sided arrival model was proved for fractional algorithms. The same is true for the $0.6317$ upper bound for the fully online model, first shown by \citet{jacm/HuangKTWZZ20} and later improved to $0.6297$ by \citet{orl/EcklKLS21}. In particular, $0.6297$ was the best previously known upper bound for fully online matching, both for fractional algorithms and for randomized integral algorithms.

We improve this upper bound to $0.6132$. Together with our positive results, this narrows the optimal competitive ratio for fractional fully online matching to the interval $[0.599,0.6132]$.

\begin{restatable}{theorem}{fullyUpper}
\label{thm:fullyupper}
No algorithm can achieve a competitive ratio strictly better than $0.6132$ for the fractional fully online matching problem, even on bipartite graphs.
\end{restatable}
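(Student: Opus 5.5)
We will prove the statement by building an adaptive adversarial family of bipartite fully online instances and analyzing it with a factor-revealing (min-max) program. The starting point is the construction behind the $0.6317$ bound of \citet{jacm/HuangKTWZZ20} and its refinement by \citet{orl/EcklKLS21}. In these, the instance proceeds in phases. In each phase a group of vertices reaches its deadline, and this forces the algorithm to commit fractional matchings onto currently available neighbors. The adversary then reveals new vertices that connect only to the neighbors that received the \emph{most} water, which exploits the algorithm's balancing. Offline, the optimum matches every vertex through the newly arriving ones, so $\opt$ equals the number of vertices, up to lower-order terms. Our plan is to generalize this construction so that the adversary also exploits the inter-available structure. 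Vertices will be allowed to have neighbors that arrive \emph{after} them, and deadlines will be placed so that the algorithm cannot know which of two symmetric available vertices will later be ``rescued'' by new arrivals.

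\textbf{Symmetrization.} Because the adversary is adaptive and the instance is symmetric within each group, we first argue that we may assume the algorithm treats symmetric vertices identically. We will average the algorithm over random relabelings of vertices inside each group. This preserves the expected matching value, and fractional matchings are closed under convex combination. After this step, the algorithm's behavior in each phase is described by a few real parameters: how much water it places on each group, and when it places it, either eagerly before the deadline or at the deadline. Its total matching value is then an explicit function of these parameters.

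\textbf{Optimization.} The adversary's choices (group sizes and the branching of which vertices get new neighbors) are continuous parameters. For the chosen parameters we need that every algorithm parameter vector yields ratio at most $0.6132$. We will reduce this to a finite-dimensional problem: maximize $\alg/\opt$ subject to the capacity constraints. We expect it to become a linear program, or a low-dimensional nonconvex program, once the adversary parameters are fixed. We then fix good adversary parameters, found numerically, and certify the upper bound on the algorithm's LP value with an explicit dual solution. This gives a rigorous bound independent of numerical search. If the worst-case algorithm is described by a continuous-time ODE in the limit of many phases, we would instead discretize and bound the discretization error.

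\textbf{Main obstacle.} The hard step is designing the gadget so that eager matching, meaning matching upon arrival, does not help the algorithm. The previous constructions implicitly assumed lazy behavior, and the new bound must hold against algorithms that use the inter-available structure, such as \ewf. We would first try a two-stage gadget. Two symmetric available vertices share a neighbor, and only afterward does the adversary decide which one gets a private future neighbor. Any water placed eagerly then wastes capacity on the wrong vertex with probability one half under symmetrization. We would verify that the resulting program's optimum drops below $0.6297$ and tune it to reach $0.6132$.
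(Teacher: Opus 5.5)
\textbf{The proposal has a genuine gap: it is a research plan, not a proof.} You never fix an instance and never compute a value. The constant $0.6132$ enters only as a target (``tune it to reach $0.6132$''). Since the theorem is exactly this numerical claim, what is missing is the whole argument: a concrete construction plus the calculation showing it forces ratio below $0.6132$.

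\textbf{The missing idea is the paper's repeated two-stage gadget.}
\begin{enumerate}
\item At the start of stage $k$, only $B_k\cup C_k$ are available, of sizes $\alpha n$ and $(1-\alpha)n$, all at a common water level $x_k$.
\item The $\alpha n$ vertices of $A_k$ arrive and depart one at a time. Each $a_{ki}$ is adjacent to $\{b_{kj}\}_{j\ge i}$ and to all of $C_k$. This raises $C_k$'s level to $y_k=x_k+\sum_i\frac{1}{n-i+1}\to x_k-\ln(1-\alpha)$. Then $B_k$ departs.
\item $D_k\cup B_{k+1}\cup C_{k+1}$ arrive, completely joined to $C_k$. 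The vertices of $C_k$ depart, spreading their residual $1-y_k$ over $(2-\alpha)n$ vertices. Then $D_k$ departs with no available neighbor.
\end{enumerate}
This gives
\[
x_{k+1}=\bigl(1-x_k+\ln(1-\alpha)\bigr)\cdot\frac{1-\alpha}{2-\alpha},
\]
with $\opt=n\ell$ (match $A_k$ to $B_k$ and $C_k$ to $D_k$) and per-stage gain $\alpha+(2-\alpha)x_{k+1}$. The fixed point yields
\[
\alpha+\frac{(2-\alpha)(1-\alpha)}{3-2\alpha}\bigl(1+\ln(1-\alpha)\bigr)\approx 0.6131 \quad\text{at } \alpha=0.43 .
\]
The complete bipartite stage amplifies the loss incurred in the triangle stage. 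Because the pattern repeats, that loss is paid at the stationary level in every stage, rather than once as in the growing-tree constructions you propose to extend.

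\textbf{You have also put the difficulty in the wrong place.} Eager matching needs no special gadget. Any algorithm can be simulated by a lazy one that commits $x_{uv}$ only at $\min(d_u,d_v)$. By an exchange argument, it may further be assumed to match each departing vertex as fully as its available neighbors allow.

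Inter-available information is neutralized by the instance itself. At every deadline, the departing vertex's available neighbors are exchangeable given the revealed history (random latent identities within each group, plus Yao's principle). So one may assume without loss that the algorithm distributes mass evenly, which is exactly \wtf on this instance. After that reduction no algorithm parameters remain. Your factor-revealing LP with a dual certificate is therefore unnecessary; the bound is a closed-form recurrence.

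Your ``two symmetric vertices, one later rescued'' gadget is essentially the upper-triangle idea. On its own it only reproduces the $1-1/e$ bound. What pushes the value below $0.6297$ is the amplifying complete bipartite stage together with repetition, and neither appears in your plan.
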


Our main technical contribution is a new hard-instance construction that is qualitatively different from previous ones. Earlier constructions embed a growing tree into the standard upper-triangle instance, which is the hard instance used in the $1-1/e$ upper bound in the one-sided model. The role of the growing tree is to create a worse initial state for the upper-triangle instance, thereby pushing the hardness below $1-1/e$.

In contrast, our construction is based on a repeated pattern, reminiscent of the tight hard instances for \textsc{Ranking}~\cite{soda/HuangPTTWZ19} and \wtf\ (\Cref{sec:wtf}). Each repetition consists of two stages: an upper-triangle stage followed by a complete bipartite stage. The second stage amplifies the mistakes made by the algorithm in the first stage. Moreover, by repeating this pattern many times, we can drive the upper bound down repeatedly, rather than obtaining only a one-shot loss as in previous constructions.

\subsection{Further Related Work}

Online bipartite matching is a central problem in the design and analysis of online algorithms, initiated by the seminal work of \citet{stoc/KarpVV90}. They showed that the natural \greedy algorithm achieves a competitive ratio of $0.5$, since it always produces a maximal matching, and that this ratio is optimal among deterministic integral algorithms. In contrast, by relaxing the integrality constraint and allowing randomization, they introduced the randomized algorithm called \ranking, which achieves the optimal competitive ratio of $1-1/e$. The analysis of \ranking was later simplified in several follow-up works~\cite{soda/GoelM08, sigact/BenjaminM08, soda/DevanurJK13}.

Since then, many variants of online matching have been studied from two main perspectives: generalized arrival models and weighted objectives. On the arrival side, \citet{icalp/WangW15}, \citet{focs/GamlathKMSW19}, and \citet{DBLP:conf/sigecom/Tang024} considered the general vertex-arrival model, in which all vertices arrive online and irrevocable matching decisions must be made upon arrival. The best competitive ratio currently known for fractional algorithms in this model is $0.526$~\cite{icalp/WangW15,DBLP:conf/sigecom/Tang024}, and this ratio was shown to be tight by \citet{DBLP:journals/corr/abs-2602-18049}. For randomized integral algorithms, the current best competitive ratio is $0.5+\varepsilon$, for some small constant $\varepsilon$, achieved by \citet{focs/GamlathKMSW19}, thereby breaking the $0.5$ barrier attained by \greedy.

The fully online model studied in this paper is another natural way to capture the setting in which all vertices arrive online. In this model, \citet{jacm/HuangKTWZZ20} showed that \ranking achieves a competitive ratio of $0.567$ on bipartite graphs, which is tight for \ranking, and $0.5211$ on general graphs. Subsequently, \citet{focs/HuangTWZ20} gave an improved algorithm with a competitive ratio of $0.569$ on bipartite graphs, while \citet{derakhshan2026unified} obtained an improved ratio of $0.539$ on general graphs. At the other extreme, if one considers the more general edge-arrival model, then \citet{focs/GamlathKMSW19} proved that no fractional algorithm can achieve a competitive ratio strictly better than $0.5$. Furthermore, \citet{ec/AshlagiBDJSS19} considered the edge-weighted version of the fully online setting, under the fixed-length time window assumption.

For the online bipartite matching with weighted objective, \citet{soda/AggarwalGKM11} studied the vertex-weighted variant, in which offline vertices carry weights, and obtained the same optimal $1-1/e$ competitive ratio as in the unweighted setting. For edge-weighted online matching, the most direct formulation does not admit any bounded competitive ratio, which has led to two alternative models: \emph{AdWords} (edge-weighted matching with budgets) and \emph{Display Ads} (edge-weighted matching with free disposal). For AdWords, \citet{jacm/MehtaSVV07} achieved a competitive ratio of $1-1/e$ using a fractional algorithm, or under the assumption that all bids are small. Without the small-bids assumption, the first result to beat the $0.5$ barrier was obtained by \citet{focs/HuangZZ20}, and the current best ratio is $0.504$ by \citet{DBLP:conf/focs/Chen0S24}. The Display Ads problem was first studied by \citet{wine/FeldmanKMMP09}, who proposed a fractional algorithm achieving a competitive ratio of $1-1/e$, again under the small-bids assumption. The first randomized integral algorithm to beat the $0.5$ barrier was given by \citet{DBLP:journals/jacm/FahrbachHTZ22}, based on a novel online correlated selection technique. Several subsequent works improved this line of research \cite{DBLP:conf/isaac/ShinA21, DBLP:conf/focs/Chen0S24, DBLP:conf/focs/BlancC21}, and the current best ratio is $0.5368$, achieved by \citet{DBLP:conf/focs/BlancC21}.

\section{Preliminaries}
\label{sec:pre}
We first introduce the notation and formal definitions used throughout the paper.

In the \textbf{fully online matching problem}, there is an underlying graph $G=(V,E)$ which is not necessarily bipartite. Vertices arrive and depart over time. For each vertex $v \in V$, let $r_v$ and $d_v$ denote its arrival time and deadline. We call $[r_v,d_v]$ the \emph{available} time window of $v$, and say that $v$ is available at time $t$ if $t \in [r_v,d_v]$. We assume that two vertices can be adjacent only if their available windows overlap, i.e., $(u,v)\in E$ only if $[r_u,d_u]$ and $[r_v,d_v]$ intersect.

The goal is to design an online algorithm that incrementally constructs a matching in $G$, with all decisions being irrevocable. The graph $G$ is not revealed in advance. When a vertex $v$ arrives at time $r_v$, the algorithm learns of $v$ together with all edges connecting $v$ to previously arrived vertices. At any time, if two active vertices $u$ and $v$ are both unmatched, the algorithm may irrevocably match them along the edge $(u,v)$, provided that this edge exists. Moreover, when a vertex $v$ departs at its deadline $d_v$, the algorithm is notified that this is the last opportunity to match $v$.

Then, we present the standard LP relaxation for matching (both primal and dual), together with the fractional version of the fully online matching problem.
\begin{align*}
\textsf{(Primal) } \max: \quad & \textstyle \sum_{(u,v)\in E} x_{uv} && \qquad\qquad & \textsf{(Dual) }\min: \quad & \textstyle\sum_{u \in V} \alpha_u\\
\text{s.t.} \quad & \textstyle \sum_{v:(u,v)\in E} x_{uv} \leq 1 && \forall u\in V & \text{s.t.} \quad & \alpha_u + \alpha_v \geq 1 && \forall (u,v)\in E \\
& x_{uv} \geq 0 && \forall (u,v)\in E & & \alpha_u \geq 0 && \forall u \in V
\end{align*}

\textbf{The fractional fully online matching problem} reduces the task of producing an integral matching of the graph to maintaining a feasible fractional solution of the (Primal) LP. Equivalently, we may fractionally match an edge $(u,v)$ with some $x_{uv} \leq 1$, while ensuring that for each vertex $u$, the total fractional matched portion 
$
\sum_{v:(u,v)\in E} x_{uv} \;\leq\; 1.
$
During the process, the fractional algorithm may only \emph{increase} $x_{uv}$ for an existing edge $(u,v)$, and only when both endpoints $u$ and $v$ are active.

All the algorithms discussed in this paper are analyzed under the \textbf{primal-dual framework}. In particular, we maintain a (possibly infeasible) dual solution $\alpha$ together with the fractional matching (a feasible primal solution $x$), and prove the following properties:
\begin{itemize}
    \item (Equal Update) $\sum_{(u,v)\in E} x_{uv} \;=\; \sum_{u \in V} \alpha_u$. 
    \item (Approximate Dual Feasibility) $\alpha_u \geq 0$ for all $u \in V$, and $\alpha_u + \alpha_v \geq \Gamma$ for all $(u,v) \in E$.
\end{itemize}

From these properties, weak duality directly implies that the feasible primal solution produced is at least a $\Gamma$-fraction of the optimal fractional solution, and hence the fractional algorithm is $\Gamma$-competitive. Specifically, since $\alpha$ is $\Gamma$-approximately feasible, the scaled vector $\alpha/\Gamma$ is a feasible dual solution. By weak duality, letting $x^*$ denote the optimal fractional solution, we have
\[
\sum_{(u,v)\in E} x^*_{uv} \;\leq\; \sum_{u \in V} \frac{\alpha_u}{\Gamma} 
\;=\; \frac{1}{\Gamma} \sum_{u \in V} \alpha_u 
\;=\; \frac{1}{\Gamma} \sum_{(u,v)\in E} x_{uv}.
\]
Formally, we have the following lemma.
\begin{lemma}
    \label{lem:primal-dual}
    Consider a fractional algorithm that maintains a feasible fractional solution $x$. 
    If there exists a nonnegative, possibly infeasible dual vector $\alpha$ that can be maintained 
    simultaneously and satisfies the \textnormal{(Equal Update)} and 
    \textnormal{(Approximate Dual Feasibility with factor $\Gamma>0$)} properties, then the fractional algorithm is $\Gamma$-competitive.
\end{lemma}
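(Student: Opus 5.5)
The argument is weak duality applied to the rescaled dual. Let $x$ be the feasible primal solution the algorithm outputs, let $\alpha$ be the dual vector maintained alongside it, and let $x^*$ be an optimal solution of the (Primal) LP, so that $\sum_{(u,v)\in E} x^*_{uv}$ is the fractional offline optimum against which we compare.

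First, we form $\beta = \alpha/\Gamma$ and check that it is feasible for the (Dual) LP. Since $\Gamma>0$, nonnegativity carries over directly: $\beta_u = \alpha_u/\Gamma \ge 0$ for every $u$. Dividing the inequality $\alpha_u+\alpha_v\ge\Gamma$ by $\Gamma$ gives $\beta_u+\beta_v\ge 1$ for every $(u,v)\in E$. So $\beta$ satisfies every dual constraint.

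Second, we invoke weak duality. For any feasible $x^*$ and any feasible $\beta$,
\[
\sum_{(u,v)\in E} x^*_{uv} \le \sum_{(u,v)\in E} x^*_{uv}(\beta_u+\beta_v) = \sum_{u\in V}\beta_u \sum_{v:(u,v)\in E} x^*_{uv} \le \sum_{u\in V}\beta_u .
\]
The first inequality uses $x^*\ge 0$ together with the edge constraints on $\beta$. The last inequality uses $\beta\ge 0$ together with the vertex constraints on $x^*$.

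Third, we apply (Equal Update), which gives $\sum_u\beta_u = \frac1\Gamma\sum_u\alpha_u = \frac1\Gamma\sum_{(u,v)\in E}x_{uv}$. Combining this with the second step yields $\sum_{(u,v)\in E} x_{uv}\ge \Gamma\sum_{(u,v)\in E} x^*_{uv}$. This holds for every instance, so the algorithm is $\Gamma$-competitive.

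There is no real obstacle in this argument. The only points needing care are, first, that both properties must hold for the final $\alpha$ at the end of the process, which is what "maintained simultaneously" provides. Second, $x^*$ is the fractional optimum, which upper-bounds the integral optimum, so the conclusion holds against either benchmark.
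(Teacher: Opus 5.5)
Your proposal is correct and follows the paper's argument exactly: you rescale $\alpha$ to the feasible dual $\alpha/\Gamma$, apply weak duality against the fractional optimum $x^*$, and finish with (Equal Update). The only difference is that you write out the weak-duality chain explicitly, whereas the paper just cites weak duality.
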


\section{Tight Analysis for the Water-Filling Algorithm}
\label{sec:wtf}
In this section, we examine the standard \wtf algorithm (also known as the \balance algorithm) for the fractional fully online matching problem. This algorithm was originally proposed in the one-sided online bipartite matching model, and we adapt it to the fully online setting. The high-level idea is to continuously allocate a vertex to its least-matched neighbors, with the process initiated at the vertex's deadline. A formal description is provided in \Cref{alg:wtf}, along with the maintenance of the dual variables $\alpha$. Note that in the algorithm, we use $x_u \eqdef \sum_{v:(u,v)\in E}x_{uv}$ to keep track of the water level (i.e., total fractional mass) of $u$ at all times, and $g:[0,1]\to[0,1]$ is a fixed increasing function for dual maintenance, which will be specified later.

\begin{algorithm}[ht]
\caption{The \wtf Algorithm}
\label{alg:wtf}
% \DontPrintSemicolon
\SetAlgoLined
\KwData{Graph $G=(V,E)$ with arrivals and deadlines on vertices;}
\KwResult{Fractional matching $\{x_{uv}\}$ and dual variables $\{\alpha_u\}$;}
\SetKwIF{When}{ElseWhen}{Else}{Trigger: when}{do}{Else when}{else}{end}
\textbf{Initialization:} Initialize each variable only once: when a vertex $u$ arrives, set $x_u\gets 0$ and $\alpha_u\gets 0$; when an edge $(u,v)$ is revealed for the first time, set $x_{uv}\gets 0$.\;

\When{a vertex $u$ departs}{
  $\pw_u \gets x_u$ \tcp*{passive water level collected before $u$'s deadline}
  $N(u)\gets \{\,v:\ v\text{ is a neighbor of }u\text{ and currently available}\,\}$ \tcp*{available neighbor set}

  \While{$x_u < 1$ and $\min_{w \in N(u)} x_w < 1$}{
    $v \gets \arg\min_{w \in N(u)} x_w$\;
    $x_{uv} \gets x_{uv} + \diff x$ \tcp*{match a small amount $\diff x$ on $(u,v)$}
    $x_u \gets x_u + \diff x, \quad x_v \gets x_v + \diff x$ \tcp*{water-level updates}
    $\diff \alpha_u \gets (1 - g(x_v))\,\diff x$, \quad $\diff \alpha_v \gets g(x_v)\,\diff x$ \tcp*{dual updates}
    \Indm
  }
}
\end{algorithm}

\subsection{Lower Bound on the Competitive Ratio}
\label{sec:wtf_lb}

We now turn to the analysis of the \wtf algorithm. We show, via the primal-dual framework, that its competitive ratio is at least $2-\sqrt{2}$.

\begin{lemma}[Positive Part of \Cref{thm:wtf}]
	\label{thm:wtf_lower}
	The \wtf algorithm is $(2-\sqrt{2})$-competitive.
\end{lemma}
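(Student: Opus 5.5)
The plan is to invoke \Cref{lem:primal-dual} with $\Gamma = 2-\sqrt{2}$, using the dual updates of \Cref{alg:wtf} for a suitably chosen increasing $g$. (Equal Update) is immediate, since each step adds $(1-g(x_v))\diff x + g(x_v)\diff x = \diff x$ to the dual objective and $\diff x$ to the primal. Nonnegativity holds because $g$ maps into $[0,1]$. All the work is in approximate dual feasibility: $\alpha_u+\alpha_v\ge \Gamma$ for every edge $(u,v)$. Without loss of generality $d_u \le d_v$, so $v$ is in $N(u)$ when $u$ departs and runs its water-filling loop.

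\textbf{Structural observation.} Before its own deadline, a vertex only gains water passively, as the neighbor $v$ chosen in some other vertex's loop. Each passive increment at level $y$ pays it $g(y)\,\diff y$. Consequently:
\begin{itemize}
\item $u$ collects at least $\int_0^{p_u} g(y)\diff y$ from its passive level $p_u$;
\item if $\theta$ denotes $x_v$ right after $u$'s loop ends, then $v$ has not yet reached its deadline, so its entire level up to $\theta$ is passive and $\alpha_v \ge \int_0^{\theta} g(y)\diff y$ (dual values only increase afterwards).
\end{itemize}
During $u$'s loop, water-filling always raises the minimum level in $N(u)$, which never exceeds $v$'s current level. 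So every neighbor $w$ that $u$ pours into has $x_w\le\theta$ at that moment, and $u$ earns at least $1-g(\theta)$ per unit of active matching.

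\textbf{Case analysis.} The loop terminates in one of two ways.
\begin{itemize}
\item If $x_u<1$ at termination, all of $N(u)$ is full, so $\theta=1$ and $\alpha_u+\alpha_v \ge \int_0^1 g$. This requires $\int_0^1 g \ge \Gamma$.
\item If $x_u=1$, then
\[
\alpha_u+\alpha_v \;\ge\; F(p,\theta) \eqdef \int_0^{p} g(y)\diff y + (1-p)\bigl(1-g(\theta)\bigr) + \int_0^{\theta} g(y)\diff y
\]
for some $p,\theta\in[0,1]$. This requires $\min_{p,\theta}F(p,\theta)\ge\Gamma$.
\end{itemize}
For fixed $\theta$, $\partial F/\partial p = g(p)-(1-g(\theta))$. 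Since $g$ is increasing, the minimizing $p$ is where $g(p)=1-g(\theta)$, clipped to $[0,1]$. This reduces the problem to a one-parameter inequality in $\theta$ that involves $g$ and its inverse.

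\textbf{Main obstacle: choosing $g$.} I would pick $g$ so that the case-2 bound is tight for a whole range of $\theta$, i.e.\ $\min_p F(p,\theta)=\Gamma$. Differentiating in $\theta$ gives a functional-differential equation linking $g$ and $g^{-1}$ through the pairing $g(p)=1-g(\theta)$. I would try an ansatz in which $g$ is constant (possibly zero) on an initial interval and then follows the solution of this equation, with $\int_0^1 g=\Gamma$ matching case 1. I would then check that the largest $\Gamma$ for which such a valid increasing $g$ exists is $2-\sqrt2$; the quadratic appearing when the symmetric pairing is balanced is where $\sqrt2$ should come from. Solving this equation and verifying the resulting $g$ is the step I expect to be hardest. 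The boundary cases, $p$ clipped at $0$ or $1$ and $\theta=0$ (where $F\ge 1-g(0)$), should be routine checks.
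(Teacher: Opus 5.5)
Your framework is exactly the paper's: \Cref{lem:primal-dual}, the passive/active accounting, the split according to how $u$'s loop ends, and the case-2 bound $F(p,\theta)$, which is \eqref{eqn:wtf_gain}. The gap is that you never fix $g$. As written, you have only reduced the lemma to the claim that some nondecreasing $g:[0,1]\to[0,1]$ satisfies $\int_0^1 g\ge 2-\sqrt2$ and $\min_{p,\theta}F(p,\theta)\ge 2-\sqrt2$. That claim is the entire quantitative content of the statement, and you leave it open. You need only one such $g$; showing that $2-\sqrt2$ is the best achievable value is unnecessary, since that is what the hardness part of \Cref{thm:wtf} does.

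The missing witness is simple, and your sketched ansatz points away from it. The paper takes the affine function $g(x)=\frac{\sqrt2}{2}x+1-\frac{\sqrt2}{2}$. With this choice the case-2 bound becomes a perfect square,
\[
F(p,\theta)=\tfrac{\sqrt2}{4}\bigl(p+\theta-(2-\sqrt2)\bigr)^2+2-\sqrt2\ \ge\ 2-\sqrt2,
\]
which is tight along $p+\theta=2-\sqrt2$. Case 1 is slack rather than tight: $\int_0^1 g=1-\frac{\sqrt2}{4}>2-\sqrt2$.

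There is no flat initial piece, since $g(0)=1-\frac{\sqrt2}{2}>0$. Your own equation $g(\theta)=(1-\phi(\theta))\,g'(\theta)$, with $g(\phi(\theta))=1-g(\theta)$, forces $g=0$ on any flat piece. It is satisfied identically by every affine $g(x)=ax+1-a$. For that family, writing $s=p+\theta$, you get $F=\frac a2 s^2+(1-2a)s+a$. Its minimum $2-a-\frac1{2a}$ is maximized at $a=\frac1{\sqrt2}$, which recovers exactly the paper's choice.

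One can also check, via the dual certificate of \Cref{sec:wtf_gh_dual}, that any $g$ certifying $2-\sqrt2$ through $F$ must equal this affine function on $(0,2-\sqrt2)$. So an ansatz with a flat start cannot succeed.
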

\begin{proof}
We employ the primal-dual framework defined in \Cref{sec:pre}. The algorithm maintains the primal and dual variables such that the Equal Update property holds (i.e., the total objective values increase at the same rate). By Lemma~\ref{lem:primal-dual}, it suffices to prove that the dual variables are approximately feasible: for every pair of neighbors $u$ and $v$, $\alpha_u + \alpha_v \ge 2-\sqrt{2}$. Note that in our analysis, $g(x)$ is set to be $\tfrac{\sqrt{2}}{2}x + 1 - \tfrac{\sqrt{2}}{2}$.

For each vertex $u$, the dual variable $\alpha_u$ increases alongside $x_u$ in two situations:
\begin{itemize}
    \item \textbf{Active:} At $u$'s deadline, $x_u$ increases by $\diff x$, and $\alpha_u$ increases by $(1 - g(x_v))\,\diff x$, where $v$ is the neighbor matched by $u$ at this infinitesimal step.
    \item \textbf{Passive:} At the deadline of another vertex $v$, $x_u$ increases by $\diff x$, and $\alpha_u$ increases by $g(x_u)\,\diff x$.
\end{itemize}

Fix any pair of neighboring vertices $u, v$ where $u$ has an earlier deadline than $v$. Consider the moment immediately \textbf{after} $u$'s deadline (i.e., after the \wtf algorithm completes the matching of $u$). At this point, at least one of $x_u = 1$ and $x_v = 1$ holds; otherwise, $x_u$ would continue to increase. We analyze two cases separately:

\paragraph{Case 1: $x_v = 1$.}
    Since $v$ has not yet reached its deadline, its water level and dual value are derived entirely from passive increments. Therefore, following the dual update rule, we can lower-bound the sum $\alpha_u + \alpha_v$ by considering $\alpha_v$ alone:
    \[
        \alpha_u + \alpha_v \ge \alpha_v \ge \int_0^1 g(x)\,\diff x.
    \]
    Using the function $g(x) = \frac{\sqrt{2}}{2}x + (1 - \frac{\sqrt{2}}{2})$, we have $\int_0^1 g(x)\,\diff x = 1 - \frac{\sqrt{2}}{4} \ge 2-\sqrt{2}$. Thus, the bound holds.

    \paragraph{Case 2: $x_v < 1$ (which implies $x_u = 1$).}
    Again, since $v$ has not reached its deadline, its gain is purely passive:
    \[
        \alpha_v = \int_0^{x_v} g(x)\,\diff x.
    \]
    For $\alpha_u$, the dual gain consists of two components. Prior to $u$'s deadline, $u$ accumulates value passively until its water level reaches $\pw_u$ (the passive water level of $u$, defined in \Cref{alg:wtf}):
    \[
        \text{Gain}_{\text{passive}} = \int_0^{\pw_u} g(x)\,\diff x.
    \]
   Subsequently, during the active matching procedure at $u$'s deadline, $x_u$ increases from $\pw_u$ to $1$. The rate of increase of $\alpha_u$ depends on the water level of the neighbor matched to $u$. We rely on a crucial property maintained throughout the process:
    \[
        \textbf{The water level of any neighbor matched to $u$ is at most $x_v$}.
    \]
    This holds because $v$ is a valid available neighbor with water level at most $x_v$, and the \wtf algorithm greedily selects the neighbor with the lowest water level. Consequently, we can lower-bound the active gain as follows:
    \[
        \text{Gain}_{\text{active}} \ge \int_{\pw_u}^1 \bigl(1-g(x_v)\bigr)\,\diff x = (1-\pw_u)\bigl(1-g(x_v)\bigr).
    \]
    Combining the bounds for $\alpha_u$ and $\alpha_v$:
    \begin{equation}
        \label{eqn:wtf_gain}
        \alpha_u + \alpha_v \ge \underbrace{\int_0^{\pw_u} g(x)\,\diff x}_{\text{Gain}_{\text{passive}}} + \underbrace{(1-\pw_u)(1-g(x_v))}_{\text{Gain}_{\text{active}}} + \underbrace{\int_0^{x_v} g(x)\,\diff x}_{\alpha_v}.
    \end{equation}
    Substituting $g(x) = \frac{\sqrt{2}}{2}x + (1 - \frac{\sqrt{2}}{2})$:
    \begin{align*}
        \alpha_u + \alpha_v &\ge \int_0^{\pw_u} g(x)\,\diff x + (1-\pw_u)(1-g(x_v)) + \int_0^{x_v} g(x)\,\diff x \\
        &= \left(\tfrac{\sqrt{2}}{4}\pw_u^2 + (1-\tfrac{\sqrt{2}}{2})\pw_u\right) + \left((1-\pw_u)\tfrac{\sqrt{2}}{2}(1-x_v)\right) + \left(\tfrac{\sqrt{2}}{4}x_v^2 + (1-\tfrac{\sqrt{2}}{2})x_v\right) \\
        &= \tfrac{\sqrt{2}}{4}\left((\pw_u + x_v) - (2-\sqrt{2})\right)^2 + 2-\sqrt{2} \\
        &\ge 2-\sqrt{2}.
    \end{align*}
    By Lemma~\ref{lem:primal-dual}, this implies a competitive ratio of at least $2-\sqrt{2}$.
\end{proof}

\subsection{Hardness Results on the Competitive Ratio}
\label{sec:wtf_ub}

We then provide a matching upper bound showing that the competitive ratio of $2-\sqrt{2}$ is tight for \wtf. In fact, although we only analyze the performance of \wtf on the hard instance, the construction applies to a broader family of algorithms called inter-available-independent algorithms. An algorithm is called inter-available-independent if it ignores edges between two available vertices that have not yet reached their deadlines. Equivalently, we may assume that such edges are revealed to the algorithm only when one of their endpoints reaches its deadline. Clearly, \wtf is inter-available-independent, for two reasons: (1) it is a lazy algorithm that makes matching decisions only at deadlines; and (2) its decisions depend only on the water levels of vertices, which are determined solely by expired edges, that is, edges for which at least one endpoint has already passed its deadline.

\paragraph{Hard Instance.}
There are $2km$ vertices partitioned into $m$ groups, each of size $2k$.
For each $t\in[m]$, the $t$-th group is $U_t\cup V_t$, where
$U_t=\{u_{t,1},\ldots,u_{t,k}\}$ and $V_t=\{v_{t,1},\ldots,v_{t,k}\}$.
Let $h:[0,1]\to[0,1]$ be a decreasing function (to be specified later) with $h(0)=1$ and $h(1)=0$.\footnote{When $h(x)\equiv 1$, the instance reduces to the $\frac{1}{1+\ln 2}\approx 0.5906$ hard instance of~\cite{stacs/EpsteinLSW13} for the edge-arrival model.}
The edge set has two types (see Figure~\ref{figure:subgraph_induced}):
\begin{enumerate}
  \item \emph{Upper-triangular edges} between $U_t$ and $V_t$: for all $t\in[m]$, $i\in[k]$, and $j\ge i$, include $(u_{t,i},v_{t,j})\in E$;
  \item \emph{$h$-induced edges} between $U_t$ and $U_{t+1}$: for all $t\in[m-1]$ and $i\in[k]$, for every $j\le \lfloor k\,h(\frac{i-1}{k})\rfloor$, include $(u_{t,i},u_{t+1,j})\in E$.
\end{enumerate}

\begin{figure}[H]
	\centering
	\resizebox*{65mm}{!}
	{
	\begin{tikzpicture}
	[font=\bf,line width=0.3mm,scale=.8,auto=left,every node/.style={circle,draw}]
	\def \n {4}
	\foreach \i in {1,...,\n}
	{
		\node at (-4,1.4*\n-1.4*\i) (w\i) {$u_{2,\i}$};
	}
	\foreach \i in {1,...,\n}
	{
		\node at (1,1.4*\n-1.4*\i) (u\i) {$u_{1,\i}$};
	}
	\foreach \i in {1,...,\n}
	{
		\node at (6,1.4*\n-1.4*\i) (v\i) {$v_{1,\i}$};
		\foreach \j in {1,...,\i}
		{
			\draw (u\j) to (v\i);
		}
	}
	\draw (u1) to (w1);
	\draw (u1) to (w2);
	\draw (u1) to (w3);
	\draw (u1) to (w4);
	\draw (u2) to (w1);
	\draw (u2) to (w2);
	\draw (u2) to (w3);
	\draw (u3) to (w1);
	\draw (u3) to (w2);
	\draw (u3) to (w3);
	\draw (u4) to (w1);
	\end{tikzpicture}
	}
	\caption{Subgraph induced by $U_t\cup V_t \cup U_{t+1}$: illustrating example with $t=1$ and $k=4$.}
    \Description{It is an example with $t=1$ and $k=4$.}
	\label{figure:subgraph_induced}
\end{figure}

Finally, the deadlines of the $u$-vertices are reached first, following the lexicographic order on $(t,i)$.
After all $u$-vertices have reached their deadlines (i.e., after $u_{m,k}$), the deadlines of the $v$-vertices occur.
The relative order among the $v$-vertices does not matter, as long as each $v_{t,i}$'s deadline follows that of $u_{t,i}$. For concreteness, let all vertices arrive, in an arbitrary fixed order, before the first $u$-vertex reaches its deadline.  Thus every edge is revealed before any matching decision of \wtf, all adjacent time windows overlap, and the deadlines are exactly those specified above. Inter-available-independent algorithms ignore the additionally revealed inter-available structure by definition.

It is easy to verify that this hard instance is bipartite, with
$(U_1\cup U_3\cup \dots)\cup (V_2\cup V_4\cup \dots)$ and
$(U_2\cup U_4\cup \dots)\cup (V_1\cup V_3\cup \dots)$
forming the two vertex partitions.
This graph admits a perfect matching in which $u_{t,i}$ is matched to $v_{t,i}$ for all $t\in[m]$ and $i\in[k]$, and hence $\opt = km$.

We next construct the function $h$ and establish the following technical lemma.
Let $c = 2-\sqrt{2}$, and define a function $f:[0,c]\to[0,1]$ by
\begin{align*}
f(x) \;\eqdef\; \tfrac{1}{2}\bigl(\ln(1-x)+\ln(1-c+x)\bigr)
+ \tfrac{1}{\sqrt{2}(x-1)}
+ \tfrac{2+\sqrt{2}-\ln(1-c)}{2}.
\end{align*}
It is straightforward to verify that $f$ is strictly decreasing.
Let $\tau(x) \eqdef f^{-1}(x)$ and define $h(x) \eqdef f(c - f^{-1}(x))$.
Since $f(0)=1$ and $f(c)=0$, the functions $h:[0,1]\to[0,1]$ and $\tau:[0,1]\to[0,c]$ are both well defined.
Moreover, $h$ is decreasing with $h(0)=1$ and $h(1)=0$, as required in the construction of the hard instance.
Although these functions may appear ad hoc at first glance, their design is not arbitrary:
in \Cref{sec:wtf_gh_dual}, we reveal a duality connection between $h$ and the gain-sharing function $g$ used to define the dual variables in \wtf.

%We first establishes three equations satisfied by these functions.

\begin{lemma}\label{lemma:property_of_h}
	For all $x\in [0,1]$ we have
	\begin{equation*}
	\int_0^x \frac{1-\tau(y)}{1-y+h(y)} dy = c-\tau(x),
	\quad \int_0^1 \tau(y)dy = 1-c, \quad \text{and} \quad \int_0^1 \frac{1}{1-y+h(y)} dy < 1.
	\end{equation*}
\end{lemma}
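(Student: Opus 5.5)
The plan is to reduce all three claims to a single first-order functional-differential identity satisfied by $f$, and then verify that identity by direct differentiation. Throughout, recall that $f$ is strictly decreasing on $[0,c]$ with $f(0)=1$ and $f(c)=0$. Hence $\tau=f^{-1}$ is decreasing with $\tau(0)=c$ and $\tau(1)=0$, and $h(y)=f(c-\tau(y))$.

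\textbf{Key identity.} I will first show that for all $s\in[0,c]$,
\begin{equation*}
-(1-s)\,f'(s) \;=\; 1-f(s)+f(c-s).
\end{equation*}
This is a direct computation. When forming $f(c-s)-f(s)$, the logarithmic part $\tfrac12(\ln(1-x)+\ln(1-c+x))$ is invariant under $x\mapsto c-x$, so it cancels, and so do the constants. The right-hand side therefore equals $1+\tfrac{1}{\sqrt2(1-s)}-\tfrac{1}{\sqrt2(1-c+s)}$. On the left, differentiating $f$ gives
\[
-(1-s)f'(s)=\tfrac12-\tfrac{1-s}{2(1-c+s)}+\tfrac{1}{\sqrt2(1-s)}.
\]
Matching the two sides and multiplying by $1-c+s$ reduces the claim to $s-\tfrac c2 = 1-c+s-\tfrac1{\sqrt2}$, that is, $c/2=1-1/\sqrt2$. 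This holds exactly because $c=2-\sqrt2$. This is the only place where the specific constants enter, and I expect it to be the main (though routine) obstacle: one must notice the symmetry that cancels the logs.

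\textbf{First claim.} Both sides vanish at $x=0$, since $\tau(0)=c$. It therefore suffices to compare derivatives in $x$, i.e.\ to show $\frac{1-\tau(x)}{1-x+h(x)}=-\tau'(x)$. Writing $s=\tau(x)$, we have $x=f(s)$, $h(x)=f(c-s)$ and $\tau'(x)=1/f'(s)$. The required equality is then exactly the key identity rearranged. For $x$ in the interior this is immediate; at the endpoints we use continuity, since $f'<0$ on $[0,c]$ makes $\tau$ differentiable.

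\textbf{Second and third claims.} For the second claim, swapping the axes of the decreasing graph gives $\int_0^1\tau(y)\,dy=\int_0^c f(s)\,ds$. Rather than integrating the logarithms directly, I will integrate the key identity over $[0,c]$. The terms $\int_0^c f(c-s)\,ds$ and $\int_0^c f(s)\,ds$ are equal, so the right-hand side integrates to $c$. Integrating the left-hand side by parts gives
\[
-\bigl[(1-s)f(s)\bigr]_0^c-\int_0^c f(s)\,ds=1-\int_0^c f(s)\,ds,
\]
using $f(c)=0$ and $f(0)=1$. Hence $\int_0^c f=1-c$.

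For the third claim, substitute $y=f(s)$, so that $dy=f'(s)\,ds$ and $s$ runs from $c$ down to $0$. This gives
\[
\int_0^1\frac{dy}{1-y+h(y)}=\int_0^c\frac{-f'(s)}{1-f(s)+f(c-s)}\,ds .
\]
By the key identity the integrand equals $\frac{1}{1-s}$, so the integral is $-\ln(1-c)=\ln(1+\sqrt2)\approx0.881<1$.
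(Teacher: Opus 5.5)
Your proof is correct, and for the first two identities it is the paper's argument. Both reduce the first identity, via $\phi=\tau(x)$, to the pointwise relation $1-f(\phi)+f(c-\phi)+(1-\phi)f'(\phi)=0$. Both obtain the second by integrating that relation over $[0,c]$ and integrating by parts.

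You go further than the paper by actually verifying this relation from the explicit formula for $f$; the paper only asserts that $f$ is defined so that it holds. Your computation is right: the logarithmic part is invariant under $s\mapsto c-s$, and everything reduces to $c/2=1-1/\sqrt{2}$.

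The real difference is in the third claim. The paper argues indirectly. It notes that $1-\tau(y)$ and $1/(1-y+h(y))$ are both increasing. Chebyshev's integral inequality, combined with the first two identities, then gives $c=\int_0^1\frac{1-\tau(y)}{1-y+h(y)}\,dy>c\int_0^1\frac{dy}{1-y+h(y)}$. You instead substitute $y=f(s)$ and use the pointwise relation a second time. This collapses the integrand to $1/(1-s)$ and gives the exact value $\ln(1+\sqrt{2})\approx 0.881$.

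Your route is shorter and quantitative: it pins down the limiting row sum $\sum_j a_j$ of the matrix $\mathsf{M}$ in the hardness proof, i.e.\ the contraction rate. The paper's route needs only monotonicity plus the two identities already established.

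One assertion of yours is false, though harmless once repaired: $f'$ is not negative on all of $[0,c]$. Your own identity at $s=0$ gives $-f'(0)=1-f(0)+f(c)=0$. So $\tau$ is not differentiable at $x=1$. Moreover, since $h(1)=f(c)=0$, the denominator $1-y+h(y)$ vanishes at $y=1$. The integrands in the first and third claims are therefore unbounded there, and those integrals are improper.

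The repair is as follows. The right-hand side $1+\frac{1}{\sqrt{2}(1-s)}-\frac{1}{\sqrt{2}(1-c+s)}$ of your identity is strictly increasing in $s$ and vanishes at $s=0$, so you do have $f'<0$ on $(0,c]$. Prove the first identity on $[0,x]$ for $x<1$ exactly as you did. Then let $x\to 1^-$, using continuity of $\tau$ on the right-hand side and monotone convergence of the positive integrand on the left. The same truncation justifies your change of variables in the third claim. The paper passes over this point as well.
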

\begin{proof}
	First, we show the first equation, i.e., for all $x\in[0,1]$ we have $\int_0^x \frac{1-\tau(y)}{1-y+h(y)} dy = c-\tau(x)$.
	Note that $\tau(0) = c$ and, thus, both sides equal $0$ when $x = 0$.
	It suffices to check that for all $x\in[0,1]$, $\frac{1-\tau(x)}{1-x+h(x)} = -\tau'(x)$.
	Let $\phi = \tau(x)\in[0,c]$. Then $f(\phi) = x$ and $h(x) = f(c-f^{-1}(x)) = f(c-\phi)$.
	Then, we only need to check that
	\begin{equation*}
	\frac{1-\phi}{1-f(\phi)+f(c-\phi)} = -\tau'(x) = -\frac{1}{f'(\phi)},
	\end{equation*}
	which is true as $f$ is defined such that for all $\phi \in[0,c]$,
	\begin{equation*}
	1-f(\phi)+f(c-\phi)+(1-\phi)f'(\phi) = 0.
	\end{equation*}

	Integrating from $0$ to $c$, the contributions of the second and third terms cancel.
	We have
	\begin{equation*}
		0 = c+\int_0^c (1-x)f'(x) dx = c - 1 + \int_0^c f(x) dx,
	\end{equation*}
	which implies the second equation because $\int_0^1 \tau(y)dy = \int_0^c f(x) dx = 1-c$, where the first equality follows because $\tau = f^{-1}$, $f$ is strictly decreasing, $f(0) = 1$, and $f(c) = 0$.

	Now we prove the last equation, i.e., $\int_0^1 \frac{1}{1-y+h(y)} dy < 1$.

	Observe that both $1-\tau(y)$ and $\frac{1}{1-y+h(y)}$ are increasing in terms of $y$.
	Hence we have
	\begin{align*}
	c = c-\tau(1)
    &=  \int_0^1 \frac{1-\tau(y)}{1-y+h(y)} dy \\
	&>  \int_0^1 (1-\tau(y)) dy \cdot \int_0^1 \frac{1}{1-y+h(y)} dy \\
	&= c\cdot \int_0^1 \frac{1}{1-y+h(y)} dy.
	\end{align*}

	Dividing both sides by $c$ proves the last equation.
\end{proof}

Now we analyze the performance of \wtf on this instance.
We first prove that by running \wtf on the hard instance, the passive water levels of almost all vertices are strictly smaller than $1$.

\begin{lemma} \label{lemma:passive_<_1}
	For large enough $k$, \wtf produces a fractional matching with $p_{u_{t,i}} < 1$ for all $t \in [m],i\in[k]$ and $p_{v_{t,i}} < 1$ for all $t \in [m-1],i\in[k]$.
\end{lemma}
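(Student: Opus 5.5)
The plan is to track the water levels group by group and show that, within each group, all the vertices that can still receive water from the current $u$-vertex share a common level. Once that holds, the per-step increment is controlled by the number of receiving neighbors, and the third inequality of \Cref{lemma:property_of_h}, $\int_0^1 \frac{1}{1-y+h(y)}\,dy<1$, is exactly the statement that this common level never reaches $1$.

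\textbf{Which vertices move during group $t$.} Fix a group $t$ and consider the deadlines of $u_{t,1},\dots,u_{t,k}$ in order.
\begin{itemize}
\item Before the first of these deadlines, no vertex of $V_t\cup U_{t+1}$ has received any water. The only earlier deadlines are those of $u$-vertices in groups $t'<t$. Such a vertex is adjacent only to $V_{t'}$, $U_{t'-1}$ and $U_{t'+1}$, so nothing in $V_t$ or $U_{t+1}$ is touched unless $t'+1=t$, which only reaches $U_t$.
\item At the deadline of $u_{t,i}$, its available neighbors are $v_{t,j}$ for $j\ge i$ and $u_{t+1,j}$ for $j\le \lfloor k\,h(\frac{i-1}{k})\rfloor$. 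The vertices of $U_{t-1}$ have already departed.
\item Because $h$ is decreasing, these neighbor sets are nested and shrinking as $i$ grows. Hence every neighbor of $u_{t,i}$ was also a neighbor of each earlier $u_{t,i'}$ with $i'<i$.
\end{itemize}

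\textbf{Invariant.} Immediately before the deadline of $u_{t,i}$, every currently available neighbor of $u_{t,i}$ has the same level $\ell_{t,i}$. I prove this by induction on $i$.
\begin{itemize}
\item For $i=1$ all these levels are $0$, by the first bullet above.
\item Suppose the neighbors share a common level. Water-filling always raises the minimum, and the minimizers are symmetric, so it raises all $(1-\frac{i-1}{k}+h(\frac{i-1}{k}))k$ neighbors together (up to floors). The common level therefore increases by $\frac{1-p_{u_{t,i}}}{k(1-y+h(y))}$ with $y=\frac{i-1}{k}$, unless that level hits $1$ first.
\item The next neighbor set is a subset of the current one, so its members still share a common level, which restores the invariant.
\end{itemize}

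\textbf{Summing the increments.} Since $p_{u_{t,i}}\ge 0$, summing these increments gives
\[
\ell_{t,k+1}\le \sum_{i=1}^k \frac{1}{k\,(1-\frac{i-1}{k}+h(\frac{i-1}{k}))}+O(1/k)\;\longrightarrow\;\int_0^1\frac{1}{1-y+h(y)}\,dy<1 .
\]
For $k$ large enough, the common level therefore stays strictly below $1$ throughout group $t$. This has two consequences.
\begin{itemize}
\item For $t\le m-1$, $p_{v_{t,j}}$ equals the level $v_{t,j}$ holds when the $u$-deadlines end, and no later $u$-vertex touches it. So $p_{v_{t,j}}\le \ell_{t,k+1}<1$.
\item $p_{u_{t+1,j}}$ is the level $u_{t+1,j}$ reached during group $t$, because only $U_t$ feeds it before its own deadline. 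Hence $p_{u_{t+1,j}}\le\ell_{t,k+1}<1$.
\item Together with $p_{u_{1,j}}=0$, this gives the claim for every $u$-vertex, including group $m$. The vertices $v_{m,j}$ are excluded from the claim, consistent with $V_m$ having no $U_{m+1}$ partners.
\end{itemize}

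\textbf{Expected obstacle.} The main difficulty is the discretization. The floors $\lfloor k\,h(\cdot)\rfloor$ may make the neighbor count slightly smaller than $(1-y+h(y))k$, and near $y\to 1$ the count $(1-y+h(y))k$ becomes small, so the per-step error is not uniformly $O(1/k^2)$. I would handle this in two steps.
\begin{itemize}
\item Lower-bound the count by $k(1-y+h(y))-1$ and compare the resulting sum with the integral as a Riemann sum.
\item Use that $\frac{1}{1-y+h(y)}$ is increasing and bounded, since $1-y+h(y)\ge h(1)=0$ only at $y=1$, where $1-y$ still contributes one vertex $v_{t,k}$. The total error is then $O(\log k/k)$, which is below the positive gap $1-\int_0^1\frac{dy}{1-y+h(y)}$ for large $k$.
\end{itemize}
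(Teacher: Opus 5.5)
Your proposal is correct and follows the paper's proof. The nested neighbor sets (since $h$ is decreasing) keep all available neighbors of $u_{t,i}$ at a common level, each deadline raises that level by at most $1/(k-i+1+\lfloor k\,h(\frac{i-1}{k})\rfloor)$, and the sum is compared with $\int_0^1\frac{dy}{1-y+h(y)}<1$ from Lemma~\ref{lemma:property_of_h}; your discretization step is more careful than the paper's ``$\approx$''. One small correction there: $\frac{1}{1-y+h(y)}$ is increasing and integrable but not bounded (it blows up at $y=1$), so instead use $k-i+1+\lfloor k\,h(\frac{i-1}{k})\rfloor\ge k\bigl(1-\frac{i}{k}+h(\frac{i}{k})\bigr)$ to bound the increments for $i<k$ by $\int_{1/k}^{1}\frac{dy}{1-y+h(y)}$, while the last term tends to $0$ because integrability forces $k\,h(1-\frac{1}{k})\to\infty$.
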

\begin{proof}
Observe that at the deadline of each $u_{t,i}$, where $t\in[m-1]$, it has $|N(u_{t,i}) \cap V_t| + |N(u_{t,i}) \cap U_{t+1}| = k-i+1+\lfloor k\cdot h(\frac{i-1}{k}) \rfloor$ neighbors whose deadlines have not yet been reached. Moreover, as $h$ is decreasing, it is easy to see (by induction) that at the deadline of $u_{t,i}$, all available neighbors of $u_{t,i}$ have the same water level.
Hence, \wtf increases the water level of the available neighbors of $u_{t,i}$ at the same rate until $\min_{v\in N(u_{t,i})} x_v =1$ or $x_{u_{t,i}}=1$.
Since $u_{t+1,1}$ is a neighbor of every vertex in $U_t$, we have $p_{u_{t+1,1}} = \max_{j\in[k]} \{ p_{u_{t+1, j}},p_{v_{t, j}} \}$.
Therefore, it suffices to show that $p_{u_{t+1,1}}$ is smaller than $1$.
Note that each vertex $u_{t,i}$ has at most $1$ unit of unmatched portion that is distributed among $k-i+1+\lfloor k\cdot h(\frac{i-1}{k})\rfloor$ available neighbors and, thus, it increases the water level of $u_{t+1,1}$ by at most $\frac{1}{k-i+1+\lfloor k\cdot h(\frac{i-1}{k})\rfloor}$.
Hence, when $k \to \infty$, we have
\begin{align*}
p_{u_{t+1,1}} \le \sum_{i=1}^{k} \frac{1}{k-i+1+\lfloor k\cdot h(\frac{i-1}{k}) \rfloor} \approx \int_0^1 \frac{1}{1-y+h(y)} dy < 1,
\end{align*}
where the last inequality follows from Lemma~\ref{lemma:property_of_h}. This finishes the proof.
\end{proof}

Lemma~\ref{lemma:passive_<_1} implies that, for large enough $k$, we can guarantee that when running \wtf on the hard instance, after the deadline of every $u_{t,i}$, where $t\in[m-1]$, we must have $x_{u_{t,i}} = 1$, as none of its neighbors with a later deadline has water level $1$.

\begin{corollary}\label{corollary:water-level-reaching-1}
	For all $t\in[m-1]$, we have $x_{u_{t,i}}=1$ after $u_{t,i}$'s deadline.
\end{corollary}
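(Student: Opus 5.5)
The plan is to read Corollary~\ref{corollary:water-level-reaching-1} off the stopping rule of \wtf together with Lemma~\ref{lemma:passive_<_1}. Fix $t\in[m-1]$ and $i\in[k]$. At the deadline of $u_{t,i}$, the while-loop of \Cref{alg:wtf} runs until $x_{u_{t,i}}=1$ or $\min_{w\in N(u_{t,i})}x_w=1$. So it suffices to show that the second condition cannot occur while $x_{u_{t,i}}<1$. The neighbor set is nonempty: it always contains $v_{t,k}$ and $u_{t+1,1}$, because $j=k\ge i$ and $\lfloor k\,h(\frac{i-1}{k})\rfloor\ge 1$ for large $k$ (we have $h(\frac{i-1}{k})\ge h(\frac{k-1}{k})>0$, and otherwise $V_t$ alone suffices).

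The key step bounds the water levels of the available neighbors throughout the process at $u_{t,i}$'s deadline. The available neighbors are $v_{t,j}$ for $j\ge i$ and $u_{t+1,j}$ for $j\le\lfloor k\,h(\frac{i-1}{k})\rfloor$; all other neighbors of $u_{t,i}$ have already departed. None of these neighbors has reached its own deadline, so its water level is purely passive. Passive water levels are nondecreasing in time and are bounded by the final passive level $p_w$. Hence, at every moment during $u_{t,i}$'s matching, each such neighbor satisfies $x_w\le p_w$.

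For the $u_{t+1,j}$, we have $t+1\le m$, and Lemma~\ref{lemma:passive_<_1} gives $p_{u_{t+1,j}}<1$. For the $v_{t,j}$ with $t\le m-1$, the same lemma gives $p_{v_{t,j}}<1$. Since water levels only increase, the running minimum over $N(u_{t,i})$ stays at most $\max_w p_w<1$. Therefore the loop can only terminate via $x_{u_{t,i}}=1$, which is the claim.

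The only subtle point is that the passive level $p_w$ is defined at $w$'s deadline. I need $x_w\le p_w$ at the earlier time of $u_{t,i}$'s deadline, and this holds by monotonicity, because before its own deadline $w$ gains water only passively. I do not expect any step here to be a real obstacle, since the substance is already contained in Lemma~\ref{lemma:passive_<_1}.
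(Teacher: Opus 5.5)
Your proposal is correct and matches the paper's argument. Lemma~\ref{lemma:passive_<_1} guarantees that every neighbor of $u_{t,i}$ still available at its deadline (the $v_{t,j}$ with $j\ge i$ and the relevant $u_{t+1,j}$) stays below water level $1$, so the \wtf loop can only stop via $x_{u_{t,i}}=1$; your monotonicity step $x_w\le p_w$ and your nonemptiness check via $v_{t,k}$ just make explicit what the paper leaves implicit. One small nit: $h(\frac{k-1}{k})>0$ alone does not give $\lfloor k\,h(\frac{i-1}{k})\rfloor\ge 1$, but that aside is unnecessary because $v_{t,k}$ is always an available neighbor.
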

%\zhiyi{Phrase as a corollary for reference in the next proof?}

Now we are ready to prove the main theorem of this section.

\begin{lemma}[Hardness Part of \Cref{thm:wtf}] \label{th:wthard}
    The competitive ratio of \wtf is at most $2-\sqrt{2}$.
\end{lemma}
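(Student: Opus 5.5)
We will show that on the hard instance, as $k\to\infty$ and then $m\to\infty$, the total matching produced by \wtf is at most $(c+o(1))km$, where $c=2-\sqrt{2}$. Since $\opt=km$, this gives the claimed upper bound. The plan is to track the water levels group by group. By the symmetry noted in the proof of Lemma~\ref{lemma:passive_<_1}, the available neighbors of $u_{t,i}$ share a common water level at its deadline. By Corollary~\ref{corollary:water-level-reaching-1}, each $u_{t,i}$ with $t\le m-1$ spends its entire unmatched portion $1-p_{u_{t,i}}$, divided equally among its $k-i+1+\lfloor k h(\frac{i-1}{k})\rfloor$ available neighbors.

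The central step is an inductive invariant in the continuum limit $y=i/k$. We aim to show that the passive water level of $u_{t,i}$ converges to $\tau(y)$, and that the common level of the remaining vertices of $V_t$ after $u_{t,i}$'s deadline converges to $c-\tau(y)$. The increments have the following form. At step $y$, vertex $u_{t,y}$ distributes $1-\tau(y)$ over $1-y+h(y)$ neighbors, so each neighbor rises at rate $\frac{1-\tau(y)}{1-y+h(y)}$.

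For $V_t$ this rate integrates to $c-\tau(y)$, by the first identity of Lemma~\ref{lemma:property_of_h}. For $U_{t+1}$, vertex $u_{t+1,j}$ is adjacent to exactly those $u_{t,i}$ with $j/k\le h(i/k)$, that is, with $y\le h^{-1}(j/k)$. So its passive level is the same integral evaluated up to $h^{-1}(j/k)$, which equals $c-\tau(h^{-1}(j/k))$. We must check this equals $\tau(j/k)$. From $h(x)=f(c-f^{-1}(x))$ we get $h^{-1}(z)=f(c-f^{-1}(z))$, hence $\tau(h^{-1}(z))=c-\tau(z)$, which confirms the claim. The hypothesis $\tau(y)$ for group $t$ then propagates to group $t+1$.

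The base case is group $1$, where every passive level is $0$. We handle it as a boundary effect contributing only $O(k)$. Alternatively, we can observe that starting from lower passive levels only changes a constant number of groups before the pattern stabilizes, so the effect is negligible as $m\to\infty$. This is the part we expect to be the main obstacle: the base case and the $o(1)$ discretization errors from floors must be controlled uniformly across $m$ groups, so that they do not accumulate. We would first try to prove monotonic convergence of the recursion, or simply to bound the per-group error by $o(k)$ independently of $t$.

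Finally, we count the matching by taking the $u$-vertices in deadline order, with each edge counted once. After $u_{t,i}$'s deadline its level is $1$, so the matched mass newly attributed at its deadline is $1-p_{u_{t,i}}$. Summing gives $k\int_0^1(1-\tau(y))\,dy = k\bigl(1-(1-c)\bigr)=kc$ per group, by the second identity of Lemma~\ref{lemma:property_of_h}. After all $u$-deadlines, each $v_{t,i}$ has no remaining available neighbors, since all its neighbors are $u$-vertices, which have departed. The $v$-deadlines therefore add nothing. The mass already matched into $V_t$ and $U_{t+1}$ was counted when the $u$-vertices spent it. The total is thus $ckm+O(k)+o(km)$, and dividing by $\opt=km$ gives the ratio $2-\sqrt{2}+o(1)$.
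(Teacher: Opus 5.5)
Your setup, the check that $\tau$ reproduces itself from one group to the next (indeed $h$ is an involution, so $\tau(h^{-1}(z))=c-\tau(z)$), and the accounting of the matching of \wtf as $\sum_{t,i}(1-p_{u_{t,i}})$ all agree with the paper. The gap is the step you flag yourself. You only show that $\tau$ is a \emph{stationary} profile of the group-to-group map. The actual run starts from $\mathbf{p}_1=\mathbf{0}$, and nothing in the proposal shows that the passive-level vectors $\mathbf{p}_t$ approach $\tau$, or that the total transient is $O(k)$.

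Your suggested remedies do not work as stated. The recursion is $\mathbf{p}_{t+1}=\mathsf{M}(\mathbf{1}-\mathbf{p}_t)$, where $\mathsf{M}$ is entrywise nonnegative: $\mathsf{M}_{i,j}=a_j=1/(k-j+1+\lfloor k h(\frac{j-1}{k})\rfloor)$ if $u_{t,j}$ and $u_{t+1,i}$ are adjacent, and $\mathsf{M}_{i,j}=0$ otherwise. So the deviation from any fixed point $\mathbf{p}^*$ satisfies $\mathbf{p}_{t+1}-\mathbf{p}^*=-\mathsf{M}(\mathbf{p}_t-\mathbf{p}^*)$ and flips sign every group: $\mathbf{p}_1=\mathbf{0}\le\mathbf{p}^*$, but $\mathbf{p}_2=\mathsf{M}\mathbf{1}\ge\mathsf{M}(\mathbf{1}-\mathbf{p}^*)=\mathbf{p}^*$. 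Hence convergence is not monotone, and the pattern does not stabilize after a constant number of groups. At best the deviation decays geometrically, and that needs an argument.

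The missing ingredient is the third identity of Lemma~\ref{lemma:property_of_h}, $\int_0^1\frac{dy}{1-y+h(y)}<1$, which your proposal never uses. For large $k$ it bounds every row sum of $\mathsf{M}$ by $\rho=\sum_j a_j<1$, so $\mathbf{p}\mapsto\mathsf{M}(\mathbf{1}-\mathbf{p})$ is an $\ell_\infty$-contraction. It therefore has a unique fixed point $\mathbf{p}^*$, and $\|\mathbf{p}_t-\mathbf{p}^*\|_\infty\le\rho^{t-1}$, so the transient contributes at most $k/(1-\rho)=o(km)$.

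The paper lets $m\to\infty$ first with $k$ fixed, so the discrete recursion is exact and floor errors never accumulate across groups. Only then does it let $k\to\infty$, identifying $\frac{1}{k}\|\mathbf{p}^*\|_1$ with $\int_0^1\tau=1-c$. With your order ($k\to\infty$ first, $m$ fixed), the accumulation worry also disappears, but you still need the same identity. It makes the continuum map $p\mapsto\int_0^{h^{-1}(\cdot)}\frac{1-p(y)}{1-y+h(y)}\,dy$ a sup-norm contraction. That is what forces the iterates from $p_1\equiv 0$ to converge to $\tau$, and hence the per-group average of $\int_0^1(1-p_t)$ to converge to $c$.
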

\begin{proof}
	Let $\mathbf{\pw}_t = (\pw_{u_{t,1}},\pw_{u_{t,2}},\ldots,\pw_{u_{t,k}})^\text{T}$ denote the passive water-level vector of $U_t$.
	Since the increment of matching at $u_{t,i}$'s deadline is at most $1-p_{u_{t,i}}$, the solution given by \wtf is
	\begin{equation*}
	\sum_{(u,v)\in E}x_{uv} \leq \sum_{t,i}(1-\pw_{u_{t,i}})
	= \sum_{t} (k-\|\mathbf{p}_t\|_1).
	\end{equation*}

	Indeed, by Corollary~\ref{corollary:water-level-reaching-1}, for all $t\in[m-1]$, the increment of matching at $u_{t,i}$'s deadline is exactly $1-p_{u_{t,i}}$.
	Recall that in the hard instance, $u_{t+1,i}$ is a neighbor of $u_{t,j}$ if and only if $\frac{i}{k} \leq h(\frac{j-1}{k})$.
	Hence we have
	\begin{align*}
	\pw_{u_{t+1,i}} = \sum_{j=1}^{\lfloor k\cdot h^{-1}(\frac{i}{k})+1 \rfloor} \frac{1-p_{u_{t,j}}}{k-j+1+\lfloor k\cdot h(\frac{j-1}{k}) \rfloor} 
	= \sum_{j=1}^{\lfloor k\cdot h^{-1}(\frac{i}{k})+1 \rfloor}(1-p_{u_{t,j}})\cdot a_j,
	\end{align*}
	where $a_j = \frac{1}{k-j+1+\lfloor k\cdot h(\frac{j-1}{k}) \rfloor}$ is independent of $t$.
	In other words, there exists a $k\times k$ matrix $\mathsf{M}$ such that for all $t\in[m-1]$, $\mathbf{\pw}_{t+1} = \mathsf{M}(\mathbf{1}-\mathbf{\pw}_t)$.
	More precisely, we have $\mathsf{M}_{i,j} = a_j$ if $j\leq \lfloor k\cdot h^{-1}(\frac{i}{k})+1 \rfloor$, $\mathsf{M}_{i,j} = 0$ otherwise.
	Hence, for any $i\in[k]$, by Lemma~\ref{lemma:property_of_h}, we have
	\begin{equation*}
	\sum_{j\in [k]}\mathsf{M}_{i,j} \leq \sum_{j\in [k]}a_j < 1.
	\end{equation*}

	That is, $\mathsf{M}$ is a contraction matrix and the above mapping from $\mathbf{p}_{t}$ to $\mathbf{p}_{t+1}$ has a unique stationary vector $\mathbf{p^*}$, i.e. $\mathbf{p^*}= \mathsf{M}(\mathbf{1}-\mathbf{p^*})$. Moreover, $\lim_{t \to \infty}\mathbf{\pw}_t = \mathbf{p^*}$\footnote{Observe that $(\mathbf{\pw}_{t+1}-\mathbf{p^*})= \mathsf{M}(\mathbf{p^*}-\mathbf{\pw}_t)$ and $\mathsf{M}$ is a contraction matrix.}. Thus, for any fixed $k$, when $m\to \infty$, the ratio between the matching size of \wtf and the optimum is
	\[
	\lim_{m\to \infty} \frac{\sum_{t} (k-\|\mathbf{p}_t\|_1)}{mk} = 1-\frac{1}{k}\cdot\|\mathbf{p^*}\|_1.
	\]

	Finally, we consider when $k\to \infty$ and calculate the stationary vector.
	In this case, $\mathbf{p^*}$ becomes a function $p:[0,1]\to[0,1]$ and the linear equation $\mathbf{p^*}= \mathsf{M}(\mathbf{1}-\mathbf{p^*})$ becomes the following
	\[
	\int_0^{h^{-1}(x)} \frac{1-p(y)}{1-y+h(y)} dy = p(x), \quad \forall x \in [0,1].
	\]

	We verify that $p=\tau$ is a solution to this system of equations by Lemma~\ref{lemma:property_of_h}. For all $x$, we have
	\begin{align*}
		\int_0^{h^{-1}(x)} \frac{1-\tau(y)}{1-y+h(y)} dy &=  c-\tau(h^{-1}(x)) \\
		& = \tau\left( f\left(c-\tau(h^{-1}(x))\right) \right) \\
		& = \tau\left(h\left(h^{-1}(x)\right)\right) = \tau(x).
	\end{align*}

	Thus, the ratio between \wtf and \opt is $1-\int_0^1 \tau(y) dy = c = 2 - \sqrt{2}$.
	%,
	%which proves the claim.
\end{proof}

Next, we show that the hardness results extend to all inter-available-independent algorithms, as well as to the preemptive edge-arrival models. In the \textsf{Online Edge Arrival Matching} problem~\cite{esa/BuchbinderST17}, edges arrive one by one, and upon the arrival of each edge, the algorithm must irrevocably decide whether to accept it into the matching. In the preemptive setting, namely \textsf{Online Preemptive Matching}~\cite{stacs/EpsteinLSW13, approx/McGregor05}, the algorithm is additionally allowed to remove previously accepted edges before accepting a newly arrived edge.

The intuition behind the two corollaries is essentially the same. At the deadline of a vertex, all of its available neighbors can be made indistinguishable to the algorithm. Consequently, the adversary can defer assigning their identities until after the algorithm has already made its matching decision. Therefore, no algorithm can exploit any structural difference among these neighbors, and a balanced decision, such as the one made by \wtf, is the best that any algorithm can do.

\begin{corollary}
    \label{cor:hard_inter_independent}
    No inter-available-independent algorithm can achieve a competitive ratio better than $2-\sqrt{2}$ for the fractional fully online matching problem, even on bipartite graphs.
\end{corollary}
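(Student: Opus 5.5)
We reuse the hard instance of \Cref{sec:wtf_ub}, with the same function $h$, but make it adaptive so that an arbitrary inter-available-independent algorithm $\mathcal{A}$ cannot distinguish among the available neighbors of the vertex whose deadline is being processed. The identities of the $u$- and $v$-vertices are fixed only once $\mathcal{A}$ has committed. Since $\mathcal{A}$ ignores edges between two vertices that are both still before their deadlines, we may assume such an edge is revealed only when one endpoint reaches its deadline. Concretely, at the deadline of the $i$-th vertex of group $t$, its neighbors with later deadlines form two blocks. The first block is the $k-i+1$ ``remaining'' vertices of $V_t$. The second is the first $\lfloor k h(\frac{i-1}{k})\rfloor$ vertices of $U_{t+1}$, ordered by their future deadlines. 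The adversary will decide which physical vertex receives which label after seeing $\mathcal{A}$'s fractional allocation.

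The labeling rule pushes $\mathcal{A}$ toward balance, or punishes it. In the $V_t$ block, after $\mathcal{A}$ distributes mass, the adversary designates the neighbor receiving the \emph{largest} allocation as $v_{t,i}$. That vertex is then dropped from all later upper-triangular edges, exactly as in the classic $1-1/e$ upper-triangle argument. The other $V_t$-neighbors keep their roles. Within the $U_{t+1}$ block, the adversary assigns the vertices with larger accumulated water level to the labels with larger index $j$. These labels have fewer $h$-induced edges remaining, so this keeps the $U_{t+1}$ vertices that will be passed forward as fully matched as possible. The deadline order follows the sorted labels, which is consistent because the $U_{t+1}$ deadlines all come after those of $U_t$.

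The analysis follows the proof of \Cref{th:wthard}. First, we show by induction over the deadlines that the passive water levels produced by $\mathcal{A}$ majorize those produced by \wtf. Equivalently, the vector of ``remaining capacities'' presented to future deadlines is no better than under the balanced allocation. The key step is an exchange or convexity argument: among all splits of the mass of $u_{t,i}$ over indistinguishable neighbors, followed by the adversarial relabeling, the uniform split maximizes the eventual total matching. Intuitively, unevenness either sends mass to the vertex removed as $v_{t,i}$, which is wasted later, or concentrates passive mass on the $U_{t+1}$ vertices that lose their options soonest. Given this, the total matching is at most $\sum_t (k - \|\mathbf{p}_t\|_1)$, where $\mathbf{p}_{t+1} \succeq \mathsf{M}(\mathbf{1}-\mathbf{p}_t)$ in the appropriate monotone sense. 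Because $\mathsf{M}$ is a nonnegative contraction, the same fixed-point argument and \Cref{lemma:property_of_h} bound the ratio by $1-\int_0^1\tau = 2-\sqrt{2}$. Bipartiteness is inherited from the instance.

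The main obstacle is this balancing step. Proving that $\mathcal{A}$'s deviations from the uniform split can only hurt under adversarial relabeling requires a careful potential or majorization invariant across both blocks simultaneously. I would first try to show that the map from the allocation at one deadline to the future \opt-minus-\alg slack is Schur-concave, by reducing it to a two-neighbor averaging step. If that fails, I would fall back on symmetrization. Randomizing the labels uniformly makes $\mathcal{A}$'s expected performance equal to that of its symmetrized version, which is exactly the balanced allocation. This uses linearity of the fractional objective together with a concavity bound, and yields the same bound.
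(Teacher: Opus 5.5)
Your main route, the adaptive relabeling adversary, has a genuine gap, and it sits exactly where this instance gets its strength.

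Your rules only permute labels \emph{inside} the $V_t$ block and inside the $U_{t+1}$ block. They never decide which physical neighbors of $u_{t,i}$ are $V_t$-vertices and which are $U_{t+1}$-vertices. That split is the whole mechanism. All $N_i=k-i+1+\lfloor kh(\frac{i-1}{k})\rfloor$ available neighbors have the same expired history: each is adjacent to exactly $u_{t,1},\dots,u_{t,i}$. The loss of \wtf comes entirely from mass pushed into $U_{t+1}$, which returns as passive water level of the next group. In the stationary balanced run a group matches about $ck$. Only $(3-2\sqrt{2})k\approx 0.17k$ of this goes into $V_t$, while $(1-c)k$ goes into $U_{t+1}$.

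So the recursion $\mathbf p_{t+1}=\mathsf M(\mathbf 1-\mathbf p_t)$ needs the algorithm to send the fraction $\lfloor kh(\frac{i-1}{k})\rfloor/N_i$ of its mass into $U_{t+1}$. Unless block membership is itself chosen adaptively or at random, nothing forces this. If the split is fixed independently of the algorithm, some deterministic algorithm (e.g.\ one whose tie-breaking happens to favour the $V_t$ side) steers its mass into $V_t$ instead.

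The rule you do give for $V_t$ is also reversed. Mass sent to the departing $v_{t,i}$ is not wasted; it is \opt's own edge. Designating the neighbor with the \emph{largest} allocation as $v_{t,i}$ therefore rewards concentration. In the plain upper triangle, an algorithm that puts all its mass on one neighbor would then reach \opt. The classic adversary drops the least-filled neighbor.

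Finally, the majorization step is left open. Even a coordinatewise bound $\mathbf p_{t+1}\ge\mathsf M(\mathbf 1-\mathbf p_t)$ would not close via ``the same fixed-point argument'', because $\mathbf p\mapsto\mathsf M(\mathbf 1-\mathbf p)$ is order-reversing. One-sided bounds do not propagate through it.

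Your fallback is precisely the paper's proof, and it is the route to take. Before the process, draw a uniformly random assignment of roles to physical vertices that mixes the $V_t$ and $U_{t+1}$ labels. Inter-available edges are seen only at deadlines, and deadlines are not announced in advance. Hence the available neighbors of $u_{t,i}$ are exchangeable given any history, and Yao's principle finishes.

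To make ``equal to its symmetrized version'' rigorous you need no concavity bound:
\begin{itemize}
\item Let $\mu_{t,i}$ be the mass that $u_{t,i}$ actually matches. By exchangeability, the vertex with role $u_{t+1,j}$ receives from $u_{t,i}$ expected mass exactly $a_i\,\mathbf E[\mu_{t,i}]$.
\item Hence $\mathbf E[\mathbf p_{t+1}]=\mathsf M\,\mathbf E[\boldsymbol\mu_t]$, and therefore $\mathbf E[\boldsymbol\mu_{t+1}]\le\mathbf 1-\mathsf M\,\mathbf E[\boldsymbol\mu_t]$.
\item Every column of $\mathsf M$ sums to $a_i\lfloor kh(\frac{i-1}{k})\rfloor<1$, and every row sums to less than one.
\item LP duality, with dual vectors $y_{\mathrm{last}}=\mathbf 1$ and $y_t=\mathbf 1-\mathsf M^{\top}y_{t+1}\in[0,1]^k$, then shows that $\sum_t\|\mathbf E[\boldsymbol\mu_t]\|_1$ is maximized when all these inequalities are tight.
\end{itemize}
The all-tight case is exactly the \wtf recursion of Lemma~\ref{th:wthard}, which gives $2-\sqrt{2}$. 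This column-sum argument is also what your main route would have needed in place of the fixed-point step.
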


\begin{corollary}
    \label{cor:hard_other}
    No fractional algorithm can achieve a competitive ratio better than $2-\sqrt{2}$ for online preemptive edge arrival matching, even on bipartite graphs.
\end{corollary}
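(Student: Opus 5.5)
We need to prove Corollary \ref{cor:hard_other}: no fractional algorithm beats $2-\sqrt{2}$ for online preemptive edge-arrival matching on bipartite graphs. The plan is to reduce to the hard instance of \Cref{sec:wtf_ub}, presented edge by edge, together with the indistinguishability argument behind \Cref{cor:hard_inter_independent}.

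\textbf{Encoding the instance.} We encode the fully online hard instance as an edge-arrival sequence. For $t=1,\dots,m$ and $i=1,\dots,k$, in lexicographic order, we release at the step of $u_{t,i}$ all edges from $u_{t,i}$ to its neighbors with later deadlines. These are the $V_t$-neighbors $v_{t,j}$ with $j\ge i$ and the $U_{t+1}$-neighbors $u_{t+1,j}$ with $j\le\lfloor k h(\frac{i-1}{k})\rfloor$. After this step, no further edge incident to $u_{t,i}$ ever arrives, so this step plays the role of $u_{t,i}$'s deadline.

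The adversary keeps all these neighbors anonymous. Since they will later be relabeled, the adversary presents them simply as ``$k-i+1+\lfloor kh(\frac{i-1}{k})\rfloor$ fresh-or-current vertices''. Only after the algorithm commits its fractional values on these edges does the adversary decide which of them are the $V_t$-vertices (of degree continuing in the triangle) and which are the $U_{t+1}$-vertices. To make this possible, we choose the identities adaptively. Among the still-symmetric candidates, the ones receiving the most mass from $u_{t,i}$ are designated as the vertices that will never be touched again: the $v_{t,j}$ with small index that drop out of the triangle, or the $u_{t+1,j}$ of large index with few future edges. This is done consistently with the deferred-decision trick used for the $1-1/e$ upper-triangle bound. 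A standard symmetrization then averages the algorithm over relabelings, which shows that the adversary's worst case is at least as bad as if the algorithm split $u_{t,i}$'s mass evenly among the candidates, i.e.\ the \wtf profile.

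\textbf{Handling preemption and fractional mass.} Preemption, in the fractional sense, means decreasing earlier $x_e$. We observe that it cannot help on this instance. All of an edge's future use is determined only by the water levels of its endpoints, and lowering $x_{u_{t,i}w}$ for an already-processed $u_{t,i}$ only frees capacity at $w$ while permanently losing that mass, since $u_{t,i}$ has no further edges. Hence we may assume, without loss of optimality, that the algorithm never preempts. We may also assume it saturates $u_{t,i}$ whenever possible, since greedy filling is never worse when the neighbors are indistinguishable. The algorithm is then inter-available-independent and balanced. Its value is bounded by the recursion $\mathbf{p}_{t+1}=\mathsf{M}(\mathbf{1}-\mathbf{p}_t)$ from the proof of \Cref{th:wthard}, where now the inequality holds in the adversary's favor. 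By \Cref{lemma:property_of_h}, the ratio tends to $1-\int_0^1\tau=2-\sqrt{2}$ as $m,k\to\infty$, while \opt$=km$ via the edges $(u_{t,i},v_{t,i})$. Bipartiteness is inherited from the original construction.

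\textbf{Main obstacle.} The hard step is making the ``unbalanced splitting only hurts'' claim rigorous. The adversary's adaptive relabeling must preserve the structure (upper triangle within group $t$ and $h$-induced edges to group $t+1$), and one must show that any deviation from equal splitting yields passive levels that are, in the appropriate majorization sense, no better for the algorithm. I would attempt this via the monotonicity of the map $\mathbf{p}\mapsto \mathsf{M}(\mathbf{1}-\mathbf{p})$ combined with a coupling showing that the adversary's largest-mass-first assignment dominates the symmetric average.
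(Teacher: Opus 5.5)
Your edge-batch encoding and your argument that preemption is useless both agree with the paper: after $u_{t,i}$'s batch no further edge is incident to $u_{t,i}$, so mass removed from one of its edges can be reused only at the other endpoint. The genuine gap is the step you flag yourself. You never show that no algorithm beats the equal split, and the adaptive adversary you sketch points the wrong way.

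Mass that $u_{t,i}$ sends to the candidate that becomes $v_{t,i}$ is the best outcome for the algorithm. That vertex is $u_{t,i}$'s partner in \textsc{Opt} and receives no further edges. Mass sent to future $U_{t+1}$-vertices, by contrast, reduces their later active matching. Designating the most-filled candidates as dropouts therefore rewards concentration. Consider a step where no $U_{t+1}$-vertex leaves the candidate set. This happens for a constant fraction of the steps, because $f'(0)=0$ gives $h'(0)=0$, so $h$ is flat near $0$. At such a step $v_{t,i}$ is the only dropout. An algorithm that pours all of $u_{t,i}$'s remaining mass onto one candidate then has it matched entirely to its \textsc{Opt} partner and sends nothing into $U_{t+1}$. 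Against your adversary it strictly outperforms Water-Filling.

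A deterministic adaptive adversary would have to do the reverse: the least-filled candidate becomes $v_{t,i}$ and the most-filled ones become $U_{t+1}$-vertices. Even then a domination argument is still needed. Your proposed route through the monotonicity of $\mathbf p\mapsto\mathsf M(\mathbf 1-\mathbf p)$ does not supply it, because $\mathsf M$ encodes equal splitting and does not describe the passive levels of an unbalanced algorithm. Invoking ``averaging over relabelings'' on top of an adaptive rule also conflates two different adversaries.

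The missing idea, which is the paper's route, makes any coupling unnecessary. Before the run, fix an independent uniformly random assignment of latent roles to the vertices of each $V_t\cup U_{t+1}$, so the input is an oblivious distribution. Conditional on any revealed history, the current candidates of $u_{t,i}$ are exchangeable. The algorithm's own, possibly unequal, earlier allocations reveal nothing about their roles. The fractional objective and constraints are linear. So if you map each run back to canonical roles and average, you get a feasible fractional matching with the algorithm's expected value in which every $u_{t,i}$ splits its mass equally among its candidates. This balanced process is bounded by the analysis in Lemma~\ref{th:wthard}, and Yao's principle finishes the proof.
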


\begin{proof}[Proof of \Cref{cor:hard_other,cor:hard_inter_independent}]
We use essentially the same hard instance as before, with only a minor modification in how the input is generated and revealed. The underlying graph and the deadlines remain unchanged. Before the process begins, we independently and uniformly permute the latent identities, or equivalently the future roles, of the vertices within each relevant group, revealing these roles only when they become observable from the arriving edges.

Consider a vertex $u_{t,i}$ at its deadline. Conditional on every revealed history, its currently available neighbors whose future roles remain unrevealed are exchangeable. By symmetrizing any deterministic algorithm over the random permutations, we may therefore assume, without changing its expected performance, that it distributes the matched mass equally among these neighbors. For inter-available-independent algorithms, the required indistinguishability follows immediately from the definition of the model. Consequently, the same balanced process and the same lower-bound analysis as for \wtf apply.

For the edge-arrival model, we reveal the edges corresponding to each deadline consecutively at the same point in the arrival sequence. Since the latent roles of their endpoints remain exchangeable, the same symmetrization argument applies. Moreover, preemption provides no additional power on this instance: after the corresponding edge batch has arrived, no future edge incident to the departing endpoint will appear, and removing an already selected edge can create at most the same amount of matching through its other endpoint.

Since all random permutations are sampled before the algorithm starts, this defines an oblivious input distribution. Applying Yao's principle, we conclude that no algorithm in either model can achieve a competitive ratio better than $2-\sqrt{2}$.
\end{proof}

\section{Improved Competitive Ratio beyond Water-Filling}
In this section, we develop an algorithm that goes beyond \wtf and beats the $2-\sqrt{2}$ barrier. The key idea is \emph{eager matching}, which implicitly captures the inter-available structure of the graph and thereby bypasses the $2-\sqrt{2}$ barrier for inter-available-independent algorithms. Instead of making matching decisions only at deadlines, our algorithm also allows vertices to make matching decisions upon arrival. Such arrival-time decisions are made on inter-available edges, and thus directly exploit the inter-available structure of the graph.

Although a fixed amount of feasible matching mass can be postponed to a deadline without decreasing its immediate primal value, committing some mass upon arrival can change the state seen by later decisions.  The economic view below identifies when this effect improves the primal dual guarantee.

\paragraph{The Economic View for Primal-Dual-Based Algorithm Design.}
Within the primal-dual analytical framework, we can adopt an economic perspective to reinterpret the algorithm. Each vertex can be viewed as a selfish agent seeking to maximize its own dual variable.
Our algorithm specifies both the dual sharing rule for each fractional matching increment and the timing of each agent's matching decisions. For example, in the analysis of the \wtf algorithm, the dual-sharing function $g(x_v)$ can be regarded as a pricing function, where $g(x_v)$ represents the price of vertex $v$ when its current water level is $x_v$. Consequently, at vertex $u$'s deadline, $u$ always matches an infinitesimal amount $\diff x$ to the cheapest neighbor $v$ (which corresponds to the neighbor with the lowest water level), thereby maximizing its dual increment $(1 - g(x_v)) \diff x$, since $g$ is increasing.

Under this economic view, consider the \emph{eager} idea, where we allow a vertex to make some early fractional matching upon its arrival. Suppose vertex $u$ has a neighbor $v$ with a very low water level, and the inequality $g(x_u) \le 1 - g(x_v)$ holds. We observe that, in the worst-case analysis, it is beneficial for $u$ to match with $v$ immediately. Otherwise, $u$ might be matched later by another vertex \emph{passively} at price only $g(x_u)$, while the current dual increment rate is $1 - g(x_v) > g(x_u)$. Motivated by this observation, we propose the first improved algorithm, called \ewf, where we let every vertex $u$ make some active matching at its arrival time whenever there is a neighbor $v$ that satisfies $1-g(x_v) > g(x_u)$. The formal description and the analysis are presented in \Cref{sec:ewtf}. Furthermore, building on the eager-matching idea, we develop a finer-grained design of the price function that captures the historical information of the water levels, thereby further improving the competitive ratio in \Cref{sec:two_dim}.

\subsection{Eager Water-Filling Algorithm} \label{sec:ewtf}

In this section, we present the \ewf algorithm for fractional fully online matching and prove the following theorem.

\thmEagerWf*

% \begin{theorem}
%     \label{thm:eager-wf}
%     The \ewf algorithm can achieve a competitive ratio of $0.592$.
% \end{theorem}

\paragraph{Eager Water-Filling.}
Fix a continuous and strictly increasing function
$g:[0,1]\to[0,1]$ satisfying $g(0)=0$ and $g(1)=1$.
Initialize all variables $x_{uv}$ and $\alpha_u$ to be zero. 

\begin{enumerate}
	\item Upon the arrival of a vertex $u$, $u$ continuously matches the neighbor $v$ with the lowest water level as long as $g(x_u) + g(x_v) \leq 1$.
	In other words, the process increases $x_u$ and the lowest water level among the neighbors of $u$ until $g(x_u) + g(x_v) > 1$ for all neighbors $v$ of $u$.
	\item At the deadline of $u$, $u$ continuously matches the neighbor $v$ with the lowest water level until $x_u = 1$, or $x_v = 1$ for all neighbors $v$ of $u$.
\end{enumerate}
The pseudocode is presented in \Cref{alg:ewf}.
Note that the second step of \ewf is the same as \wtf, and the only difference is the \emph{eager} part (i.e., step (1)).

In both steps, when we match $u$ with its neighbor $v$, we consider $u$ as the active vertex and $v$ as the passive vertex.
When $x_{uv}$ increases by $\dd x$, we update the dual variables $\alpha_u$ and $\alpha_v$ as follows:
\begin{equation*}
\dd\alpha_u = (1-g(x_v))\dd x \quad \text{and} \quad \dd\alpha_v = g(x_v) \dd x.
\end{equation*}

\begin{algorithm}[ht]
\caption{Eager Water-Filling}
\label{alg:ewf}
\SetAlgoLined
\SetKwIF{When}{ElseWhen}{Else}{Trigger: when}{do}{Else when}{else}{end}
\KwData{Graph $G=(V,E)$ with vertex arrivals and departures}
\KwResult{Fractional matching $\{x_{uv}\}$ and dual variables $\{\alpha_u\}$}

\textbf{Initialization:} Initialize each variable only once: when a vertex $u$ arrives, set $x_u\gets 0$ and $\alpha_u\gets 0$; when an edge $(u,v)$ is revealed for the first time, set $x_{uv}\gets 0$.\;

\When{a vertex $u$ arrives}{
  $N(u)\gets \{\,v:\ v\text{ is a neighbor of }u\text{ and currently available}\,\}$\tcp*{available neighbors at arrival}
  $v\gets \arg\min_{w\in N(u)} g(x_w)$\;
  \While{$g(x_u)+g(x_v)\le 1$}{
    $x_{uv}\gets x_{uv}+\diff x$\tcp*{match a small amount $\diff x$ on $(u,v)$}
    $x_u\gets x_u+\diff x$, \quad $x_v\gets x_v+\diff x$\tcp*{water-level updates}
    $\alpha_u\gets \alpha_u+(1-g(x_v))\,\diff x$, \quad $\alpha_v\gets \alpha_v+g(x_v)\,\diff x$\tcp*{dual updates}
    $v\gets \arg\min_{w\in N(u)} g(x_w)$\;
  }
}

\When{a vertex $u$ departs}{
  $N(u)\gets \{\,v:\ v\text{ is a neighbor of }u\text{ and currently available}\,\}$\tcp*{available neighbors at departure}
  \While{$x_u < 1$ and $\min_{w \in N(u)} x_w < 1$}{
    $v\gets \arg\min_{w\in N(u)} g(x_w)$\;
    $x_{uv}\gets x_{uv}+\diff x$\tcp*{match a small amount $\diff x$ on $(u,v)$}
    $x_u\gets x_u+\diff x$, \quad $x_v\gets x_v+\diff x$\tcp*{water-level updates}
    $\alpha_u\gets \alpha_u+(1-g(x_v))\,\diff x$, \quad $\alpha_v\gets \alpha_v+g(x_v)\,\diff x$\tcp*{dual updates}
  }
}
\end{algorithm}

%\begin{theorem}\label{th:ewf}
%	\ewf is $0.592$-competitive for \fofmp.
%\end{theorem}

\subsection{Analysis of Eager Water-Filling}
\label{sec:ewf_analysis}

By Lemma~\ref{lem:primal-dual}, it suffices to show that for any pair of neighbors $u$ and $v$ we have $\alpha_u+\alpha_v \geq \Gamma$ in order to prove \ewf is $\Gamma$-competitive.
Fix any pair of neighbors $u$ and $v$, and assume $u$ has an earlier deadline than $v$.

Let $p_u$ be the water level of $u$ right \emph{before} $u$'s deadline.
Let $p_v$ be the water level of $v$ right \emph{after} $u$'s deadline.
Let $t_u, t_v$ be the water levels of $u$ and $v$ right \emph{after} their arrivals, respectively.
We prove the following lower bound on the combined gain of $u$ and $v$.

\begin{lemma}\label{lemma:gain-of-u-and-v-ewf}
Right after $u$'s deadline, we have
\begin{equation}
\alpha_v + \alpha_u\ge t_v \cdot g(t_v) + \int_{t_v}^{p_v} g(x) \dd{x} + t_u \cdot g(t_u) + \int_{t_u}^{p_u} g(x) \dd{x}+ (1-p_u) \cdot (1-g(p_v)).\label{eq:gain-ewf}
\end{equation}
\end{lemma}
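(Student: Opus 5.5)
The plan is to split $\alpha_u$ and $\alpha_v$ according to the phase in which each increment $\dd x$ of water level occurs: the eager phase at arrival, the passive phase between arrival and $u$'s deadline, and the active phase at $u$'s deadline. We then lower-bound each phase separately. All increments of $\alpha$ are nonnegative, since $g\in[0,1]$, so we may drop any contributions we do not need.

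\textbf{Eager phase (the terms $t\,g(t)$).} Consider vertex $u$; the argument for $v$ is identical. At $u$'s arrival, $u$ is active and raises $x_u$ from $0$ to $t_u$. Each step matches $u$ to a neighbor $w$ and gains $(1-g(x_w))\dd x$. This step is executed only while $g(x_u)+g(x_w)\le 1$, so $1-g(x_w)\ge g(x_u)$. Here $x_u$ is at most $t_u$, but we need the bound to hold at level $t_u$. Since the loop stops exactly when $g(x_u)+g(x_w)>1$ for all $w$, at every step with $x_u\le t_u$ we have $1-g(x_w)\ge g(x_u)$. We would like to conclude $1-g(x_w)\ge g(t_u)$. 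The cleanest route uses the loop's own invariant. The current minimum neighbor level $\min_w x_w$ is nondecreasing during the loop, and at termination $g(t_u)+g(\min_w x_w)\ge 1$ up to an infinitesimal error. So every neighbor matched during the loop had level at most the final minimum $m^*$, which gives $1-g(x_w)\ge 1-g(m^*)\approx g(t_u)$. Integrating yields an eager gain of at least $t_u\,g(t_u)$. (If $t_u$ stops because $N(u)$ is exhausted or $u$ has no neighbors, then $t_u=0$ and the bound is trivial.)

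\textbf{Passive phase (the integrals).} After its arrival and before $u$'s deadline, $u$'s level rises from $t_u$ to $p_u$. These increments come from other vertices' arrivals (eager matches where $u$ is the passive endpoint) or other vertices' deadlines. In both cases $u$ gains $g(x_u)\dd x$, giving $\int_{t_u}^{p_u}g$. The only subtlety is that $u$ itself never acts actively between arrival and deadline, which holds by the algorithm's description. For $v$, up to the moment right after $u$'s deadline, $v$ is still before its own deadline. So every increment of $x_v$ from $t_v$ to $p_v$ is passive, including the increments made during $u$'s deadline loop, and gains $g(x_v)\dd x$. This gives $\int_{t_v}^{p_v}g$.

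\textbf{Active phase at $u$'s deadline.} This copies Case 2 of Lemma~\ref{thm:wtf_lower}. During $u$'s deadline loop, $u$ raises $x_u$ from $p_u$ to its final value and always picks the lowest-level available neighbor. Since $v$ is available with level at most $p_v$, every chosen neighbor $w$ satisfies $x_w\le p_v$, so $1-g(x_w)\ge 1-g(p_v)$. If the loop ends with $x_u=1$, the gain is at least $(1-p_u)(1-g(p_v))$. Otherwise all available neighbors, in particular $v$, have level $1$, so $p_v=1$ and the term $(1-p_u)(1-g(1))=0$ holds trivially. Summing the three phases gives \eqref{eq:gain-ewf}.

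The main obstacle is the eager-phase bound. The stopping rule compares $g(x_u)$ with the current neighbor's level rather than the final levels, so we must argue monotonicity of the minimum neighbor level and handle the discretization carefully.
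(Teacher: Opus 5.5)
Your proposal is correct and takes essentially the same route as the paper: an eager gain of at least $t\,g(t)$ from the stopping rule, passive gains $\int g$ up to $p_u$ and $p_v$, and a deadline gain of $(1-p_u)(1-g(p_v))$ because $u$'s chosen neighbor never has a higher level than $v$, with the case $p_v=1$ handled trivially. One slip in the eager step: the termination inequality $g(t_u)+g(m^*)>1$ points the wrong way for your bound, so what you actually need is the last successful loop check $g(x_u)+g(m)\le 1$ (with $x_u\to t_u$ and every matched level at most $m$), which together with continuity of $g$ gives $1-g(x_w)\ge g(t_u)-o(1)$; this is exactly the fact the paper asserts as $g(x_z)+g(t_v)\le 1$.
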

\begin{proof}
	Recall that when $v$ arrives, $v$ matches some neighbor actively until $x_v = t_v$.
	Moreover, when $x_v$ increases (actively) from $0$ to $t_v$, the neighbor $z$ it matches always satisfies $g(x_z) + g(t_v) \leq 1$.
	Thus when $x_v$ increases by $\dd x$ the gain of $v$ is $(1-g(x_z))\dd x \geq g(t_v)\dd x$.
	Hence right after $v$'s arrival we have $\alpha_v \geq t_v\cdot g(t_v)$.
	When $x_v$ further increases from $t_v$ to $p_v$ between $v$'s arrival and $u$'s deadline, $\alpha_v$ increases at the rate of $g(x_v)$.
	Thus after $u$'s deadline we have $\alpha_v \ge t_v \cdot g(t_v) + \int_{t_v}^{p_v} g(x) \dd{x}$.
    
	Similarly, right before $u$'s deadline we have $\alpha_u \ge t_u \cdot g(t_u) + \int_{t_u}^{p_u} g(x) \dd{x}$.
	If $p_v = 1$, then $(1-p_u)\cdot(1-g(p_v)) = 0$ and the statement is proved.
	Otherwise, at $u$'s deadline, $x_u$ increases (actively) from $p_u$ to $1$, and $u$ always matches a neighbor with a water level at most $p_v$.
	Thus after the deadline of $u$ we have $\alpha_u \ge t_u \cdot g(t_u) + \int_{t_u}^{p_u} g(x) \dd{x} + (1-p_u)\cdot (1-g(p_v))$.

	Adding the lower bounds for $\alpha_u$ and $\alpha_v$ concludes the proof.
\end{proof}

\paragraph{Comparison with \wtf.}
We compare this with the competitive analysis of \wtf. Let $p_u,p_v$ be defined for \wtf in the same way as for \ewf, with the dual variables also updated in the same way. Observe that \wtf is exactly the second step of our \ewf algorithm. As shown in \Cref{eqn:wtf_gain}, we have
\begin{equation*}
\alpha_u+\alpha_v \geq \int_0^{\pw_u} g(x) \dd x + (1-{\pw_u})(1-g(p_v)) + \int_0^{p_v} g(x)\dd x.
\end{equation*}

Observe that \Cref{eq:gain-ewf} is at least as good as \Cref{eqn:wtf_gain}, because $t \cdot g(t) \geq \int_0^{t} g(x) \dd x$ holds for all $t$.
On the other hand, we have not shown any constraint on the values of $t_u,t_v$.
In the case when $t_u = t_v = 0$, \Cref{eq:gain-ewf} degenerates to \Cref{eqn:wtf_gain}.

We continue our analysis by observing that if $v$ arrives earlier than $u$, then right after $u$'s arrival, we have $g(t_u) + g(x_v) > 1$, and $x_v \leq p_v$.
Thus, we have the constraint that $g(t_u) + g(p_v) > 1$.
%Consequently, if $t_u$ is small then $p_v$ must be relatively large.
Similarly, if $u$ arrives earlier than $v$ then we have $g(t_v) + g(p_u) > 1$.

Combining the constraints on $t_u,p_u,t_v,p_v$ with Lemma~\ref{lemma:gain-of-u-and-v-ewf}, we show that there exists a function $g$ such that the total gain of $u$ and $v$ combined is strictly larger than the ratio $2-\sqrt{2}$ that is proved tight for \wtf.

\subsection{Reformulating the Lower Bound}

It remains to find a continuous and strictly increasing function $g$ such that the minimum of the RHS of \Cref{eq:gain-ewf}, over possible values of $t_u,t_v,p_u,p_v$, is maximized.
In this section, we reformulate the lower bound and eliminate $t_u$ and $t_v$ from it.

Since $g$ is strictly increasing, the RHS of \Cref{eq:gain-ewf} is increasing in both $t_u$ and $t_v$. This follows from the fact that the function $t \cdot g(t) - \int_0^t g(x) \dd x$ is monotone increasing in $t$.

Thus, the minimum is achieved when $t_u$ and $t_v$ are minimized, subject to the constraint
\begin{align*}
g(t_u) + g(p_v) & > 1 \text{ if $v$ arrives earlier than $u$, or}\\
g(t_v) + g(p_u) & > 1 \text{ if $u$ arrives earlier than $v$}.
\end{align*}

Since the two stopping constraints are strict, the corresponding minima need not be attained.  However, all the expressions involved are continuous, so their infima are obtained by approaching the boundary.  Hence, for the purpose of the lower-bound calculation, we may evaluate the limiting expressions at $g(t_u)+g(p_v)=1$ or $g(t_v)+g(p_u)=1$, respectively.

Since $g$ is a continuous and strictly increasing bijection from
$[0,1]$ to itself, let $h(\cdot)=g^{-1}(\cdot)$ be its inverse.
Then $h:[0,1]\to[0,1]$ is also continuous and strictly increasing,
with $h(0)=0$ and $h(1)=1$.
For any $p\in [0,1]$, we have
\begin{equation*}
\int_0^p g(x) \dd x = p\cdot g(p) - \int_0^{g(p)} h(y) \dd y.
\end{equation*}

\begin{lemma}\label{lemma:v-arrive-earlier}
	If $v$ arrives earlier than $u$, then we have
	\begin{equation*}
	\alpha_u + \alpha_v \ge \min_q\left\{ q \cdot h(q) - \int_0^q h(y) \dd{y} + 1 - q \right\}.
	\end{equation*}
\end{lemma}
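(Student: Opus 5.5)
Starting from \Cref{eq:gain-ewf} in Lemma~\ref{lemma:gain-of-u-and-v-ewf}, I drop the nonnegative term $t_v g(t_v)+\int_{t_v}^{p_v}g$ only partially. Since $v$ arrives earlier than $u$, we have $g(t_u)+g(p_v)>1$. The right-hand side is increasing in $t_u$ and $t_v$, so I set $t_v=0$ and approach the boundary $g(t_u)=1-g(p_v)$ (by continuity, as in the reformulation discussion). Write $q:=g(p_v)\in[0,1]$, so $p_v=h(q)$ and $g(t_u)=1-q$.

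The bound then reads
\[
\alpha_u+\alpha_v \ge \int_0^{p_v} g + t_u g(t_u) + \int_{t_u}^{p_u} g + (1-p_u)(1-q).
\]
I first minimize over $p_u\in[t_u,1]$. The derivative in $p_u$ is $g(p_u)-(1-q)=g(p_u)-g(t_u)\ge 0$, so the minimum is at $p_u=t_u$. This gives
\[
\alpha_u+\alpha_v \ge \int_0^{p_v} g + t_u(1-q) + (1-t_u)(1-q) = \int_0^{p_v} g + 1-q .
\]
The $t_u$ dependence cancels, which is the key simplification. If instead $p_u<t_u$ were possible, that cannot happen: $p_u\ge t_u$ because water levels only increase. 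In the degenerate case $p_v=1$ the term $(1-p_u)(1-g(p_v))$ is zero anyway and the same expression results.

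Finally, I apply the identity $\int_0^{p}g=p\,g(p)-\int_0^{g(p)}h$ with $p=p_v=h(q)$. This gives $\int_0^{p_v}g = q\,h(q)-\int_0^q h(y)\dd y$. Hence
\[
\alpha_u+\alpha_v\ge q\,h(q)-\int_0^q h(y)\dd y+1-q,
\]
and taking the minimum over $q\in[0,1]$ proves the lemma.

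The main subtlety is justifying the replacement of the strict constraint by equality and the monotonicity in $t_u$ when $t_u$ also appears as the lower limit of $\int_{t_u}^{p_u}g$. The $t_u$-derivative of $t_u g(t_u)-\int_0^{t_u}g$ is $t_u g'(t_u)\ge 0$, which handles the monotonicity. The strict inequality is handled by taking infima, as already argued in the text.
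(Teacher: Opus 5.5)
Your proof is correct and follows the paper's argument step for step: set $t_v=0$, push $t_u$ to the boundary $g(t_u)=1-g(p_v)$, use monotonicity in $p_u$ to take $p_u=t_u$ (at which point the $t_u$-terms collapse to $1-g(p_v)$), and convert $\int_0^{p_v}g$ via the inverse-function identity. One small repair: $g$ is only assumed continuous and strictly increasing, not differentiable, so instead of using the derivative $t_u g'(t_u)$, justify monotonicity directly from $t\,g(t)-\int_0^t g(x)\,\dd x=\int_0^t\bigl(g(t)-g(x)\bigr)\dd x$, which is nondecreasing in $t$ because $g$ is increasing.
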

\begin{proof}
	Let $q_u=g(p_u), q_v=g(p_v)$.
    If $v$ arrives earlier than $u$, the preceding limiting argument allows us to evaluate the boundary at $t_u=g^{-1}(1-g(p_v))=h(1-q_v)$ and $t_v=0$.
	\begin{equation*}
	\alpha_v + \alpha_u\ge \int_{0}^{p_v} g(x) \dd{x} + t_u \cdot g(t_u) + \int_{t_u}^{p_u} g(x) \dd{x}+ (1-p_u) \cdot (1-g(p_v)).
	\end{equation*}

	 For $p'_u\geq p_u\geq t_u$, the change in the relevant expression is
    \[
        \int_{p_u}^{p'_u}\bigl(g(x)+g(p_v)-1\bigr)\dd x\geq 0,
    \] 
    because $g(x)\geq g(t_u)=1-g(p_v)$.  Hence the expression is nondecreasing in $p_u$, and its infimum is attained at $p_u=t_u$. Thus we have
	\begin{equation*}
	\alpha_v + \alpha_u\ge \int_{0}^{p_v} g(x) \dd{x} + t_u \cdot g(t_u) + (1-t_u) \cdot (1-g(p_v)).
	\end{equation*}

	Using $t_u = h(1-q_v)$ we have $g(t_u) = 1-q_v = 1-g(p_v)$, which implies
	\begin{equation*}
	\alpha_v + \alpha_u\ge \int_{0}^{p_v} g(x) \dd{x} + 1-g(p_v) = h(q_v)\cdot q_v - \int_0^{q_v} h(y) \dd y + 1 - q_v.
	\end{equation*}

	Taking the minimum of the RHS over $q_v$ yields the lemma.
\end{proof}

\begin{lemma}\label{lemma:u-arrive-earlier}
	If $u$ arrives earlier than $v$, then we have
	\begin{equation*}
	\alpha_u + \alpha_v \ge
	\min_{q_u,q_v} \Big\{ q_u\cdot h(q_u) - \int_0^{q_u}h(y) \dd y
	+ q_v\cdot h(q_v) - \int_0^{q_v}h(y) \dd y
	+ \int_0^{1-q_u} h(y) \dd y + (1-h(q_u)) \cdot (1-q_v) \Big\}.
	\end{equation*}
\end{lemma}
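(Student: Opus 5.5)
We follow the same route as in the proof of \Cref{lemma:v-arrive-earlier}, now with the roles of the two endpoints swapped. Start from \Cref{eq:gain-ewf} of \Cref{lemma:gain-of-u-and-v-ewf}. Set $q_u=g(p_u)$ and $q_v=g(p_v)$. Since $u$ arrives earlier than $v$, the constraint is $g(t_v)+g(p_u)>1$, and there is no constraint on $t_u$. The right-hand side of \Cref{eq:gain-ewf} is increasing in both $t_u$ and $t_v$. By the limiting argument stated before \Cref{lemma:v-arrive-earlier}, we may therefore evaluate it at $t_u=0$ and at $t_v=h(1-q_u)$, i.e.\ $g(t_v)=1-q_u$. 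One caveat: $t_v\le p_v$ must hold, so we need $1-q_u\le q_v$. Otherwise the constraint cannot bind, and we fall back on the bound with $t_v=p_v$; I expect this regime to be covered by the same formula, or at least to be dominated by it. This gives
\[
\alpha_u+\alpha_v\ \ge\ \int_0^{p_u} g(x)\,\dd x + t_v g(t_v) + \int_{t_v}^{p_v} g(x)\,\dd x + (1-p_u)(1-q_v).
\]

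Next, rewrite every term in the variables $q$ and $h=g^{-1}$, using the identity $\int_0^p g(x)\,\dd x = p\,g(p)-\int_0^{g(p)}h(y)\,\dd y$. The first term becomes $q_u h(q_u)-\int_0^{q_u}h(y)\,\dd y$. For the $v$-terms, write $\int_{t_v}^{p_v}g=\int_0^{p_v}g-\int_0^{t_v}g$. Then
\[
t_v g(t_v)-\int_0^{t_v}g(x)\,\dd x=\int_0^{g(t_v)}h(y)\,\dd y=\int_0^{1-q_u}h(y)\,\dd y .
\]
So the $v$-contribution is $q_v h(q_v)-\int_0^{q_v}h(y)\,\dd y+\int_0^{1-q_u}h(y)\,\dd y$. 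Finally, $(1-p_u)(1-q_v)=(1-h(q_u))(1-q_v)$.

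Summing these pieces reproduces exactly the expression inside the minimum in the statement. Taking the minimum over $(q_u,q_v)$ finishes the proof.

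The main obstacle is the boundary bookkeeping. We must check that the choice $t_v=h(1-q_u)$ is legitimate, namely $t_v\le p_v$. When instead $q_v<1-q_u$, the unconstrained value is $t_v=p_v$, and we must verify that the resulting bound is still at least the stated expression. There the difference is $\int_{q_v}^{1-q_u}h(y)\,\dd y\ge 0$ in the favorable direction, so it should be harmless. We also need to confirm that the strict inequality is handled by continuity, exactly as was done for \Cref{lemma:v-arrive-earlier}.
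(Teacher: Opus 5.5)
Your core argument matches the paper's proof. You evaluate \Cref{eq:gain-ewf} at $t_u=0$ and $t_v=h(1-q_u)$ using monotonicity in $t_u,t_v$. You then convert with $\int_0^p g=p\,g(p)-\int_0^{g(p)}h$ and $t_vg(t_v)-\int_0^{t_v}g=\int_0^{1-q_u}h$, which reproduces exactly the expression in the statement.

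The paragraph you call the main obstacle contains an error, although the issue it raises is vacuous.

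\emph{The regime $q_v<1-q_u$ never occurs.} Water levels only increase, so $t_v\le p_v$. When $v$ arrives, $u$ is available with $x_u\le p_u$, so the stopping rule of $v$'s eager phase gives $g(t_v)>1-g(p_u)=1-q_u$. Hence $q_v=g(p_v)\ge g(t_v)>1-q_u$.

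\emph{The evaluation at $t_v=h(1-q_u)$ does not need $h(1-q_u)\le p_v$.} Reading the integral with sign, $t_vg(t_v)+\int_{t_v}^{p_v}g=\bigl(t_vg(t_v)-\int_0^{t_v}g\bigr)+\int_0^{p_v}g$. This is nondecreasing in $t_v$ on all of $[0,1]$.

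\emph{Your fallback would have failed if the regime were possible.} With $t_v=p_v$, the $v$-part of the bound is $q_vh(q_v)$. The $v$-part of the stated expression is $q_vh(q_v)+\int_{q_v}^{1-q_u}h(y)\,\dd y$. So the fallback falls short of the target by $\int_{q_v}^{1-q_u}h\ge 0$, which is the unfavorable direction, not the favorable one.

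Replace that discussion with the one-line feasibility observation above, and the proof is complete.
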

\begin{proof}
	Let $q_u=g(p_u), q_v=g(p_v)$.
    If $u$ arrives earlier than $v$, we evaluate the boundary at $t_v=g^{-1}(1-g(p_u))=h(1-q_u)$ and $t_u=0$.
	\begin{align*}
	\alpha_v + \alpha_u
    & \ge t_v \cdot g(t_v) + \int_{t_v}^{p_v} g(x) \dd{x} + \int_{0}^{p_u} g(x) \dd{x} + (1-p_u) \cdot (1-g(p_v)) \\
	& =  t_v\cdot g(t_v) + \int_{0}^{p_v} g(x) \dd{x} - \int_0^{t_v} g(x)\dd x + \int_{0}^{p_u} g(x) \dd{x} + (1-p_u) \cdot (1-g(p_v)) \\
	& =  h(1-q_u)\cdot (1-q_u) + \left( h(q_v)\cdot q_v - \int_0^{q_v}h(y)\dd y \right) - \left( h(1-q_u)\cdot (1-q_u) - \int_0^{1-q_u} h(y) \dd y \right) \\
	& \quad + \left( h(q_u)\cdot q_u - \int_0^{q_u}h(y)\dd y \right) + (1-h(q_u))\cdot (1-q_v) \\
	& =  q_u\cdot h(q_u) - \int_0^{q_u}h(y) \dd y
	+ q_v\cdot h(q_v) - \int_0^{q_v}h(y) \dd y
	+ \int_0^{1-q_u} h(y) \dd y + (1-h(q_u)) \cdot (1-q_v).
	\end{align*}

	Taking the minimum of the RHS over $q_u$ and $q_v$ yields the lemma.
\end{proof}

Finally, we use factor-revealing LP techniques to find a function $h$ with the following property. The proof of Lemma~\ref{lemma:lower-bound-ratio-ewf} is deferred to Appendix~\ref{sec:factor_revealing_lp_ewf}.

\begin{lemma}\label{lemma:lower-bound-ratio-ewf}
    There exists a continuous and strictly increasing function
    $h:[0,1]\to[0,1]$ satisfying the displayed inequalities and endpoint
    conditions for $\Gamma=0.592$.
	\begin{align}
	\forall q\in[0,1], \qquad & q \cdot h(q) - \int_0^q h(y) \dd{y} + 1 - q \geq \Gamma, \label{eqn:v_earlier}\\
	\forall q_u,q_v\in[0,1],\qquad & q_u\cdot h(q_u) - \int_0^{q_u}h(y) \dd y
	+ q_v\cdot h(q_v) - \int_0^{q_v}h(y) \dd y \notag \\
	& \qquad + \int_0^{1-q_u} h(y) \dd y + (1-h(q_u)) \cdot (1-q_v) \geq \Gamma, \label{eqn:u_earlier}\\
	& h(0) = 0, h(1) = 1.
	\end{align}
\end{lemma}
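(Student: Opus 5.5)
To prove the lemma we must exhibit a concrete $h$ and certify both inequalities for $\Gamma=0.592$. I would use a factor-revealing LP over a discretization.

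\textbf{Discretization and LP.} Fix a grid $0=y_0<y_1<\dots<y_n=1$ with $y_i=i/n$. Take $h$ to be piecewise linear with unknown values $h_i=h(y_i)$, constrained by $h_0=0$, $h_n=1$ and $h_{i+1}\ge h_i+\epsilon$ for a tiny $\epsilon>0$, which gives strict monotonicity.

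The integrals $H_i=\int_0^{y_i}h$ are then linear in the $h_j$ via the trapezoid rule. Constraint \eqref{eqn:v_earlier} at grid points reads
\[
y_i h_i - H_i + 1 - y_i \ge \Gamma ,
\]
which is linear. Constraint \eqref{eqn:u_earlier} at grid pairs $(q_u,q_v)=(y_i,y_j)$ reads
\[
y_ih_i-H_i+y_jh_j-H_j+H_{n-i}+(1-h_i)(1-y_j)\ge\Gamma .
\]
This is also linear in the unknowns, since every product is a grid constant times a variable. I would maximize $\Gamma$ subject to these constraints. I expect the optimum to come out around $0.5925$, and I would then fix the resulting $h$ (or a rounded, explicit version of it) and declare $\Gamma=0.592$.

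\textbf{From grid points to all of $[0,1]$.} This is the main obstacle: the inequalities must hold for all $q_u,q_v$, not only on the grid. Let
\[
F(q)=qh(q)-\int_0^q h .
\]
Its derivative is $F'(q)=qh'(q)\ge 0$, so $F$ is nondecreasing. For \eqref{eqn:v_earlier}, when $q\in[y_i,y_{i+1}]$ I would bound
\[
F(q)+1-q\;\ge\;F(y_i)+1-y_{i+1}.
\]
Similarly, in \eqref{eqn:u_earlier} each term is monotone in each variable. $F(q_u)$ and $F(q_v)$ are increasing. $\int_0^{1-q_u}h$ is decreasing in $q_u$. $(1-h(q_u))(1-q_v)$ is decreasing in both variables. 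On a grid cell I would therefore lower-bound every term by its value at the unfavorable corner, which turns the requirement into a corner-mixed linear inequality on the grid.

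I would impose these "robust" corner-mixed inequalities directly in the LP, rather than the pointwise ones. The error incurred is $O(1/n)$ per term, so with $n$ of a few hundred the LP value should still exceed $0.592$.

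\textbf{Certification.} I would verify the final solution with exact rational arithmetic, or give explicit tabulated values of $h_i$ in the appendix together with a check that every robust constraint holds. Continuity and strict monotonicity are immediate from the piecewise-linear construction, and $h(0)=0$, $h(1)=1$ hold by constraint.

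If the LP turns out to be too tight at the chosen $n$, I would first refine the grid near the binding region. I expect the binding region to be near $q_u$ small and $q_v$ around $1-\Gamma$, mirroring the Water-Filling worst case.
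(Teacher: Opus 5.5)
You have the paper's overall strategy: a factor-revealing LP over the grid values of a piecewise-linear $h$, with $h(0)=0$, $h(1)=1$ and a minimum increment for strict monotonicity, followed by an argument that lifts the grid constraints to all of $[0,1]$. Your corner bounds are also logically valid. We have $F'(q)=qh'(q)\ge 0$, the integral $\int_0^{1-q_u}h$ is nonincreasing in $q_u$, and $(1-h(q_u))(1-q_v)$ is a product of nonnegative nonincreasing factors. So your corner-mixed LP, if feasible at $0.592$, would prove the lemma.

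The gap is the claim that $n$ of a few hundred suffices. Rounding each term separately to its own worst corner tightens every constraint by a first-order amount.
\begin{itemize}
\item In \eqref{eqn:v_earlier}, your robust constraint $F(y_i)+1-y_{i+1}\ge\Gamma$ is the grid constraint tightened by exactly $1/n$.
\item In \eqref{eqn:u_earlier}, the cell at $(y_i,y_j)$ is tightened by $\int_{1-y_{i+1}}^{1-y_i}h+(h_{i+1}-h_i)(1-y_j)+(1-h_{i+1})/n$. This is again of order $1/n$ unless $q_u\approx 1$.
\end{itemize}
By your own estimate the grid optimum is about $0.5925$, and the whole gain over $2-\sqrt2\approx0.586$ is only about $0.006$. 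At $n\approx 300$, a loss of at least $1/n\approx 0.003$ is several times the available margin, so the robust LP will fall short of $0.592$. Brute force would need $n$ well into the thousands, i.e.\ many millions of pair constraints.

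The paper avoids this by averaging the four corner constraints with common weights instead of rounding term by term. This makes the interpolation error second order for every term except one that is deliberately rounded.
\begin{itemize}
\item It imposes the Lipschitz bound $h(y+\frac1n)-h(y)\le\frac2n$.
\item Writing $q=(1-z_q)\bar q+z_q\hat q$, it shows that your $F$ (the paper's $H$) satisfies $F(q)-\bigl((1-z_q)F(\bar q)+z_qF(\hat q)\bigr)=\frac{z_q(1-z_q)}{2n}\bigl(h(\bar q)-h(\hat q)\bigr)\in[-\frac1{4n^2},0]$.
\item The same bound holds for $\int_0^{1-q}h$, and the terms $1-q$ and $1-p$ interpolate exactly.
\item Only $h(q)$ in the cross term is rounded up to $h(q^+)$ with $q^+=\min\{q+\frac1n,1\}$, and this rounding is written into the LP constraint itself.
\end{itemize}
Every off-grid value is then at least a convex combination of corner constraints minus $\frac3{4n^2}$. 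So slacks of $\frac1{4n^2}$ and $\frac3{4n^2}$ suffice, and $n=2000$ certifies $0.592$.

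Your route has the advantage of needing no Lipschitz constraint, but its $\Theta(1/n)$ loss is exactly what the thin margin cannot absorb. Switching to common-weight interpolation repairs the plan. Note that the cross term is bilinear on each cell, so it even interpolates exactly.
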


\begin{theorem}[The Formal Version of Theorem~\ref{thm:eager-wf}]
	Eager Water-Filling with the function $g=h^{-1}$, where $h$ is chosen in Lemma~\ref{lemma:lower-bound-ratio-ewf}, is $0.592$-competitive for fractional fully online matching on general graphs.
\end{theorem}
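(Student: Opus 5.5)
The plan is to assemble the theorem from the pieces already in place, with the primal-dual framework of Lemma~\ref{lem:primal-dual} as the backbone. The dual updates in Algorithm~\ref{alg:ewf} split every increment $\dd x$ on an edge $(u,v)$ as $(1-g(x_v))\dd x$ to the active endpoint and $g(x_v)\dd x$ to the passive one. Each increment therefore adds exactly $\dd x$ to $\sum_u \alpha_u$, which gives (Equal Update). Nonnegativity of $\alpha$ is immediate because $g$ maps into $[0,1]$. Primal feasibility holds because the arrival loop stops once $g(x_u)+g(x_v)>1$, which forces $x_u,x_v<1$ while it runs, and the deadline loop caps water levels at $1$. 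It therefore remains to show $\alpha_u+\alpha_v\ge 0.592$ for every edge, and since duals only increase, it suffices to check this right after the earlier of the two deadlines.

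Fix neighbors $u,v$ with $u$'s deadline first, and take $g=h^{-1}$ with $h$ from Lemma~\ref{lemma:lower-bound-ratio-ewf}. Since $h$ is a continuous, strictly increasing bijection of $[0,1]$, $g$ satisfies the standing assumptions of \ewf. Lemma~\ref{lemma:gain-of-u-and-v-ewf} bounds $\alpha_u+\alpha_v$ by the right-hand side of \eqref{eq:gain-ewf}. Because both endpoints are present when the later one arrives, exactly one of two arrival orders holds, and I split into these cases.

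If $v$ arrives first, Lemma~\ref{lemma:v-arrive-earlier} bounds the sum below by the minimum over $q$ of the expression in \eqref{eqn:v_earlier}, which is at least $\Gamma$. If $u$ arrives first, Lemma~\ref{lemma:u-arrive-earlier} gives the expression in \eqref{eqn:u_earlier}, which is also at least $\Gamma$.

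The step I would double-check is the validity of the reductions inside these two lemmas, since they set $t_u$ or $t_v$ to a boundary value. The first point is the monotonicity of \eqref{eq:gain-ewf} in $t_u$ and $t_v$. The second is the limiting argument needed because the stopping conditions are strict inequalities. The third is the relation $x_v\le p_v$ at $u$'s arrival (respectively $x_u\le p_u$ at $v$'s arrival), which holds because water levels are nondecreasing. Provided these hold, with continuity of the bounds ensuring the infimum still respects $\Gamma$, the approximate dual feasibility follows in both cases, and Lemma~\ref{lem:primal-dual} yields $0.592$-competitiveness on general graphs. No bipartiteness is used anywhere.

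The genuine obstacle is not in this theorem itself but in Lemma~\ref{lemma:lower-bound-ratio-ewf}, the existence of $h$ certifying $\Gamma=0.592$, which is deferred to the factor-revealing LP in Appendix~\ref{sec:factor_revealing_lp_ewf}. Here I take that lemma as given.
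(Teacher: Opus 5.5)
Your proposal is correct and follows the paper's proof. The paper likewise splits on whether $u$ or $v$ arrives first, invokes Lemma~\ref{lemma:v-arrive-earlier} with \eqref{eqn:v_earlier} or Lemma~\ref{lemma:u-arrive-earlier} with \eqref{eqn:u_earlier}, and concludes via Lemma~\ref{lem:primal-dual}; the extra bookkeeping you add (Equal Update, nonnegativity, feasibility, and re-checking the boundary reductions) is left implicit in the paper or handled inside those lemmas.
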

\begin{proof}
	We conclude the competitive ratio of \ewf by putting the lemmas together. Approximate dual feasibility follows by the two cases.
	If $v$ arrives earlier than $u$, we have
	\begin{align*}
	\alpha_u+\alpha_v
    & \ge \min_q\left\{ q \cdot h(q) - \int_0^q h(y) \dd{y} + 1 - q \right\} \tag{Lemma~\ref{lemma:v-arrive-earlier}} \\
	& \ge \Gamma = 0.592~. \tag{\Cref{eqn:v_earlier}}
	\end{align*}
	If $u$ arrives earlier than $v$, we have
	\begin{align*}
	\alpha_u + \alpha_v
    & \ge  \min_{q_u,q_v} \bigg\{ q_u\cdot h(q_u) - \int_0^{q_u}h(y) \dd y + q_v\cdot h(q_v) - \int_0^{q_v}h(y) \dd y \\
	& \phantom{\min_{q_u,q_v} \bigg\{} + \int_0^{1-q_u} h(y) \dd y + (1-h(q_u)) \cdot (1-q_v) \bigg\} \tag{Lemma~\ref{lemma:u-arrive-earlier}} \\
	& \ge  \Gamma = 0.592~. \tag{\Cref{eqn:u_earlier}}
	\end{align*}
	Finally, we conclude the theorem by \Cref{lem:primal-dual}.
\end{proof}

\subsection{Eager Water-Filling with History-Based Pricing}
\label{sec:two_dim}

In this section, we refine the \ewf algorithm by introducing a finer-grained pricing mechanism. In both \wtf and \ewf, a vertex's water level can be accumulated through two distinct channels:
\begin{enumerate}
    \item \textbf{Active channel:} the increase resulting from the vertex actively matching its neighbors at its own arrival time (i.e., during the eager matching phase) or at its deadline.
    \item \textbf{Passive channel:} the increase resulting from the vertex being matched by its neighbors at their respective arrival times or deadlines.
\end{enumerate}

In the \wtf algorithm, when a vertex makes its matching decision at its deadline, the water levels of all its available neighbors must have been accumulated only through the passive channel. This is because once a vertex receives an active water level, it reaches its deadline and is no longer available. In contrast, under the \ewf algorithm, when a vertex makes a decision, either upon arrival or at its deadline, some of its available neighbors may already have active water levels accumulated at their own arrival times. As a result, two available neighbors with the same total water level may contribute differently to the dual objective, depending on how their water levels were accumulated.

Building on this distinction, we introduce a history-based pricing function defined over two dimensions: $g(t_v, x_v)$. Here, $t_v$ denotes the active water level accumulated upon $v$'s arrival, while $x_v$ represents the current total water level (including both active and passive portions). Intuitively, this pricing scheme enables the algorithm to make more balanced matching decisions by leveraging the historical information of how the water level was obtained.

To illustrate the advantage of such history-based pricing, consider what happens under the one-dimensional pricing scheme before the history information is introduced. Suppose an arriving vertex $u$ has two neighbors, $v_1$ and $v_2$, sharing the same total water level (i.e., $x_{v_1} = x_{v_2} = x$) but with different histories. Assume that $v_1$ has a high active water level $t_{v_1}$ (obtained upon its own arrival), while $v_2$ is entirely passive (i.e., $t_{v_2} = 0$). Since the one-dimensional price $g(x)$ depends only on the current water level, these two neighbors are indistinguishable to the algorithm; consequently, $u$ matches them at an equal rate, causing $\alpha_{v_1}$ and $\alpha_{v_2}$ to increase at the same rate. However, $v_1$ has typically accumulated significantly more dual value than $v_2$. Indeed, the active increments that created the active water level of $v_1$ were generated under the one-dimensional eager condition $1-g(x_{\text{neighbor}}) \ge g(x_{v_1})$ (equivalently, $g(x_{v_1})+g(x_{\text{neighbor}})\le 1$), and thus their marginal dual gain is generally higher than that of passive increments. Since our competitive analysis focuses on the worst-case pair, treating
vertices with the same total water level but different matching histories
identically may be suboptimal.  This motivates a two-variable pricing
function $g(t,x)$ that can distinguish vertices with the same current water
level but different matching histories.

We formalize the idea as follows.

% In this section, following the economic view and the eager idea,
% we further extend the purely ``current water level'' based pricing rule
% to a history-based approach with a two-dimensional pricing function.
% Moreover, we introduce a more flexible policy to determine the stopping time of the eager matching at each vertex's arrival.

\paragraph{Eager Water-Filling with History-Based Pricing.}
Fix a continuous and strictly increasing function
$f:[0,1]\to[0,1]$ satisfying $f(0)=0$ and $f(1)=1$, and a continuous
function $g(t,x)$, defined for $0\leq t\leq x\leq1$, such that, for
every fixed $t$, the function $x\mapsto g(t,x)$ is strictly increasing
on $[t,1]$.  We further require $g(t,t)=f(t)$ and $g(t,1)=1$ for every $t\in[0,1]$.
% \paragraph{Eager Water-Filling with History-Based Pricing.} Fix an increasing function $f:[0,1] \to [0,1]$ and a two-dimensional function $g:[0,1]^2 \to [0,1]$ that is non-decreasing in both dimensions and satisfies $g(t,1) = 1$ for all $t\in [0,1]$.
%\yuhao{is that non-decreasing?}
%\zhihao{do we need the monotonicity in the first dimension?}
All matched portions $x_{uv}$, water levels $x_u$, and dual variables $\alpha_u$ are initialized to $0$ upon $u$'s arrival.
%We use $\rho_v = g(t_v, x_v)$ to denote the current price of vertex $v$.
\begin{itemize}
	%\item Each vertex $v$ set its price as $g(t_v, x_v)$, where $t_v$ is the active water level of $v$ after its arrival and $x_v$ is the current water level of $v$.
	\item Upon the arrival of each vertex $u$, fractionally match $u$ to the neighbor $v$ with the cheapest price $\rho_v = g(t_v, x_v)$, as long as $f(x_u) + g(t_v, x_v) \le 1$. When the process ends, let $t_u = x_u$ be the \emph{active} water level of $u$.% be By the definition of our algorithm, we have $g(\alpha_u, x_u) \ge \alpha_u$ right after the arrival of $u$. If there is no available neighbor of $u$, let $\alpha_u$ be $0$.
	\item At the deadline of each vertex $u$, fractionally match it to the neighbor $v$ with the cheapest price $g(t_v, x_v)$.
\end{itemize}
In both steps, when $x_{uv}$ increases by $\diff x$ at the arrival or deadline of $u$, update $\alpha_u, \alpha_v$ by
\[
\diff\alpha_u = (1-g(t_v,x_v))\,\diff x \quad \text{and} \quad \diff\alpha_v = g(t_v, x_v)\,\diff x
\]
%\zhihao{the pseudocode does not look good. I prefer not to include it.}
See Algorithm~\ref{alg:fully} for the pseudocode of our algorithm. We remark that \ewf is a special case of the above algorithm when $g$ is set to the same function $f$, i.e., $g(t, x) = f(x)$ for all $t$.

% Changed to Algorithm2e format since the EC format.

\begin{algorithm}[t]
\caption{Eager Water-Filling with History-Based Pricing}
\label{alg:fully}
\SetAlgoLined
\SetKwIF{When}{ElseWhen}{Else}{Trigger: when}{do}{Else when}{else}{end}
\KwData{Graph $G=(V,E)$ with vertex arrivals and departures}
\KwResult{Fractional matching $\{x_{uv}\}$ and dual variables $\{\alpha_u\}$}

\textbf{Initialization:} Initialize each variable only once: when a vertex $u$ arrives, set $x_u\gets 0$ and $\alpha_u\gets 0$; when an edge $(u,v)$ is revealed for the first time, set $x_{uv}\gets 0$.\;

\When{a vertex $u$ arrives}{
  $N(u)\gets \{\,v:\ v\text{ is a neighbor of }u\text{ and currently available}\,\}$\tcp*{available neighbors at arrival}
  $v\gets \arg\min_{w\in N(u)} g(t_w,x_w)$\;
  \While{$f(x_u)+g(t_v,x_v)\le 1$}{
    $x_{uv}\gets x_{uv}+\diff x$\tcp*{match a small amount $\diff x$ on $(u,v)$}
    $x_u\gets x_u+\diff x$, \quad $x_v\gets x_v+\diff x$\tcp*{water-level updates}
    $\alpha_u\gets \alpha_u+(1-g(t_v,x_v))\,\diff x$, \quad $\alpha_v\gets \alpha_v+g(t_v,x_v)\,\diff x$\tcp*{dual updates}
    $v\gets \arg\min_{w\in N(u)} g(t_w,x_w)$\;
  }
  $t_u\gets x_u$\tcp*{fix $u$'s active water level}
}

\When{a vertex $u$ departs}{
  $N(u)\gets \{\,v:\ v\text{ is a neighbor of }u\text{ and currently available}\,\}$\tcp*{available neighbors at departure}
  \While{$x_u < 1$ and $\min_{w \in N(u)} x_w < 1$}{
    $v\gets \arg\min_{w\in N(u)} g(t_w,x_w)$\;
    $x_{uv}\gets x_{uv}+\diff x$\tcp*{match a small amount $\diff x$ on $(u,v)$}
    $x_u\gets x_u+\diff x$, \quad $x_v\gets x_v+\diff x$\tcp*{water-level updates}
    $\alpha_u\gets \alpha_u+(1-g(t_v,x_v))\,\diff x$, \quad $\alpha_v\gets \alpha_v+g(t_v,x_v)\,\diff x$\tcp*{dual updates}
  }
}
\end{algorithm}

\subsection{Analysis of Eager Water-Filling with History-Based Pricing}
In this section, we analyze the competitive ratio of \Cref{alg:fully}. Our proofs closely follow the analysis of \ewf in \Cref{sec:ewf_analysis}. Each lemma presented below corresponds to a counterpart in that analysis, as the family of algorithms we propose includes \ewf as a special case.
However, we still need to perform a non-trivial transformation of the derived lower bounds to make them compatible with the factor-revealing LP framework used to optimize the functions $f$ and $g$ in our algorithm.

Fix a pair of neighbors $(u,v)$. Suppose $u$'s deadline is before $v$'s deadline. Let $p_u$ be the water level of $u$ right \emph{before} $u$'s deadline.
%We shall have that $g(\alpha_u, t_u) \ge \alpha_u$.
Let $p_v$ be the water level of $v$ right \emph{after} $u$'s deadline.

\begin{lemma}
Right after $u$'s deadline, we have
\begin{equation}
\label{eq:gain}
\alpha_u + \alpha_v \ge t_u \cdot f(t_u) + \int_{t_u}^{p_u} g(t_u, x_u) dx_u + t_v \cdot f(t_v) + \int_{t_v}^{p_v} g(t_v, x_v) dx_v + (1-p_u) \cdot (1-g(t_v, p_v)).
\end{equation}
\end{lemma}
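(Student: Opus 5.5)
The plan is to mirror the proof of Lemma~\ref{lemma:gain-of-u-and-v-ewf}, bounding $\alpha_v$ and $\alpha_u$ separately by tracking the three phases of each vertex's water level: the active phase at arrival, the passive phase between arrival and $u$'s deadline, and, for $u$ only, the active phase at its own deadline.

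\textbf{The arrival phase.} When $v$ arrives and $x_v$ increases actively from $0$ to $t_v$, every neighbor $z$ that $v$ matches satisfies the eager condition $f(x_v)+g(t_z,x_z)\le 1$. Since $f$ is increasing and $x_v\le t_v$ throughout, I would like to bound $1-g(t_z,x_z)$ below by $f(t_v)$, but the eager condition only gives $1-g(t_z,x_z)\ge f(x_v)$ at the current level. To get the stronger bound, I will use that the cheapest price in $N(v)$ only increases during the arrival process. Prices of neighbors rise as they are matched, $x\mapsto g(t,x)$ is increasing, and neighbors' $t_z$ are fixed, so the running minimum price is nondecreasing. At the end of the process, either $f(t_v)+\min_z g(t_z,x_z)\le 1$ held for all earlier increments, or the loop stopped just when the condition became tight. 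Hence every increment satisfies $g(t_z,x_z)\le \rho_{\text{final}}\le 1-f(t_v)$, up to the infinitesimal boundary, and so $\alpha_v\ge t_v f(t_v)$ after arrival.

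\textbf{The passive phase.} Between $v$'s arrival and $u$'s deadline, $v$ is available and not at its own deadline, so every increase of $x_v$ is passive. By the update rule, $\diff\alpha_v=g(t_v,x_v)\diff x$, giving the term $\int_{t_v}^{p_v}g(t_v,x)\diff x$; here $t_v$ is frozen, which is exactly why the history-based price is well defined along this path. The same argument for $u$ up to just before its deadline gives $t_u f(t_u)+\int_{t_u}^{p_u}g(t_u,x)\diff x$.

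\textbf{The deadline phase.} At $u$'s deadline, if $x_u$ ends below $1$, then all available neighbors have water level $1$, in particular $p_v=1$. In that case $g(t_v,1)=1$ and the last term vanishes, since increments to $\alpha_u$ are nonnegative because $g\le 1$. Otherwise $x_u$ rises from $p_u$ to $1$, and each step picks the cheapest neighbor. Since $v$ is available with price at most $g(t_v,p_v)$ (its level only grows up to $p_v$ during this phase), each step yields gain at least $(1-g(t_v,p_v))\diff x$, giving $(1-p_u)(1-g(t_v,p_v))$. Summing all the pieces yields \eqref{eq:gain}.

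The main obstacle is the arrival-phase claim $\alpha_v\ge t_v f(t_v)$, which requires the monotonicity of the minimum neighbor price during the eager loop, including the case where some neighbor's level reaches $1$. I would handle this by noting that prices never decrease under matching, so the final stopping threshold dominates all earlier prices paid.
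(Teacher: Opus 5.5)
Your proposal is correct and follows essentially the same route as the paper's proof. Both bound the eager gain by $t_v f(t_v)$ using that the cheapest neighbor price is nondecreasing during the arrival loop together with continuity of $f$, accumulate passive gain at rate $g(t_v,x_v)$ with $t_v$ frozen, and bound the deadline gain using $v$'s price $g(t_v,p_v)$. Your explicit treatment of the case $p_v=1$ is a detail the paper leaves implicit.
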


\begin{proof}
Upon the arrival of vertex $v$, it fractionally matches the neighbor $z$ with the lowest price as long as $f(x_v) + g(t_z, x_z) \le 1$. Hence, during the process, the marginal gain of $v$ is $\diff \alpha_v = (1-g(t_z, x_z))\,\diff x \ge f(t_v)\,\diff x$. Here, the inequality holds since the utility $1-g(t_z,x_z)$ decreases during the process as the cheapest price increases. We use the fact that $f$ is a continuous function.

Thus, $\alpha_v \ge t_v \cdot f(t_v)$ after $v$'s arrival. As $v$ is later matched and $x_v$ increases from $t_v$ to $p_v$, $\alpha_v$ increases at the rate of $g(t_v, x_v)$. Hence, $\alpha_v \ge t_v \cdot f(t_v) + \int_{t_v}^{p_v}g(t_v, x_v)dx_v$ right after $u$'s deadline.

Similarly, $\alpha_u \ge t_u \cdot f(t_u) + \int_{t_u}^{p_u}g(t_u, x_u)dx_u$ right \emph{before} $u$'s deadline.
At the deadline of $u$, $u$ actively matches the neighbor with the lowest price, which is at most $g(t_v, p_v)$. Therefore, after the deadline of $u$, $\alpha_u \ge t_u \cdot f(t_u) + \int_{t_u}^{p_u}g(t_u, x_u)dx_u + (1-p_u) \cdot (1-g(t_v, p_v))$.

Summing up the total gains of $\alpha_u, \alpha_v$ gives the claimed bound.
\end{proof}

\subsection{Reformulating the Lower Bound}
For $0\leq\tau\leq\theta\leq1$, let $h(\tau,\theta)$ be the unique
$x\in[f^{-1}(\tau),1]$ such that
$
g(f^{-1}(\tau),x)=\theta.
$
Then $h(\tau,\tau)=f^{-1}(\tau)$ and $h(\tau,1)=1$.

% To facilitate the analysis, we define the following function $h: [0,1]^2 \to [0,1]$ that is in one-to-one correspondence with $g$:
% \[
% h(\tau,\theta) = \inf \{x: g(f^{-1}(\tau), x) \ge \theta\}.
% \]
% Our function $g$ is defined only on $(t, x) \in [0,1]^2$ with $x \ge t$, and $f(x) = g(x,x)$ for all $x \in [0,1]$.
% We remark that $h$ is non-decreasing in the second dimension, and $h(\tau,\tau)=f^{-1}(\tau)$.

\begin{lemma} \label{lemma:int_g_to_int_h_2D}
    For any $0 \le \tau \le \theta \le 1$, we have
    %\zhihao{Seems we need that $g(\alpha, h(\alpha, \alpha)) = \alpha$}
    \[
    \int_{h(\tau,\tau)}^{h(\tau,\theta)} g(f^{-1}(\tau),x) dx = \theta \cdot h(\tau, \theta) - \tau \cdot h(\tau, \tau) - \int_{\tau}^{\theta} h(\tau, y) dy.
    \]
\end{lemma}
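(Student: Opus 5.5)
This is the two-dimensional analogue of the one-dimensional identity $\int_0^p g(x)\,\dd x = p\,g(p) - \int_0^{g(p)} h(y)\,\dd y$ used in \Cref{sec:ewf_analysis}. I would prove it the same way: as the standard ``area under a monotone curve plus area under its inverse'' (Young-type) identity, applied to the one-variable function $\phi(x) \eqdef g(f^{-1}(\tau),x)$ on the interval $[f^{-1}(\tau),1]$, with $\tau$ held fixed.

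\textbf{Step 1: identify the inverse.} Fix $\tau$ and set $t \eqdef f^{-1}(\tau)$. By the assumptions on $g$, the map $\phi(x)=g(t,x)$ is continuous and strictly increasing on $[t,1]$. Its endpoint values are $\phi(t)=g(t,t)=f(t)=\tau$ and $\phi(1)=1$. So $\phi$ is a continuous bijection from $[t,1]$ onto $[\tau,1]$. Its inverse is exactly $y\mapsto h(\tau,y)$, which is therefore continuous and strictly increasing on $[\tau,1]$. In particular $h(\tau,\tau)=t$, and $\phi(h(\tau,\theta))=\theta$ for every $\theta\in[\tau,1]$.

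\textbf{Step 2: compute the area.} Let $a=h(\tau,\tau)$ and $b=h(\tau,\theta)$, and look at the rectangle $[a,b]\times[0,\theta]$ in the $(x,y)$-plane. The region under the graph of $\phi$ over $[a,b]$ has area $\int_a^b \phi(x)\,\dd x$. The complementary part of the rectangle consists of two pieces:
\begin{itemize}
\item the strip $[a,b]\times[0,\tau]$, whose area $\tau(b-a)$ lies entirely under the graph, since $\phi\ge\tau$ on $[a,b]$;
\item the region between the graph and the line $y=\theta$, which in horizontal slices is $\{(x,y): y\in[\tau,\theta],\ a\le x\le h(\tau,y)\}$ with area $\int_\tau^\theta (h(\tau,y)-a)\,\dd y$.
\end{itemize}
Hence
\[
\int_a^b \phi(x)\,\dd x = \theta b - \tau a - \int_\tau^\theta h(\tau,y)\,\dd y ,
\]
after simplifying $\theta(b-a) - \int_\tau^\theta (h(\tau,y)-a)\,\dd y = \theta b - \theta a - \int_\tau^\theta h(\tau,y)\,\dd y + (\theta-\tau)a$. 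This is exactly the claimed identity.

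\textbf{Step 3: make it rigorous.} Without differentiability, I would justify the area computation by the Fubini/layer-cake argument: write $\int_a^b \phi(x)\,\dd x = \int_a^b \int_0^\theta \mathbf{1}[y\le \phi(x)]\,\dd y\,\dd x$, swap the order of integration, and note that for $y\in(\tau,\theta]$ the condition $y\le\phi(x)$ is equivalent to $x\ge h(\tau,y)$, by strict monotonicity. Alternatively, one can integrate by parts in Riemann–Stieltjes form.

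The only real obstacle is bookkeeping at the lower endpoint, where the ``passive'' integral starts at $h(\tau,\tau)=f^{-1}(\tau)$ rather than $0$. This is what produces the extra $-\tau\cdot h(\tau,\tau)$ term. The degenerate cases $\theta=\tau$ (both sides $0$) and $\tau=0$ (which recovers the one-dimensional identity when $g$ does not depend on $t$) serve as sanity checks.
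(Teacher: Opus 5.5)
Your proof is correct and is essentially the paper's argument: the paper also fixes $\tau$, sets $a=h(\tau,\tau)$ and $b=h(\tau,\theta)$, and applies Fubini using the superlevel-set measure $\lvert\{x\in[a,b]: g(a,x)\ge y\}\rvert = b-h(\tau,y)$ for $y\in[\tau,\theta]$, which is the complement of your region-above-the-graph computation. One wording slip: the strip $[a,b]\times[0,\tau]$ is part of the region under the graph, not part of its complement in the rectangle, but your final formula $\theta(b-a)-\int_\tau^\theta (h(\tau,y)-a)\,\mathrm{d}y$ correctly subtracts only the region above the graph.
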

\begin{proof}
Fix $\tau$, and let
\[
a=h(\tau,\tau)=f^{-1}(\tau),\qquad b=h(\tau,\theta).
\]
For every $y\in[\tau,\theta]$, by the definition of $h$ and the monotonicity of
$g$ in the second coordinate,
\[
\bigl|\{x\in[a,b]: g(a,x)\ge y\}\bigr|=b-h(\tau,y).
\]
Thus, by applying Fubini’s theorem,
\[
\begin{aligned}
\int_a^b g(a,x)\,dx
&=\tau(b-a)+\int_\tau^\theta
\bigl|\{x\in[a,b]: g(a,x)\ge y\}\bigr|\,dy \\
&=\tau(b-a)+\int_\tau^\theta (b-h(\tau,y))\,dy \\
&=\theta b-\tau a-\int_\tau^\theta h(\tau,y)\,dy .
\end{aligned}
\]
Substituting $a=h(\tau,\tau)$ and $b=h(\tau,\theta)$ yields the lemma.
\end{proof}

The following lemma studies the two cases depending on whether $u$ arrives earlier than $v$ or $v$ arrives earlier than $u$ and reformulates the lower bound~\eqref{eq:gain}.
We first introduce some notation.

Let $\tau_u = f(t_u), \tau_v = f(t_v), \theta_u = g(t_u, p_u)$ and $\theta_v = g(t_v, p_v)$.
Note that since $\tau_u = g(t_u, t_u)$ and $p_u \geq t_u$, we have $\theta_u \geq \tau_u$.
Similarly, we have $\theta_v \geq \tau_v$.
Furthermore, we have
\begin{equation*}
    h(\tau_u, \tau_u) = t_u, \quad
    h(\tau_u, \theta_u) \leq p_u, \quad
    h(\tau_v, \tau_v) = t_v, \quad \text{ and } \quad
    h(\tau_v, \theta_v) \leq p_v.
\end{equation*}

\begin{lemma}
\label{lem:uearlier}
If $u$ arrives earlier than $v$, then
\begin{align}
    \alpha_u + \alpha_v \ge \min_{\substack{\tau_u \le \theta_u \\
    1-\theta_u \le \tau_v \le \theta_v}} \bigg\{ \theta_u \cdot h(\tau_u, \theta_u) & - \int_{\tau_u}^{\theta_u} h(\tau_u, y) dy \nonumber \\
    & + \theta_v \cdot h(\tau_v, \theta_v) - \int_{\tau_v}^{\theta_v} h(\tau_v, y) dy + (1-h(\tau_u, \theta_u)) \cdot (1-\theta_v) \bigg\}.
\label{eq:uearlier}
\end{align}
% \begin{multline}
% \label{eq:uearlier}
% \alpha_u + \alpha_v \ge \min_{\substack{\tau_u \le \theta_u \\ 1-\theta_u \le \tau_v \le \theta_v}} \bigg\{ \theta_u \cdot h(\tau_u, \theta_u) - \int_{\tau_u}^{\theta_u} h(\tau_u, y) dy \\
% + \theta_v \cdot h(\tau_v, \theta_v) - \int_{\tau_v}^{\theta_v} h(\tau_v, y) dy + (1-h(\tau_u, \theta_u)) \cdot (1-\theta_v) \bigg\}
% \end{multline}
\end{lemma}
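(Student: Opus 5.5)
Starting from \eqref{eq:gain}, I will rewrite every term with the reparametrization $\tau_u=f(t_u)$, $\theta_u=g(t_u,p_u)$, $\tau_v=f(t_v)$, $\theta_v=g(t_v,p_v)$ and with $h$. By the definition of $h$ and the strict monotonicity of $g(t,\cdot)$, we have $h(\tau_u,\theta_u)=p_u$ and $h(\tau_v,\theta_v)=p_v$, so those inequalities listed before the lemma are in fact equalities. Lemma~\ref{lemma:int_g_to_int_h_2D} with $\tau=\tau_u,\theta=\theta_u$ then gives
\[
t_u f(t_u)+\int_{t_u}^{p_u} g(t_u,x)\,dx=\tau_u h(\tau_u,\tau_u)+\theta_u h(\tau_u,\theta_u)-\tau_u h(\tau_u,\tau_u)-\int_{\tau_u}^{\theta_u}h(\tau_u,y)\,dy,
\]
because $t_u=h(\tau_u,\tau_u)$. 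The $\tau_u h(\tau_u,\tau_u)$ terms cancel, which leaves $\theta_u h(\tau_u,\theta_u)-\int_{\tau_u}^{\theta_u}h(\tau_u,y)\,dy$. The $v$-part transforms in exactly the same way. The cross term $(1-p_u)(1-g(t_v,p_v))$ becomes $(1-h(\tau_u,\theta_u))(1-\theta_v)$. Hence the right-hand side of \eqref{eq:gain} equals the bracketed expression in \eqref{eq:uearlier} evaluated at the actual quadruple $(\tau_u,\theta_u,\tau_v,\theta_v)$.

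Next I need to show that this quadruple lies in the feasible region of the minimum. The constraints $\tau_u\le\theta_u$ and $\tau_v\le\theta_v$ hold because $g(t,\cdot)$ is increasing and $g(t,t)=f(t)$. The key constraint is $\tau_v\ge 1-\theta_u$.

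To get it, consider $v$'s arrival. Since $u$ arrived earlier and has a later... rather, since $u$'s deadline is after $v$'s arrival (they are adjacent and $u$ arrived first), $u$ is an available neighbor of $v$ throughout $v$'s eager phase. That phase stops only when $f(x_v)+\min_w g(t_w,x_w)>1$, or in the limit with equality, so at its end $f(t_v)+g(t_u,x_u)\ge 1$, where $x_u$ is $u$'s water level at that moment. Water levels only increase, and $v$'s arrival happens before $u$'s deadline, so $x_u\le p_u$. Monotonicity of $g(t_u,\cdot)$ then gives $g(t_u,x_u)\le\theta_u$, and therefore $\tau_v\ge 1-\theta_u$.

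One edge case remains: the eager loop could end for another reason, for instance $x_v$ reaching $1$. In that case $\tau_v=f(1)=1\ge 1-\theta_u$ trivially. The strict-versus-weak inequality issue is handled by the same continuity/limiting remark the paper uses for \ewf.

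Once feasibility holds, the bound on $\alpha_u+\alpha_v$ is at least the minimum over the region, which proves \eqref{eq:uearlier}. No minimization over $p_u$ is needed here, unlike in Lemma~\ref{lemma:v-arrive-earlier}: the bound keeps all four free parameters. I expect the main obstacle to be the feasibility constraint, specifically making sure $u$ is genuinely available and adjacent to $v$ during $v$'s arrival phase, and that the stopping condition compares against the minimum price, which is at most $u$'s price. The algebraic cancellation is routine given Lemma~\ref{lemma:int_g_to_int_h_2D}.
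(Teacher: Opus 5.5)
Your proposal is correct and follows the paper's proof. You rewrite both halves of \eqref{eq:gain} via Lemma~\ref{lemma:int_g_to_int_h_2D}, and you get the constraint $\tau_v\ge 1-\theta_u$ from the stopping rule of $v$'s eager phase, using that the minimum price there is at most $u$'s price $g(t_u,x_u)\le g(t_u,p_u)=\theta_u$. Your observation that $h(\tau_u,\theta_u)=p_u$ and $h(\tau_v,\theta_v)=p_v$ hold with equality (by strict monotonicity of $g(t,\cdot)$) is right, and it simply makes the paper's monotonicity-in-$p_u$ step redundant.
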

\begin{proof}
% Let $\tau_u = f(t_u), \tau_v= f(t_v), \theta_u = g(t_u, p_u)$ and $\theta_v = g(t_v, p_v)$.
%By definition, we have $p_u \ge h(\alpha_u,\rho_u)$ and $p_v \ge h(\alpha_v, \rho_v)$.
With the notion of $h$, we rewrite Equation~\eqref{eq:gain}.
Notice that
\begin{align}
    t_v \cdot f(t_v) + \int_{t_v}^{p_v} g(t_v, x_v) dx_v & = \tau_v \cdot h(\tau_v,\tau_v) + \int_{h(\tau_v,\tau_v)}^{p_v} g(f^{-1}(\tau_v), x_v) dx_v \nonumber \\
    & \ge \tau_v \cdot h(\tau_v,\tau_v) + \int_{h(\tau_v,\tau_v)}^{h(\tau_v, \theta_v)} g(f^{-1}(\tau_v), x_v) dx_v \nonumber \\
    & = \theta_v \cdot h(\tau_v, \theta_v) - \int_{\tau_v}^{\theta_v} h(\tau_v,y) dy,
    \label{eq:uearlier_gainv}
\end{align}
% \begin{multline}
% \label{eq:uearlier_gainv}
% t_v \cdot f(t_v) + \int_{t_v}^{p_v} g(t_v, x_v) dx_v = \tau_v \cdot h(\tau_v,\tau_v) + \int_{h(\tau_v,\tau_v)}^{p_v} g(f^{-1}(\tau_v), x_v) dx_v \\
% \ge \tau_v \cdot h(\tau_v,\tau_v) + \int_{h(\tau_v,\tau_v)}^{h(\tau_v, \theta_v)} g(f^{-1}(\tau_v), x_v) dx_v = \theta_v \cdot h(\tau_v, \theta_v) - \int_{\tau_v}^{\theta_v} h(\tau_v,y_v) dy_v,
% \end{multline}
where the inequality follows from the fact that $p_v \ge h(\tau_v,\theta_v)$ and the last equality follows from Lemma~\ref{lemma:int_g_to_int_h_2D}.

By the monotonicity of $g$ in its second argument, the RHS of
Equation~\eqref{eq:gain} is non-decreasing in $p_u$. We obtain:
\[
g(t_u, p_u) - 1 + g(t_v, p_v) \ge g(t_u, p_u) - 1 + f(t_v) \ge 0,
\]
where the last inequality holds due to the stopping condition of our algorithm on $v$'s arrival.
Consequently, the RHS of Equation~\eqref{eq:gain} achieves its minimum when $p_u = h(\tau_u,\theta_u)$.
Therefore, we have
\begin{align}
&\phantom{\ge} t_u \cdot f(t_u) + \int_{t_u}^{p_u} g(t_u, x_u) dx_u + (1-p_u) \cdot (1-g(t_v,p_v)) \notag \\
& \ge \tau_u \cdot h(\tau_u,\tau_u) + \int_{h(\tau_u,\tau_u)}^{h(\tau_u,\theta_u)} g(f^{-1}(\tau_u),x_u) dx_u + (1-h(\tau_u,\theta_u)) \cdot (1-\theta_v) \notag \\
& = \theta_u \cdot h(\tau_u, \theta_u) - \int_{\tau_u}^{\theta_u} h(\tau_u, y) dy + (1-h(\tau_u, \theta_u)) \cdot (1-\theta_v), \label{eq:uearlier_gainu}
\end{align}
where the last equality follows by Lemma~\ref{lemma:int_g_to_int_h_2D}.
We conclude the proof by summing up \eqref{eq:uearlier_gainv} and \eqref{eq:uearlier_gainu} and taking the minimum over all possible values of $\tau_u,\theta_u,\tau_v$ and $\theta_v$. Note that since $u$ arrives earlier than $v$, we must have $\tau_v = f(t_v) \geq 1-g(t_u,p_u) = 1-\theta_u$, by the design of our algorithm.
\end{proof}

\begin{lemma}
\label{lem:vearlier}
If $v$ arrives earlier than $u$, then
\begin{equation}
\label{eq:vearlier}
	\alpha_u + \alpha_v \ge \min_{\substack{\tau_v \le \theta_v}} \left\{ \theta_v \cdot h(\tau_v, \theta_v) - \int_{\tau_v}^{\theta_v} h(\tau_v, y) dy + 1 - \theta_v \right\}.
\end{equation}
\end{lemma}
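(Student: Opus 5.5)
We follow the template of Lemma~\ref{lemma:v-arrive-earlier}, starting from \eqref{eq:gain}. The two key facts are a monotonicity in $p_u$ and the arrival-time stopping condition for $u$.

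\textbf{Step 1: rewrite $v$'s gain.} Exactly as in \eqref{eq:uearlier_gainv}, we use $p_v\ge h(\tau_v,\theta_v)$ and Lemma~\ref{lemma:int_g_to_int_h_2D}. This gives
\[
t_v f(t_v)+\int_{t_v}^{p_v} g(t_v,x)\,dx \;\ge\; \theta_v h(\tau_v,\theta_v)-\int_{\tau_v}^{\theta_v}h(\tau_v,y)\,dy .
\]
This step does not depend on the arrival order.

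\textbf{Step 2: the stopping condition for $u$.} Since $v$ arrives earlier, $v$ is an available neighbor of $u$ throughout $u$'s arrival phase. When $u$'s eager phase stops, we have $f(t_u)+\min_w g(t_w,x_w)>1$, and in particular $f(t_u)+g(t_v,x_v)>1$ at that moment. Water levels only increase, so $x_v\le p_v$ afterwards. For fixed $t_v$, the map $x\mapsto g(t_v,x)$ is increasing, so $f(t_u)+g(t_v,p_v)>1$, i.e.\ $\tau_u>1-\theta_v$. Taking limits at the boundary as in the one-dimensional analysis, we may use $\tau_u\ge 1-\theta_v$.

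\textbf{Step 3: $u$'s gain is minimized at $p_u=t_u$.} Consider the $u$-part of \eqref{eq:gain} as a function of $p_u\ge t_u$:
\[
t_u f(t_u)+\int_{t_u}^{p_u} g(t_u,x)\,dx+(1-p_u)(1-\theta_v).
\]
Its derivative in $p_u$ is $g(t_u,p_u)+\theta_v-1$. Since $g(t_u,p_u)\ge g(t_u,t_u)=f(t_u)=\tau_u$, this derivative is at least $\tau_u+\theta_v-1\ge 0$. Hence the expression is nondecreasing in $p_u$, and its minimum is at $p_u=t_u$, where it equals
\[
t_u\tau_u+(1-t_u)(1-\theta_v).
\]
This is linear in $t_u$ with slope $\tau_u+\theta_v-1\ge 0$. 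I expect this to be the only delicate point: we cannot drop $t_u$ outright, because the slope sign ties $t_u$ to $\tau_u$. To handle it, bound the linear expression by its value at $t_u=0$, which is $1-\theta_v$. This is legitimate because the slope is nonnegative for every admissible $t_u$ (equivalently, use $t_u\tau_u\ge t_u(1-\theta_v)$). So the $u$-part is at least $1-\theta_v$. Note that Step 3 also relies on the bound $1-g(t_v,p_v)$ on $u$'s per-unit active gain at its deadline. This holds because $v$ is an available neighbor with price $g(t_v,p_v)$, and prices only rise during that phase. When $p_v=1$ we have $1-\theta_v=0$, so the bound is trivial.

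\textbf{Step 4: combine.} Adding the bounds from Steps 1 and 3 gives
\[
\alpha_u+\alpha_v\ \ge\ \theta_v h(\tau_v,\theta_v)-\int_{\tau_v}^{\theta_v}h(\tau_v,y)\,dy+1-\theta_v .
\]
The pair $(\tau_v,\theta_v)$ satisfies $\tau_v\le\theta_v$. Taking the minimum over all such pairs yields \eqref{eq:vearlier}.
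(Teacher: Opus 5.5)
Your proof is correct and follows essentially the same route as the paper. The steps match: you show the $u$-part of \eqref{eq:gain} is nondecreasing in $p_u$ using the arrival-time stopping condition $f(t_u)+g(t_v,p_v)\ge 1$, evaluate it at $p_u=t_u$, use $t_u f(t_u)\ge t_u(1-\theta_v)$ to reduce it to $1-\theta_v$, and apply Lemma~\ref{lemma:int_g_to_int_h_2D} to $v$'s part. Two differences are cosmetic: you differentiate in $p_u$, which is valid because $g(t_u,\cdot)$ is continuous, where the paper compares integrals directly; and the ``linear in $t_u$'' framing is unnecessary since $\tau_u=f(t_u)$ changes with $t_u$, so your parenthetical pointwise inequality is the actual argument.
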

\begin{proof}
% Let $\tau_u = f(t_u), \tau_v = f(t_v), \theta_u = g(t_u,p_u)$ and $\theta_v = g(t_v,p_v)$. Let $G$ be the RHS of Equation~\eqref{eq:gain}.
For any $p'_u\geq p_u$, the change in the relevant part of the
right-hand side of Equation~\eqref{eq:gain} is
\[
\int_{p_u}^{p'_u}
\bigl(g(t_u,x)-1+g(t_v,p_v)\bigr)\dd x\geq0.
\]
Indeed, $g(t_u,x)\geq g(t_u,t_u)=f(t_u)$, and the stopping condition
at $u$'s arrival gives
$f(t_u)-1+g(t_v,p_v)\geq0$.  Hence the expression is non-decreasing
in $p_u$, and its minimum is attained at $p_u=t_u$.  This argument
does not require differentiability of $g$. Therefore, we have
\begin{align*}
\alpha_u+\alpha_v & \ge t_v \cdot f(t_v) + \int_{t_v}^{p_v} g(t_v, x_v) dx_v + t_u \cdot f(t_u) + (1-t_u) \cdot (1-g(t_v, p_v)) \\
& \ge \tau_v \cdot h(\tau_v, \tau_v) + \int_{h(\tau_v,\tau_v)}^{h(\tau_v, \theta_v)} g(f^{-1}(\tau_v), x_v) dx_v +(1-\theta_v) \\
& = \theta_v \cdot h(\tau_v, \theta_v) - \int_{\tau_v}^{\theta_v} h(\tau_v, y) dy + 1-\theta_v,
\end{align*}
where the second inequality holds by the fact that $f(t_u) \ge 1-g(t_v,p_v) = 1-\theta_v$ and the last equality follows by Lemma~\ref{lemma:int_g_to_int_h_2D}.
Finally, taking the minimum over all possible values of $\tau_v,\theta_v$ concludes the proof.
\end{proof}

Let $\Phi_1, \Phi_2$ denote the two lower bounds derived from the above lemmas. Formally,
\begin{align*}
    &\Phi_1(\tau_v,\theta_v) \eqdef \theta_v \cdot h(\tau_v, \theta_v) - \int_{\tau_v}^{\theta_v} h(\tau_v, y) dy + 1-\theta_v; \\
    &\Phi_2(\tau_u,\theta_u,\tau_v,\theta_v) \eqdef \left(
    \begin{aligned}
        &\theta_u \cdot h(\tau_u, \theta_u) - \int_{\tau_u}^{\theta_u} h(\tau_u, y) dy + \theta_v \cdot h(\tau_v, \theta_v)\\
        & - \int_{\tau_v}^{\theta_v} h(\tau_v, y) dy + (1-h(\tau_u, \theta_u)) \cdot (1-\theta_v)
    \end{aligned}
    \right).
\end{align*}
Finally, we use the folklore factor-revealing LP techniques to optimize the function $h$. A detailed implementation of the factor-revealing LP is provided in Appendix~\ref{app:factor_lp}.
\begin{lemma}
\label{lem:opth}
There exists a continuous function $h:[0,1]^2\to[0,1]$ such that, for
every $\tau\in[0,1]$, the function
$\theta\mapsto h(\tau,\theta)$ is strictly increasing,
$h(\tau,1)=1$, and the diagonal function
$\tau\mapsto h(\tau,\tau)$ is a strictly increasing bijection from
$[0,1]$ to itself.  Moreover, for $\Gamma=0.599$,
\begin{align}
	&\Phi_1(\tau_v,\theta_v)\geq\Gamma 
    &&\text{for all }0\leq\tau_v\leq\theta_v\leq1,
    \label{eqn:h-feasible-1} \\
    &\Phi_2(\tau_u,\theta_u,\tau_v,\theta_v)\geq\Gamma
    &&\text{for all }0\leq\tau_u\leq\theta_u\leq1
    \text{ and }1-\theta_u\leq\tau_v\leq\theta_v\leq1.
    \label{eqn:h-feasible-2}
\end{align}
\end{lemma}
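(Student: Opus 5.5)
The statement is existential and numerical, so the plan is to exhibit an explicit $h$, found by a factor-revealing LP on a discretization, and then certify that it satisfies \eqref{eqn:h-feasible-1} and \eqref{eqn:h-feasible-2} over the whole continuous domain. The first step is to fix a grid $0=y_0<y_1<\dots<y_n=1$. We represent $h$ through its grid values $h_{i,j}\approx h(y_i,y_j)$ for $i\le j$, and extend it to all of $\{0\le\tau\le\theta\le1\}$ by a monotone interpolation, for instance piecewise constant in $\tau$ and piecewise linear in $\theta$. Since the region $\tau>\theta$ is never used, we extend $h$ there by any continuous completion, say $h(\tau,\theta)=h(\tau,\tau)$ plus a small strictly increasing term. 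The structural requirements become linear constraints on the grid values:
\begin{itemize}
\item $h_{i,j}<h_{i,j+1}$, enforced with a slack $\varepsilon$ to obtain strict monotonicity;
\item $h_{i,n}=1$;
\item $h_{i,i}$ strictly increasing in $i$, with $h_{0,0}=0$, which makes the diagonal a bijection.
\end{itemize}

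The central observation is that for fixed $(\tau,\theta)$ arguments, both $\Phi_1$ and $\Phi_2$ are linear in the values of $h$. The only nonlinear term is $(1-h(\tau_u,\theta_u))(1-\theta_v)$, and this is linear in $h$ once $\theta_v$ is fixed. Hence imposing $\Phi_1\ge\Gamma$ and $\Phi_2\ge\Gamma$ at every grid tuple, with the integrals replaced by Riemann sums, gives an LP in the variables $h_{i,j}$ and $\Gamma$. We maximize $\Gamma$ and expect the optimum to be slightly above $0.599$ for moderate $n$ (a few hundred). As a sanity check, the \ewf solution of Lemma~\ref{lemma:lower-bound-ratio-ewf}, obtained by taking $h(\tau,\theta)$ to depend only on $\theta$, should be feasible with value $0.592$.

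The main obstacle is passing from grid feasibility to feasibility over the continuum. Two tools handle this. First, we choose the Riemann sums so that they lower-bound the true integrals. Because $h$ is increasing in $y$ and enters with a minus sign, $-\int_{\tau}^{\theta}h\,dy$ should be bounded below using the right-endpoint values. Second, for off-grid arguments we round each variable in the direction that decreases $\Phi$ and bound the resulting error with monotonicity:
\begin{itemize}
\item In $\theta_v$, the partial derivative of $\Phi_1$ is $\theta_v\partial_\theta h-(1-0)\cdot1+\dots$, which is bounded, so $\Phi_1$ has a Lipschitz bound.
\item In $\tau$, the term $-\int_\tau^\theta h$ is monotone.
\end{itemize}
The resulting discretization error is $O(1/n)$ times a Lipschitz constant of $h$. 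We therefore solve the LP with a margin, requiring value $\Gamma+\delta$ with $\delta$ exceeding the error bound. An alternative is to also cap the slopes $h_{i,j+1}-h_{i,j}\le L/n$ inside the LP, so that the Lipschitz constant is controlled.

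Finally, the constraint $1-\theta_u\le\tau_v$ in $\Phi_2$ couples the variables. On the grid we therefore impose the inequality at all tuples with $y_{j'}\ge 1-y_j$, after rounding $1-\theta_u$ up to the next grid point. The rounded tuple then stays in the feasible region, and the error analysis above covers the gap. With the certified margin, the interpolated $h$ satisfies all stated properties for $\Gamma=0.599$.
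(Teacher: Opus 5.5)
Your overall plan (factor-revealing LP on a grid, interpolate, certify the continuum with a margin) is the paper's plan. But the extension step, which is essentially the whole proof, does not go through as you set it up. There are two concrete failures.

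\textbf{Structural properties and the $\tau$-direction.} Piecewise-constant interpolation in $\tau$ makes $h$ discontinuous, which the statement forbids. It also breaks the diagonal. On $[y_i,y_{i+1})$ the diagonal runs from $h_{i,i}$ toward $h_{i,i+1}$ and then jumps to $h_{i+1,i+1}$. So requiring $h_{i,i}$ to increase in $i$ gives neither continuity nor surjectivity onto $[0,1]$. Switching to a continuous (bilinear) interpolation changes three things:
\begin{itemize}
\item The cells along the diagonal use values whose $\tau$-index is one above the $\theta$-index. These must be LP variables, not an arbitrary completion.
\item The diagonal on a cell becomes a quadratic whose monotonicity depends on $h(y_i,y_{i+1})$ and $h(y_{i+1},y_i)$. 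This is why the paper adds its two diagonal-monotonicity constraints.
\item Your claim that the $\tau$-error is handled by monotonicity of $-\int_\tau^\theta h$ fails. That monotonicity holds only while the row $h(\tau,\cdot)$ is frozen; now the whole row moves with $\tau$. Nothing in your LP controls this movement. The paper adds the first-coordinate bound $h(y+\frac1n,\theta)-h(y,\theta)\le\frac4n$ for exactly this purpose.
\end{itemize}

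\textbf{The error budget is first order, which is decisive.} Right-endpoint Riemann sums lose up to about $\frac{1}{2n}$ per $H$-term. Lipschitz rounding of off-grid arguments loses $O(L/n)$. The paper's LP gives only about $0.5997$ at $n=200$, so the room above $0.599$ is of order $10^{-3}$. With $n$ in the hundreds your losses are of order $10^{-3}$ to $10^{-2}$. The $n$ needed to beat the slack is impractical for a four-index constraint family.

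The missing idea is that the error becomes second order if you do two things. First, use the trapezoidal rule, which is exact for the piecewise-linear interpolant. Second, instead of rounding, write each off-grid constraint as the multilinear interpolation of its corner constraints. Then $H(\tau,\theta)=\theta h(\tau,\theta)-\int_\tau^\theta h(\tau,y)\,dy$ differs from its bilinear interpolation by at most $2/n^2$, using the row and first-coordinate Lipschitz bounds. The cross term $(1-h(\tau_u,\theta_u))(1-\theta_v)$ is multi-affine and interpolates exactly. Margins of $2/n^2$ and $4/n^2$ then suffice at $n=200$.

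This requires imposing the grid constraints on a one-cell halo, $\tau\le\theta+\frac1n$ and $\tau_v\ge1-\theta_u-\frac1n$, so that every interpolation corner is constrained. Your device of rounding $1-\theta_u$ up to a grid point does not by itself keep the rounded or corner tuples inside the constrained set.
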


We finally explain how to recover the pricing functions from $h$.
Define $f$ by
\[
f^{-1}(\tau)=h(\tau,\tau).
\]
For every $0\leq t\leq x\leq1$, define $g(t,x)$ as the unique
$\theta\in[f(t),1]$ satisfying
\[
h(f(t),\theta)=x.
\]
The properties of $h$ ensure that $f$ and $g$ are well defined and
continuous.  Moreover, $g(t,t)=f(t)$, $g(t,1)=1$, and $g(t,x)$ is
strictly increasing in $x$.  Thus, the construction recovers a valid
pair of pricing functions for the algorithm.

\begin{lemma}[Formal Version of Theorem~\ref{thm:eager-history}]
	The algorithm with the functions $f,g$ constructed from $h$ as above is $\Gamma=0.599$-competitive for fractional fully online matching on general graphs.
\end{lemma}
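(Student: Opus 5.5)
The plan is to assemble the pieces already established, exactly as in the proof of the formal version of \Cref{thm:eager-wf}, via \Cref{lem:primal-dual}. The algorithm keeps a feasible primal solution. Every increment $\diff x$ on an edge $(u,v)$ raises $\alpha_u+\alpha_v$ by $(1-g(t_v,x_v))\diff x+g(t_v,x_v)\diff x=\diff x$, so (Equal Update) holds.

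Nonnegativity of each $\alpha$ follows because $g$ takes values in $[0,1]$, and $g$ does so because it is recovered from $h$ with values in $[f(t),1]\subseteq[0,1]$. What remains is approximate dual feasibility: for every edge $(u,v)$, $\alpha_u+\alpha_v\ge\Gamma$ holds at the end. Since duals only increase, it suffices to check this right after the earlier of the two deadlines; call that vertex $u$.

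First I will verify that the $f,g$ recovered from $h$ in \Cref{lem:opth} meet the hypotheses used in \eqref{eq:gain} and in \Cref{lemma:int_g_to_int_h_2D,lem:uearlier,lem:vearlier}. Concretely, I need that $f$ is a continuous increasing bijection, that $g(t,\cdot)$ is strictly increasing with $g(t,t)=f(t)$ and $g(t,1)=1$, and that the $h$ derived from $(f,g)$ in the reformulation section coincides with the given $h$. The last point holds because $g(f^{-1}(\tau),x)=\theta$ if and only if $h(\tau,\theta)=x$ by construction, and the diagonal condition $h(\tau,\tau)=f^{-1}(\tau)$ matches the definition of $f$.

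Next I split on the arrival order. If $u$ arrives before $v$, \Cref{lem:uearlier} bounds $\alpha_u+\alpha_v$ below by the minimum of $\Phi_2$ over the region $\tau_u\le\theta_u$, $1-\theta_u\le\tau_v\le\theta_v$, and \eqref{eqn:h-feasible-2} makes this at least $\Gamma$. If $v$ arrives first, \Cref{lem:vearlier} together with \eqref{eqn:h-feasible-1} gives the same bound.

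I must also handle the degenerate configurations that the lemmas treat implicitly.
\begin{itemize}
\item If $p_v=1$, the term $(1-p_u)(1-\theta_v)$ vanishes, consistent with $\theta_v=1$.
\item If the strict stopping conditions are only approached in the limit, the continuity argument from the \ewf reformulation carries over.
\item If the eager step terminates because no neighbor is available, the constraint $\tau_v\ge 1-\theta_u$ need not hold. But then $u$ was not available at $v$'s arrival, contradicting the existence of the edge with overlapping windows. The same reasoning applies with the roles of $u$ and $v$ swapped.
\end{itemize}

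The main obstacle is not in this assembly but in \Cref{lem:opth} itself, namely certifying that a feasible $h$ exists with $\Gamma=0.599$. Since the paper states that lemma before the present one, I take it as given. The only delicate point left is the recovery step: $f$ and $g$ must be well defined and continuous, which needs the strict monotonicity and the boundary conditions on $h$. Once that is confirmed, \Cref{lem:primal-dual} yields $0.599$-competitiveness.
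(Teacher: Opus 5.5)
Your proposal is correct and follows the paper's proof. The paper gets (Equal Update) from splitting each increment between the two endpoints. It then handles the two arrival-order cases with \Cref{lem:uearlier} and \Cref{lem:vearlier}, combined with \eqref{eqn:h-feasible-2} and \eqref{eqn:h-feasible-1} from \Cref{lem:opth}, and concludes with \Cref{lem:primal-dual}. The paper leaves your extra checks implicit (nonnegativity of $\alpha$, agreement of the $h$ induced by the recovered $f,g$ with the given $h$, and the degenerate stopping cases), and they are sound.
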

\begin{proof}
We conclude the competitive ratio of our algorithm by putting the lemmas together. Fix an arbitrary pair of neighbors $(u,v)$ with $u$'s deadline before $v$'s deadline. Approximate dual feasibility follows by the two cases. If $u$ arrives earlier than $v$,
\[
\alpha_u + \alpha_v \ge \min_{\substack{0\le\tau_u \le \theta_u\le 1 \\ 1-\theta_u \le \tau_v \le \theta_v}} \Phi_2(\tau_u,\theta_u,\tau_v,\theta_v) \ge 0.599,
\]
where the first inequality follows from Lemma~\ref{lem:uearlier} and the second inequality follows from Lemma~\ref{lem:opth}.
If $v$ arrives earlier than $u$,
\[
\alpha_u + \alpha_v \ge \min_{0\le\tau_v\le\theta_v\le 1} \Phi_1(\tau_v,\theta_v) \ge 0.599,
\]
where the first inequality follows from Lemma~\ref{lem:vearlier} and the second inequality follows from Lemma~\ref{lem:opth}.
Each primal increment is split between its two endpoints in the dual
update, so the Equal Update property holds.  Together with the
approximate dual feasibility proved above, Lemma~\ref{lem:primal-dual} completes the proof.
\end{proof}

\section{Improved Hardness for Fractional Fully Online Matching}
\label{sec:fully_upper}

This section shows an improved upper bound for the fractional fully online matching problem. Previously, the best upper bound was $0.6297$ by \citet{orl/EcklKLS21}.

\fullyUpper*

{
% \color{red}
\paragraph{Hard Instance.} Our construction includes $(2\ell+1)n$ vertices divided into $\ell$ identical groups and a terminal buffer. Each group consists of $4$ parts, denoted by $A_k, B_k, C_k$, and $D_k$, with $\alpha\cdot n$, $\alpha \cdot n$, $(1-\alpha) \cdot n$, and $(1-\alpha) \cdot n$ vertices respectively, where $\alpha$ is a constant to be fixed later. The terminal buffer consists of two additional sets $B_{\ell+1}$ and $C_{\ell+1}$, containing $\alpha n$ and $(1-\alpha)n$ vertices, respectively.
}

% \paragraph{Hard Instance.} Our construction includes $2n \cdot \ell$ vertices divided into $\ell$ identical groups. Each group consists of $4$ parts, denoted by $A_k, B_k, C_k$, and $D_k$, with $\alpha\cdot n$, $\alpha \cdot n$, $(1-\alpha) \cdot n$, and $(1-\alpha) \cdot n$ vertices respectively, where $\alpha$ is a constant to be fixed later.
% 
We first describe the structure of the underlying graph and then define the arrivals and departures of the vertices. (Refer to Figure~\ref{fig:upper}.)

\begin{itemize}
    \item There is an upper triangle between $A_k$ and $B_k$. Let $A_k = \{a_{ki}\}_{i}$ and $B_k = \{b_{ki}\}_{i}$. The $i$-th vertex $a_{ki}$ in $A_k$ is connected to every $b_{kj}$ for $j\geq i$.
    \item There is a complete bipartite graph between $A_k$ and $C_k$.
    \item There is a complete bipartite graph between $C_k$ and $D_k \cup B_{k+1} \cup C_{k+1}$.
\end{itemize}

%\begin{itemize}
%    \item $A_k$ is connected to $B_k$ and $C_k$ as an upper triangle. That means, the $i$-th vertex in $A_k$ (i.e., $a_{ki}$) is connected to all $j$-th vertices in $B_k$ (i.e., $b_{kj}$) such that $j\geq i$, and it is connected to all vertices in $C_k$.
%    \item $C_k$ is completed connected to $D_k$, $B_{k+1}$ and $C_{k+1}$. That means, each vertex in $A_k$ is adjacent to every vertex in $D_k$, $B_{k+1}$ and $C_{k+1}$.
%\end{itemize}
\begin{figure}[ht]
    \center
    \includegraphics[scale = 0.4]{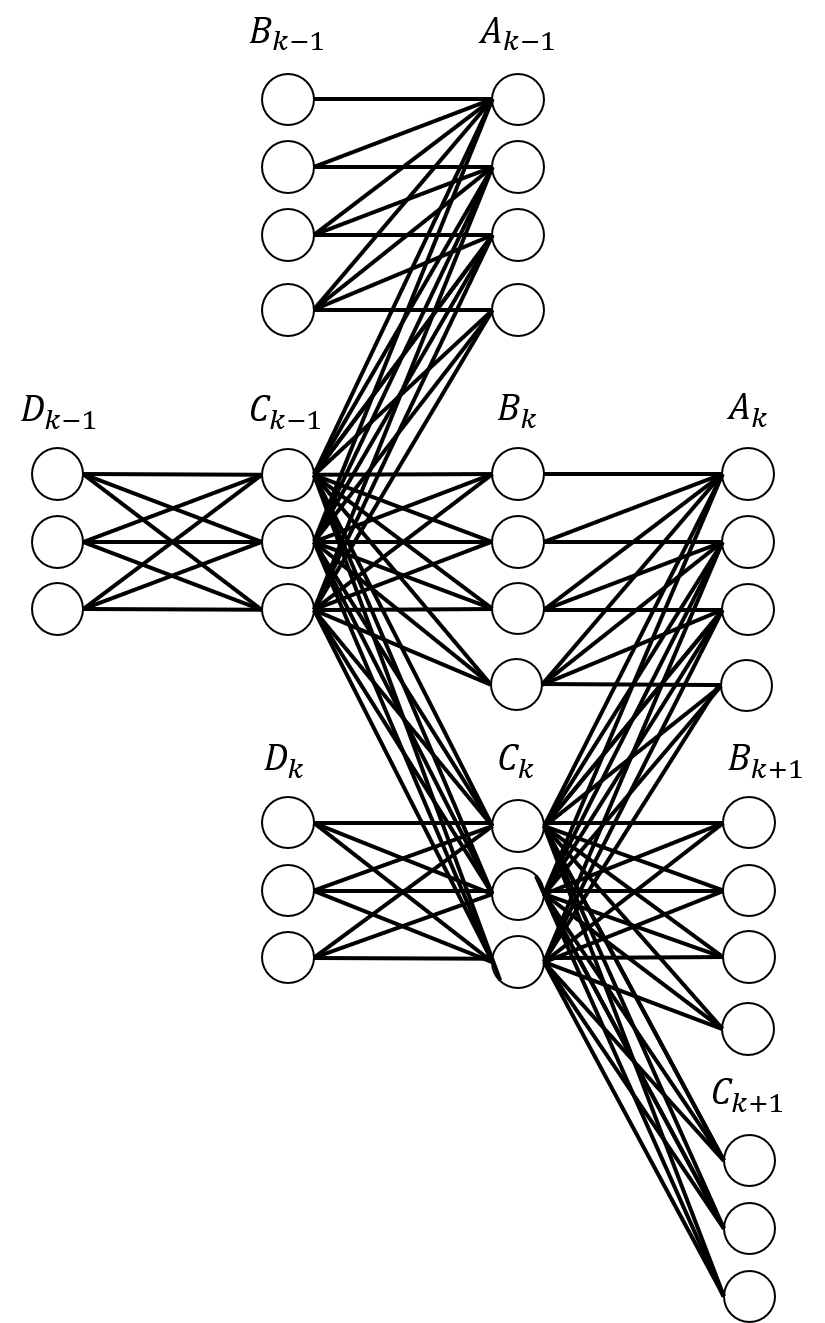}
    \caption{An illustration when $\alpha \cdot n =4$ and $(1-\alpha)\cdot n=3$.}
    \Description{A layered bipartite hard-instance gadget with $\alpha n=4$ and $(1-\alpha)n=3$. For each stage $k$, $A_k$ is connected triangularly to $B_k$ and completely to $C_k$, while $C_k$ is connected completely to $D_k$, $B_{k+1}$, and $C_{k+1}$.}
    \label{fig:upper}
\end{figure}

Next, we define the arrivals and deadlines of the vertices. Let there be $\ell$ stages. At the end of the $(k-1)$-th stage, all vertices in $\{A_i\}_{i \in [k-1]}$, $\{B_i\}_{i \in [k]}$, $\{C_i\}_{i \in [k]}$, $\{D_i\}_{i \in [k-1]}$ have arrived. Furthermore, all vertices in $\{A_i\}_{i \in [k-1]}$, $\{B_i\}_{i \in [k-1]}$, $\{C_i\}_{i \in [k-1]}$, $\{D_i\}_{i \in [k-1]}$ have left. In other words, the only remaining available vertices in the graph are those in $B_k \cup C_k$. (Refer to Figure~\ref{fig:construction-1}.)

We shall define the arrivals and deadlines in the $k$-th stage.
\begin{enumerate}
    \item (Fig.~\ref{fig:construction-2}) The vertices of $A_{k}$ arrive and leave immediately in the order of $a_{k1}, a_{k2}, \ldots$. When $a_{ki}$ arrives and departs, its neighbors $\{b_{kj}\}_{j\ge i} \cup C_k$ are still active (i.e., they have arrived but have not departed).
    \item (Fig.~\ref{fig:construction-2}) The vertices of $B_k$ leave. No neighbor is active for vertices in $B_k$.
    \item (Fig.~\ref{fig:construction-3}) The vertices of $D_k \cup B_{k+1} \cup C_{k+1}$ arrive and the vertices of $C_k$ leave one by one. The neighbors of $c_{ki}$ in $D_k \cup B_{k+1} \cup C_{k+1}$ are still active.
    \item (Fig.~\ref{fig:construction-3}) The vertices of $D_k$ leave. No neighbor is active for vertices in $D_k$.
\end{enumerate}
After the $\ell$-th stage, the vertices in the terminal buffer $B_{\ell+1}\cup C_{\ell+1}$ leave. At this point, all their neighbors in $C_\ell$ have already left, so no additional matching is made.

%Then, we give the arriving and leaving order for these defined vertices. We remark that the types of vertices are determined adversarially over time which are related to algorithms' decision. The instance is defined by a repeated process, which is start with the arrival of the first vertex of $A_k$. We need to assume there are $n_{cd} + n_{ab}$ vertices arrived before the process (i.e., they arrive in the previous process, or arrive at the beginning if we are in the first process.), which are the candidates of $B_k$ and $C_k$. Therefore, at the end of this process, we should prepare $n_{cd} + n_{ab}$ vertices that have not reached deadline for the next process. Next, we define what happens in the process between the arrival of $A_k$ and the preparation of the $n_{cd} + n_{ab}$ candidates, also see Figure~\ref{fig:construction} as an example.
\begin{figure}[ht]
    \center
    \begin{subfigure}{.15\textwidth}
        \includegraphics[scale = 0.4]{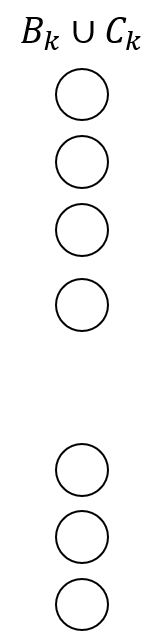}
        \caption{}
        \label{fig:construction-1}
    \end{subfigure}
    \begin{subfigure}{.15\textwidth}
        \includegraphics[scale = 0.4]{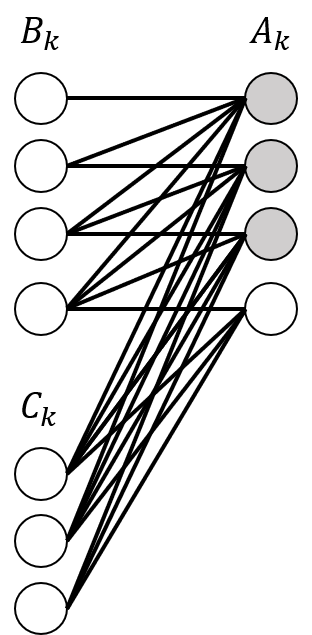}
         \caption{}
        \label{fig:construction-2}
    \end{subfigure}
    \begin{subfigure}{.3\textwidth}
        \includegraphics[scale = 0.4]{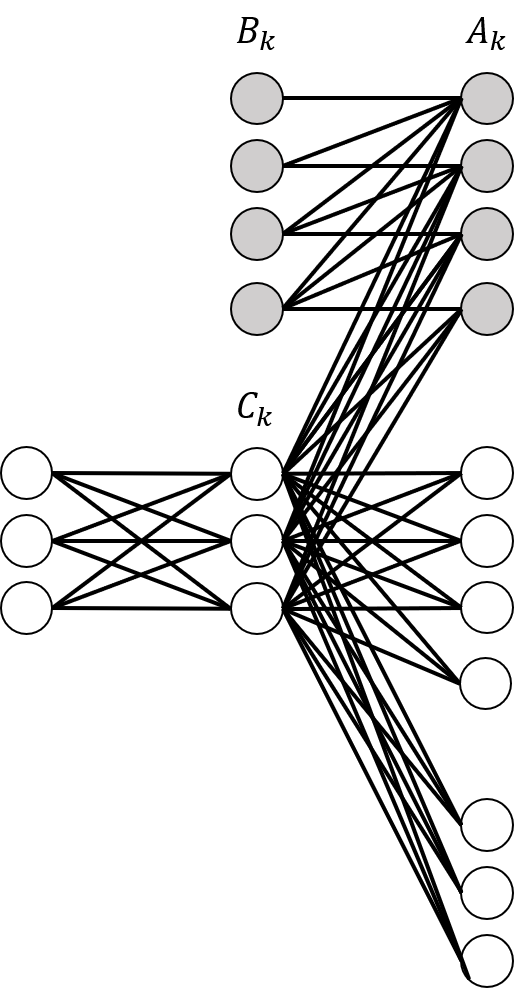}
        \caption{}
        \label{fig:construction-3}
    \end{subfigure}
    \begin{subfigure}{.3\textwidth}
        \includegraphics[scale = 0.4]{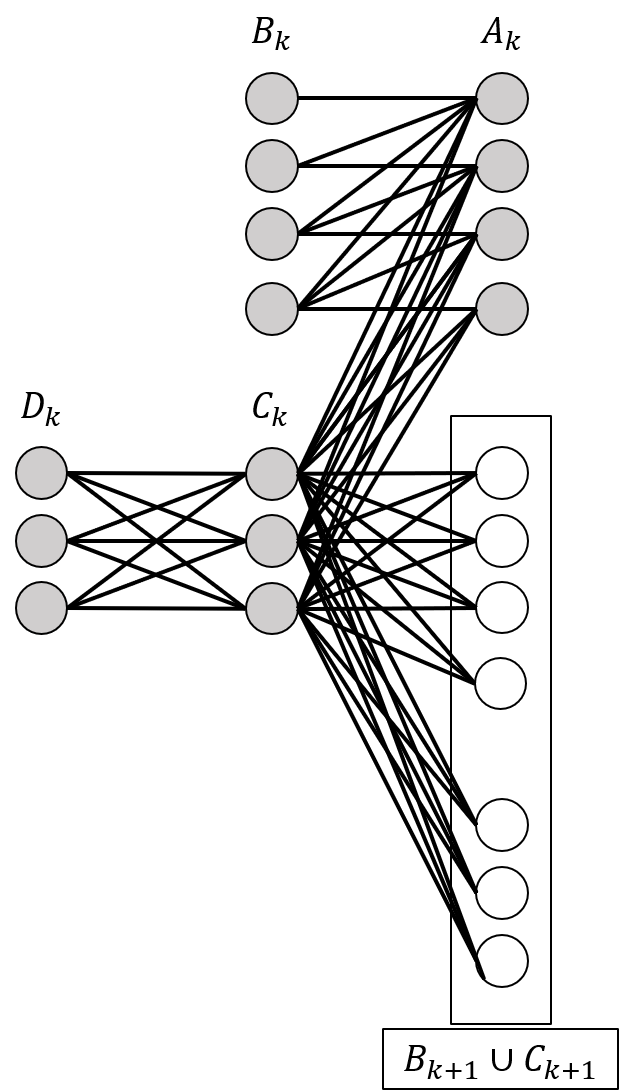}
        \caption{}
        \label{fig:construction-4}
    \end{subfigure}
    \caption{An illustration when $\alpha \cdot n =4$ and $(1-\alpha)\cdot n=3$: when the vertices turn grey, it indicates that they have already left.}
    \Description{{An illustration when $\alpha \cdot n =4$ and $(1-\alpha)\cdot n=3$: when the vertices turn grey, it indicates that they have already left.}}
    \label{fig:construction}
\end{figure}

Next, we analyze the competitive ratio.
{
% \color{red}
First, we observe that the maximum matching of the graph has size $n\ell$. Indeed, we can obtain a matching of size $n\ell$ by matching $A_k$ with $B_k$ and matching $C_k$ with $D_k$ for all $k \in [\ell]$. Conversely, $\bigcup_{k\in[\ell]}(A_k\cup C_k)$ is a vertex cover of size $n\ell$, including for the edges incident to the terminal buffer. Therefore, the maximum matching has size exactly $n\ell$.
}
% First, we observe that the maximum matching of the graph has size $n\ell$. Indeed, the graph has a perfect matching: we can match $A_k$ with $B_k$ and match $C_k$ with $D_k$ for all $k \in [\ell]$.
% 
Next, we assume that the algorithm only makes matching decisions on the departures of vertices, and that it fully matches a vertex unless there is no feasible neighbor. This assumption is without loss of generality by a simple exchange argument (e.g., see \cite{jacm/HuangKTWZZ20}).
A crucial feature of our construction is that, when a vertex reaches its deadline, its currently available neighbors are indistinguishable with respect to the unrevealed future. To formalize this symmetry, before the process begins, we independently apply a uniformly random permutation to the latent identities—or, equivalently, the future roles—of the vertices within each relevant group, revealing these roles only as they become observable through arriving edges. Consequently, conditional on any revealed history, the remaining neighbors are exchangeable. Following the symmetrization argument of~\cite{orl/EcklKLS21} and applying Yao's principle, it therefore suffices, for the purpose of analyzing expected performance, to consider the symmetric policy that distributes the departing vertex's remaining mass equally among its available neighbors; this is precisely the Water-Filling algorithm.

Based on the discussion above, we are left to calculate the performance of \wtf on the constructed instance. Since we treat all vertices in $B_k \cup C_k$ equally at the $(k-1)$-th stage, all vertices in $B_k \cup C_k$ have the same water level right before the $k$-th stage. Let $x_{k}$ be this water level. Initially, we have $x_1 = 0$. We derive the recurrence for $x_{k+1}$ in the following.
At the deadline of vertex $a_{ki}\in A_k$, it has $|\{b_{kj}\}_{j \ge i} \cup C_k| = n-i+1$ neighbors. Therefore, the increase in the water level of vertices of $C_k$ is equal to $\frac{1}{n-i+1}$ at the departure of $a_{ki}$. After the departure of the last vertex of $A_k$, the water level of vertices in $C_k$ becomes
\[
y_k = x_{k} + \frac{1}{n} + \frac{1}{n-1} + \cdots + \frac{1}{(1-\alpha)n + 1}~. 
\]
Next, at the departure of $C_k$, each vertex matches its remaining portion to all vertices in $D_k \cup B_{k+1} \cup C_{k+1}$ equally. Thus, we have
\[
x_{k+1} = (1 - y_k) \cdot \frac{1-\alpha}{2-\alpha} = \left(1 - \left(x_k + \frac{1}{n} + \frac{1}{n-1} + \cdots + \frac{1}{(1-\alpha)n+1}\right)\right) \cdot \frac{1-\alpha}{2-\alpha}~.
\]

Fix $n$ and let the number of stages $\ell$ go to infinity. The sequence $\{x_k\}$ converges to the solution $x^*$ of the following equation:
\[
x = \left(1 - \left(x + \frac{1}{n} + \frac{1}{n-1} + \cdots + \frac{1}{(1-\alpha)n+1}\right)\right) \cdot \frac{1-\alpha}{2-\alpha}~.
\]

Moreover, the total matching made in stage $k$ equals
\[
1 \cdot \alpha n + (1-y_k) \cdot (1-\alpha) n = \left(\alpha + (2-\alpha)\cdot x_{k+1}\right) \cdot n,
\]
where the term $1 \cdot \alpha n$ corresponds to the matchings made on the departures of $A_k$ and the term $(1-y_k)\cdot(1-\alpha) n$ corresponds to the matchings made on the departures of $C_k$.
% {
% \color{red}
% Hence, when $\ell \to \infty$, the ratio between the matching produced by \wtf and the optimal matching is
% \[
% \begin{aligned}
% \lim_{\ell\to\infty}
% \frac{1}{\ell}\sum_{k=1}^{\ell}
% \frac{\left(\alpha+(2-\alpha)\cdot x_{k+1}\right)n}{n}
% &=
% \lim_{\ell\to\infty}
% \frac{1}{\ell}\sum_{k=1}^{\ell}
% \left(\alpha+(2-\alpha)\cdot x_{k+1}\right)\\
% &=\alpha+(2-\alpha)\cdot x^*,
% \end{aligned}
% \]
% where the last equality follows from the convergence of $x_k$ to $x^*$.
% }
% % Hence, when $\ell \to \infty$, the ratio between the matching produced by \wtf and the optimal matching is
% % \[
% % \lim_{k \to \infty} \frac{\left(\alpha + (2-\alpha)\cdot x_{k+1}\right) \cdot n}{n} = \alpha + (2-\alpha)\cdot x^*~.
% % \]

% Finally, letting $n$ go to infinity, $x^*$ satisfies
% \[
% x^* = \left( 1-x^*+\ln(1-\alpha) \right) \cdot \frac{1-\alpha}{2-\alpha} \iff x^*= \frac{1-\alpha}{3-2\alpha} \cdot (1+\ln(1-\alpha)) ~,
% \]
% where we use the fact that
% \[
% \lim_{n\to \infty}\left( \frac{1}{n} + \frac{1}{n-1} + \cdots + \frac{1}{(1-\alpha)n+1} \right) = -\ln(1-\alpha)~.
% \]

% By optimizing $\alpha$, we conclude an upper bound of
% \[
% \min_{\alpha} \left\{ \alpha+(2-\alpha)\cdot x^* \right\} = \min_{\alpha} \left\{\alpha+\frac{(2-\alpha)(1-\alpha)}{3-2\alpha} \cdot (1+\ln(1-\alpha)) \right\} \approx 0.6132,
% \]
% when $\alpha \approx 0.43$.
{
% \color{red}
Hence, when $\ell \to \infty$, the ratio between the matching produced by \wtf and the optimal matching is
\[
\begin{aligned}
\lim_{\ell\to\infty}
\frac{1}{\ell}\sum_{k=1}^{\ell}
\frac{\left(\alpha+(2-\alpha)\cdot x_{k+1}\right)n}{n}
&=
\lim_{\ell\to\infty}
\frac{1}{\ell}\sum_{k=1}^{\ell}
\left(\alpha+(2-\alpha)\cdot x_{k+1}\right)\\
&=\alpha+(2-\alpha)\cdot x^*,
\end{aligned}
\]
where the last equality follows from the convergence of $x_k$ to $x^*$.

Finally, letting $n$ go to infinity, $x^*$ satisfies
\[
x^*
=
\left(1-x^*+\ln(1-\alpha)\right)
\cdot\frac{1-\alpha}{2-\alpha}
\iff
x^*
=
\frac{1-\alpha}{3-2\alpha}
\cdot(1+\ln(1-\alpha))~,
\]
where we use the fact that
\[
\lim_{n\to\infty}
\left(
\frac{1}{n}+\frac{1}{n-1}
+\cdots+\frac{1}{(1-\alpha)n+1}
\right)
=
-\ln(1-\alpha)~.
\]

Taking $\alpha=0.43$, we obtain
\[
\alpha+(2-\alpha)\cdot x^*
=
\alpha+
\frac{(2-\alpha)(1-\alpha)}{3-2\alpha}
\cdot(1+\ln(1-\alpha))
\approx 0.6131121
<0.6132.
\]
Since the above limits are approached by finite instances, the same strict inequality holds for sufficiently large finite $n$ (chosen as a multiple of $100$) and subsequently sufficiently large finite $\ell$. Therefore, no algorithm can achieve a competitive ratio strictly better than $0.6132$.
}

% \begin{suggestedrevision}[S1-38: Proposed conclusion to insert before the acknowledgments]
% \section{Conclusion}
% We established a $0.599$-competitive fractional algorithm for fully online
% matching on general graphs and a $0.6132$ upper bound that already holds on
% bipartite graphs.  Thus, the optimal competitive ratio lies in
% $[0.599,0.6132]$.  Our results also separate standard Water-Filling from
% algorithms that exploit inter-available edges and matching history.

% The main open problem is to close the remaining gap.  It would also be
% interesting to determine which parts of the history-based fractional policy
% can be rounded online without losing its advantage in the randomized integral
% setting.
% \end{suggestedrevision}
% \input{41eagerWTF}
% \input{42twodimfuction}
%%
%% The acknowledgments section is defined using the "acks" environment
%% (and NOT an unnumbered section). This ensures the proper
%% identification of the section in the article metadata, and the
%% consistent spelling of the heading.
% \begin{acks}
% Xiaowei Wu is funded by the Science and Technology Development Fund (FDCT), Macau SAR (file no. 0147/2024/RIA2, 001/2024/SKL, and 0002/2025/EQP), and the University of Macau (file no. MYRG-GRG2025-00033-IOTSC).
% \end{acks}

%%
%% The next two lines define the bibliography style to be used, and
%% the bibliography file.
\bibliographystyle{ACM-Reference-Format}
\bibliography{matching}

% \begin{suggestedrevision}[S1-39: Bibliography metadata cleanup]
% Replace abbreviated or incomplete entries by publisher or DBLP records before
% submission.  In particular, add the DOI and full venue metadata for the JACM
% 2020 and FOCS 2020 papers, and identify the 2026 Derakhshan--Yu work explicitly
% as an arXiv preprint unless a proceedings version is available.  Rerun BibTeX
% and clear all avoidable metadata warnings.
% \end{suggestedrevision}

%%
%% If your work has an appendix, this is the place to put it.
\appendix
\section{Factor-Revealing LP}
\subsection{Proof of Lemma~\ref{lemma:lower-bound-ratio-ewf}}
\label{sec:factor_revealing_lp_ewf}
Recall that we need a continuous and strictly increasing function $h$
satisfying the following inequalities.
\begin{align}
\forall q\in[0,1], \qquad & q \cdot h(q) - \int_0^q h(y) \dd{y} + 1 - q \geq \Gamma, \label{eqn:proof_v_earlier}\\
\forall q_u,q_v\in[0,1],\qquad & q_u\cdot h(q_u) - \int_0^{q_u}h(y) \dd y
+ q_v\cdot h(q_v) - \int_0^{q_v}h(y) \dd y \notag \\
& \qquad + \int_0^{1-q_u} h(y) \dd y + (1-h(q_u)) \cdot (1-q_v) \geq \Gamma=0.592, \label{eqn:proof_u_earlier}\\
& h(0) = 0, h(1) = 1.
\end{align}

% For any positive integer $n$, let $[0, 1]_n$ denote the set of multiples of $\frac{1}{n}$ between $0$ and $1$:
% \begin{equation*}
% $[0, 1]_n = \bigg\{ \frac{i}{n} : 0 \le i \le n \bigg\}.
% \end{equation*}
Fix $\varepsilon=1/10000$ before solving the LP. We seek grid values
$\{h(y)\}_{y\in[0,1]_n}$
satisfying $h(0)=0$, $h(1)=1$, and
\[
\frac{\varepsilon}{n}
\leq h\left(y+\frac1n\right)-h(y)
\leq\frac2n
\qquad
\text{for every }y\in[0,1]_n,\ y<1.
\]
Once these grid values are determined, we extend $h$ to the entire
interval $[0,1]$ by linear interpolation. Specifically, for each
$y\in[0,1)$, write
\[
y=\bar y+\frac{z_y}{n},
\qquad
\bar y\in[0,1]_n,\quad z_y\in[0,1),
\]
and define
\[
h(y)=(1-z_y)h(\bar y)
     +z_yh\left(\bar y+\frac1n\right).
\]
At the endpoint, we simply set $h(1)=1$. Thus, $h$ is obtained by
linearly interpolating between consecutive grid values.

For any $t\in[0,1]_n$, this interpolation gives
\begin{equation*}
\int_0^t h(y)\dd y
=
\sum_{\substack{y\in[0,1]_n\\y<t}}
\frac{h(y)+h(y+\frac1n)}{2n}.
\end{equation*}

It remains to determine the grid values
$\{h(y)\}_{y\in[0,1]_n}$. We do so using the following
factor-revealing LP. Its optimal solution defines, through the above
linear interpolation, a function $h$ satisfying the constraints in
Lemma~\ref{lemma:lower-bound-ratio-ewf}.
For $q\in[0,1]_n$, write
\[
q^+=\min\left\{q+\frac1n,1\right\}.
\]
For this fixed value of $\varepsilon$, use the finite LP
\begin{align}
{(LP_n)}\qquad\text{maximize}\qquad & r \nonumber \\
\text{subject to}\qquad
& \textstyle r \leq q \cdot h(q) - \int_0^q h(y)\dd y
+ 1-q - \frac{1}{4 n^2},
\qquad \forall q\in [0,1]_n
\label{constraint:v-arrive-first} \\
& \textstyle r \leq q \cdot h(q) - \int_0^q h(y) \dd y
+ p \cdot h(p) - \int_0^p h(y) \dd y
+ \int_0^{1-q} h(y) \dd y \nonumber\\
& \qquad \textstyle
+ \left(1-h(q^+)\right)\cdot (1-p) -\frac{3}{4n^2},
\qquad \forall q,p\in [0,1]_n
\label{constraint:u-arrive-first} \\
& \textstyle h(0)=0,\quad h(1)=1, \nonumber\\
& \textstyle
\frac{\varepsilon}{n}
\leq h\left(y+\frac1n\right)-h(y)
\leq\frac2n,
\qquad \forall y\in[0,1]_n,\quad y<1.
\tag{Monotonicity and Lipschitzness}
\end{align}

%Note that constraints~\eqref{constraint:v-arrive-first} are equivalent to
%\begin{equation*}
%r \leq q\cdot h(q) - \int_0^q h(y)\dd y - \frac{1}{4n^2}, \qquad \forall q\in[0,1]_n,
%\end{equation*}
%and constraints~\eqref{constraint:u-arrive-first} are equivalent to: $\forall q,p\in[0,1]_n$,
%\begin{equation*}
%r \leq q\cdot h(q) - \int_0^q h(y)\dd y + p\cdot h(p) - \int_0^p h(y)\dd y
% + \int_0^{1-q} h(y) \dd y + (1-h(q+\frac{1}{n})) \cdot (1-p) - \frac{3}{4n^2}.
%\end{equation*}

The following claim is verified using the Gurobi LP solver. 
% \footnote{Our code is available at \url{https://github.com/denil1111/Fully-Online-Maching-Improved-Algorithms}.}

\begin{lemma}\label{claim:lpn-ratio-ewf}
	For $n=2000$ and $\varepsilon=1/10000$, the revised finite LP has a
feasible solution with $r=0.592$.
\footnote{The factor-revealing LP implementations are available at
\url{https://github.com/denil1111/fully-online-matching-factor-revealing-lps}.}
\end{lemma}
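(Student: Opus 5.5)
The statement is a purely computational claim, so I would prove it by exhibiting an explicit certificate and checking it exactly, not by any analytic argument. The first step is to note that $(LP_n)$ really is a linear program in the unknowns $\{h(y)\}_{y\in[0,1]_n}$ and $r$. For every fixed grid point $q$, the quantity $q\cdot h(q)$ is linear in $h(q)$. By the trapezoidal identity stated above, $\int_0^t h(y)\dd y$ is a fixed linear combination of grid values for $t\in[0,1]_n$; this covers $t=q$, $t=p$ and $t=1-q$, all of which lie on the grid. The term $(1-h(q^+))(1-p)$ is linear in $h(q^+)$ because $p$ is a constant. Hence every constraint in \eqref{constraint:v-arrive-first} and \eqref{constraint:u-arrive-first}, and every monotonicity/Lipschitz constraint, is a linear inequality with rational coefficients.

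For $n=2000$ there are $2001$ variables, about $2000$ constraints of type \eqref{constraint:v-arrive-first}, and about $4\cdot 10^6$ constraints of type \eqref{constraint:u-arrive-first}. I would first solve the LP numerically with Gurobi. To keep it tractable I would use lazy constraint generation: start with a coarse subset of $(q,p)$ pairs, solve, add the most violated pairs, and repeat until none is violated. The target is an optimal value strictly above $0.592$, say around $0.5925$, so that there is margin to spare.

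The step I expect to be the main obstacle is making the claim rigorous despite solver tolerances. Floating-point solutions typically violate constraints by about $10^{-9}$, and the monotonicity constraint $h(y+\frac1n)-h(y)\ge \varepsilon/n$ is tight-ish. My plan is to round the solver's grid values to rationals with denominator, say, $10^{9}$, fix $h(0)=0$ and $h(1)=1$ exactly, and then repair any increment that falls outside $[\varepsilon/n,2/n]$ by a tiny adjustment. I would then set $r=0.592$ and verify every constraint in exact rational arithmetic, or in interval arithmetic with outward rounding.

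The verification is cheap. Precompute the prefix integrals $I(t)=\int_0^t h$ and the values $\psi(t)=t\,h(t)-I(t)$ at all grid points. Then constraint \eqref{constraint:v-arrive-first} reads $\psi(q)+1-q-\frac1{4n^2}\ge 0.592$. Constraint \eqref{constraint:u-arrive-first} reads $\psi(q)+\psi(p)+I(1-q)+(1-h(q^+))(1-p)-\frac3{4n^2}\ge 0.592$ for all grid pairs $(q,p)$. These are about $4\cdot 10^6$ evaluations with exact arithmetic, which is easily feasible. The slack between the numerical optimum (about $0.5925$) and $0.592$ absorbs all rounding errors, so the rounded rational vector is an exactly feasible solution of $(LP_n)$ with $r=0.592$. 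Publishing the vector together with the checker would complete the proof.
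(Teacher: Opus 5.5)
Your proposal is correct and follows the same route as the paper. The paper establishes this lemma purely computationally, by solving $(LP_n)$ with Gurobi and linking the code in the footnote, and your additional steps of rounding to rationals, repairing the increments, and exactly re-checking all $\approx 4\cdot 10^6$ constraints go beyond what the paper states. That extra check is a worthwhile hardening of the same certificate, especially since Gurobi's default feasibility tolerance ($10^{-6}$) is larger than the monotonicity margin $\varepsilon/n=5\cdot 10^{-8}$.
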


% \begin{suggestedrevision}[Suggested replacement after rerunning the revised LP]
% For $n=1000$ and $\varepsilon=1/10000$, the revised finite LP has a feasible
% solution with $r=0.592$.  The LP generator, solver parameters,
% resulting solution, and an independent verification of all constraint
% residuals should be provided in the supplementary material.
% \end{suggestedrevision}

It remains to prove that an optimal solution of ${LP_n}$, whose
decision variables are $r$ and
$\{h(y)\}_{y\in[0,1]_n}$, defines the desired function $h$.
The increment constraints and linear interpolation make $h$ strictly
increasing, while the endpoint constraints give $h(0)=0$ and $h(1)=1$.
It therefore remains only to verify
constraints~\eqref{eqn:proof_v_earlier}
and~\eqref{eqn:proof_u_earlier}.

Let $h$ be defined by the optimal solution $\{ h(y) \}_{y\in[0,1]_n}$ of ${LP_n}$ with $n=2000$.
Fix any $q \in [0,1)$.
Let $q = \bar{q}+\frac{z_q}{n}$, where $\bar{q}\in[0,1]_n$ and $z_q\in [0,1)$.
Let $\hat{q} = \bar{q}+\frac{1}{n}$.
Observe that we have $q = (1-z_q)\cdot \bar{q}+z_q\cdot \hat{q}$ and $h(q) = (1-z_q)\cdot h(\bar{q}) + z_q\cdot h(\hat{q})$. At $q=1$, set
$\bar q=\hat q=1$ and $z_q=0$.  This avoids evaluating $h$ outside
$[0,1]$.

\begin{lemma}\label{claim:error-of-int-ewf}
	For $H(q)=qh(q)-\int_0^q h(y)\dd y$, we have
    \[
    H(q)-\bigl((1-z_q)H(\bar q)+z_qH(\hat q)\bigr)
    \in\left[-\frac{1}{4n^2},0\right].
    \]
\end{lemma}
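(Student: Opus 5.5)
The plan is to use that $h$ is piecewise linear, so $H$ is piecewise quadratic, and then apply the standard chord-error estimate for a quadratic on a single grid cell.

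\emph{Trivial case and restriction to one cell.} If $q=1$ we have $\bar q=\hat q=1$ and $z_q=0$, so the difference is exactly $0$. Otherwise $q\in[\bar q,\hat q)$ with $\hat q=\bar q+\frac1n\le 1$. On this cell $h$ is linear by the interpolation,
\[
h(y)=h(\bar q)+s\,(y-\bar q),\qquad s\eqdef n\bigl(h(\hat q)-h(\bar q)\bigr).
\]
The (Monotonicity and Lipschitzness) constraints of $LP_n$ give $\varepsilon\le s\le 2$.

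\emph{Computing $H$ on the cell.} Since $h$ is continuous and piecewise linear, $H(y)=y\,h(y)-\int_0^y h$ is absolutely continuous. For $y\in(\bar q,\hat q)$ its derivative is
\[
H'(y)=h(y)+y\,h'(y)-h(y)=s\,y .
\]
Hence on $[\bar q,\hat q]$ the function $H$ is a quadratic with constant second derivative $s$:
\[
H(y)=H(\bar q)+\tfrac{s}{2}\bigl(y^2-\bar q^{2}\bigr).
\]

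\emph{Chord error.} For any quadratic with second derivative $s$, the difference between the function and its chord through $\bar q,\hat q$ is
\[
H(q)-\bigl((1-z_q)H(\bar q)+z_qH(\hat q)\bigr)=-\tfrac{s}{2}\,(q-\bar q)(\hat q-q).
\]
This is verified either by noting that both sides are quadratics in $q$ with the same second derivative vanishing at the two endpoints, or by direct expansion with $q-\bar q=z_q/n$ and $\hat q-q=(1-z_q)/n$. The right-hand side equals $-\frac{s}{2n^2}z_q(1-z_q)$.

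\emph{Bounding.} Since $s\ge 0$ and $z_q(1-z_q)\ge0$, the difference is $\le 0$; this is convexity of $H$. Since $z_q(1-z_q)\le \frac14$ and $s\le 2$, the difference is at least $-\frac{2}{2n^2}\cdot\frac14=-\frac{1}{4n^2}$.

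The only step needing care is the derivative computation at the grid points, where $h$ has kinks. This is harmless because we work on a single closed cell on which $h$ is affine, so $H$ restricted to that cell is exactly the quadratic above, including at its endpoints by continuity. The upper Lipschitz bound $2/n$ in the LP is exactly what produces the constant $\frac1{4n^2}$.
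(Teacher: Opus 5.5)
Your proof is correct and essentially the paper's: your error term $-\frac{s}{2n^2}z_q(1-z_q)$ with $s=n\bigl(h(\hat q)-h(\bar q)\bigr)$ equals the paper's $\frac{z_q(1-z_q)}{2n}\bigl(h(\bar q)-h(\hat q)\bigr)$, and you bound it with the same LP monotonicity and Lipschitz constraints. The only difference is how you reach that formula: the paper expands the interpolation errors of $qh(q)$ and $\int_0^q h$ separately and subtracts them, whereas you obtain it in one step from $H'(y)=sy$ and the chord error of a quadratic.
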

\begin{proof}
	Recall that $h$ is defined by $\{h(y)\}_{y\in[0,1]_n}$, which satisfies the constraints of ${LP_n}$.

	We first consider the first term $q \cdot h(q)$. Observe that
	\begin{align*}
	& q\cdot h(q) - \Big( (1-z_q)\cdot\bar{q}\cdot h(\bar{q}) + z_q\cdot\hat{q}\cdot h(\hat{q}) \Big) = \frac{z_q(1-z_q)}{n}\cdot \Big( h(\bar{q})-h(\hat{q}) \Big) \in \left[-\frac{1}{2n^2},0\right],
	\end{align*}
	where the last step follows from the monotonicity and Lipschitzness of $\{h(y)\}_{y\in[0,1]_n}$.

	Next, we consider the second term $\int_0^q h(y) \dd{y}$. Observe that
	\begin{align*}
	& \quad \int_0^q h(y)\dd y - \left( (1-z_q)\int_0^{\bar{q}}h(y)\dd y + z_q\int_0^{\hat{q}}h(y)\dd y \right)
	= \int_{\bar{q}}^q h(y)\dd y - z_q\int_{\bar{q}}^{\hat{q}}h(y)\dd y \\
	& = \frac{z_q(1-z_q)}{2n}\cdot \Big( h(\bar{q}) - h(\hat{q}) \Big) \in \left[-\frac{1}{4n^2},0\right].
	\end{align*}

	Subtracting the second identity from the first one, we obtain
	\[
	H(q)-\bigl((1-z_q)H(\bar q)+z_qH(\hat q)\bigr)
	=\frac{z_q(1-z_q)}{2n}
	\bigl(h(\bar q)-h(\hat q)\bigr)
	\in\left[-\frac{1}{4n^2},0\right],
	\]
	which concludes the proof.
\end{proof}

\begin{lemma}
    \Cref{eqn:proof_v_earlier} is feasible with respect to the constructed function $h$.
\end{lemma}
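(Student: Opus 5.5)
We need to show that for every $q\in[0,1]$, $H(q)+1-q\ge\Gamma$, where $H(q)=qh(q)-\int_0^q h(y)\dd y$. The LP certifies this only at grid points, so the plan is to interpolate and control the error using Lemma~\ref{claim:error-of-int-ewf}.

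We fix $q\in[0,1]$ and write $q=(1-z_q)\bar q+z_q\hat q$ as above; the case $q=1$ is a grid point and is handled directly. By Lemma~\ref{claim:error-of-int-ewf},
\[
H(q)\ \ge\ (1-z_q)H(\bar q)+z_qH(\hat q)-\frac{1}{4n^2}.
\]
The term $1-q$ is affine in $q$, so $1-q=(1-z_q)(1-\bar q)+z_q(1-\hat q)$ holds exactly. Adding the two relations gives
\[
H(q)+1-q\ \ge\ (1-z_q)\bigl(H(\bar q)+1-\bar q\bigr)+z_q\bigl(H(\hat q)+1-\hat q\bigr)-\frac{1}{4n^2}.
\]

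Both $\bar q$ and $\hat q$ lie in $[0,1]_n$, so Constraint~\eqref{constraint:v-arrive-first} applies at each of them and gives $H(\cdot)+1-(\cdot)\ge r+\frac{1}{4n^2}$. Because $z_q\in[0,1]$, the convex combination is also at least $r+\frac{1}{4n^2}$. The slack $\frac{1}{4n^2}$ built into the LP therefore exactly cancels the interpolation loss, and we obtain $H(q)+1-q\ge r$. By Lemma~\ref{claim:lpn-ratio-ewf}, we may take $r=0.592=\Gamma$, which proves \Cref{eqn:proof_v_earlier}.

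The only delicate point is the direction of the interpolation error. Lemma~\ref{claim:error-of-int-ewf} shows that the interpolated $H$ can fall below the chord by at most $\frac{1}{4n^2}$, which is exactly the slack reserved in the LP constraint. One should also check that $\hat q\le 1$ whenever $q<1$; this holds since $\bar q<1$ forces $\hat q=\bar q+\frac1n\le 1$. No other obstacle is expected; the harder counterpart is \Cref{eqn:proof_u_earlier}, where the term $(1-h(q))(1-p)$ and the reflected integral require the $q^+$ trick and the extra $\frac{3}{4n^2}$ slack.
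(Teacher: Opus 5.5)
Your proof is correct and follows the paper's argument: Lemma~\ref{claim:error-of-int-ewf} costs at most $\frac{1}{4n^2}$ when passing from $q$ to the convex combination of its grid neighbours, with $1-q$ interpolated exactly, and this loss is absorbed by the $\frac{1}{4n^2}$ slack in the grid constraint~\eqref{constraint:v-arrive-first} with $r=\Gamma$. You also make explicit the use of the interpolation lemma, which the paper's one-line chain of inequalities uses without citing it.
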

% \paragraph{Feasibility of \Cref{eqn:proof_v_earlier}}
%	Note that the constraint is satisfied for all $q\in[0,1]_n$.
\begin{proof}
    By \Cref{constraint:v-arrive-first} and \Cref{claim:lpn-ratio-ewf}, we have
    \begin{align*}
    H(q) + 1-q & \geq  (1-z_q)\cdot \Big(H(\bar{q})+1-\bar{q}-\frac{1}{4n^2} \Big) +
    z_q\cdot \Big(H(\hat{q})+1-\hat{q}-\frac{1}{4n^2} \Big) \\
    & \geq (1-z_q)\cdot \Gamma + z_q\cdot \Gamma = \Gamma.
    \end{align*}
\end{proof}

\begin{lemma}
    \Cref{eqn:proof_u_earlier} is feasible with respect to the constructed function $h$.
\end{lemma}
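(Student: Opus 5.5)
The plan is to reduce \Cref{eqn:proof_u_earlier} at an arbitrary point $(q,p)\in[0,1]^2$ to the grid constraints \eqref{constraint:u-arrive-first}. I interpolate separately in $p$ and in $q$ and track the error from each interpolation. The slack $\frac{3}{4n^2}$ in the LP should exactly absorb three error terms of size at most $\frac{1}{4n^2}$ each.

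Write $H(s)=s\,h(s)-\int_0^s h(y)\dd y$ and $I(s)=\int_0^s h(y)\dd y$. Let
\[
F(q,p)=H(q)+H(p)+I(1-q)+(1-h(q))(1-p)
\]
denote the left-hand side of \Cref{eqn:proof_u_earlier}. Decompose $q=\bar q+z_q/n$ and $p=\bar p+z_p/n$ as in the preceding setup, with grid neighbours $\hat q,\hat p$; at the endpoint $1$ use the same convention as before.

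\textbf{Step 1 (interpolation in $p$).} Fix an arbitrary $q$. As a function of $p$, $F$ is $H(p)$ plus a term affine in $p$. Affine terms interpolate exactly, and \Cref{claim:error-of-int-ewf} gives $H(p)\ge(1-z_p)H(\bar p)+z_pH(\hat p)-\frac{1}{4n^2}$. Hence
\[
F(q,p)\ge(1-z_p)F(q,\bar p)+z_pF(q,\hat p)-\tfrac{1}{4n^2}.
\]
So it suffices to show $F(q,p)\ge\Gamma+\frac{1}{4n^2}$ for every grid $p$ and every real $q$.

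\textbf{Step 2 (interpolation in $q$ for grid $p$).} I bound the three $q$-dependent pieces one at a time.
\begin{itemize}
\item For $H(q)$, \Cref{claim:error-of-int-ewf} again loses at most $\frac{1}{4n^2}$ relative to the convex combination of $H(\bar q)$ and $H(\hat q)$.
\item For $I(1-q)$, note that $1-q$ lies between the grid points $1-\hat q$ and $1-\bar q$, with interpolation weights $(1-z_q,z_q)$ on $(I(1-\bar q),I(1-\hat q))$. Since $h$ is piecewise linear, $I$ is piecewise quadratic. Exactly as in the computation inside \Cref{claim:error-of-int-ewf}, the deviation from linear interpolation is $-\frac{z(1-z)}{2n}\cdot(\text{increment of }h)$. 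Its magnitude is at most $\frac{1}{4}\cdot\frac{1}{2n}\cdot\frac{2}{n}=\frac{1}{4n^2}$ by the Lipschitz constraint.
\item For the cross term, monotonicity gives $h(q)\le h(\hat q)=h(\bar q^+)\le h(\hat q^+)$. Therefore
\[
(1-h(q))(1-p)\ge(1-h(\bar q^+))(1-p)\ge(1-z_q)(1-h(\bar q^+))(1-p)+z_q(1-h(\hat q^+))(1-p).
\]
This step has no error, and it is precisely why the LP uses $h(q^+)$ rather than $h(q)$.
\end{itemize}

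Combining the three bullets, $F(q,p)$ is at least the $(1-z_q,z_q)$-combination of the grid LP expressions at $(\bar q,p)$ and $(\hat q,p)$, minus $\frac{2}{4n^2}$. By \eqref{constraint:u-arrive-first} together with \Cref{claim:lpn-ratio-ewf}, each of these grid expressions is at least $r+\frac{3}{4n^2}=\Gamma+\frac{3}{4n^2}$. This gives $F(q,p)\ge\Gamma+\frac{1}{4n^2}$ for grid $p$, and Step 1 then finishes the proof.

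The main obstacle I expect is the bookkeeping in Step 2. The cross term $(1-h(q))(1-p)$ is not jointly convex, so it cannot be interpolated directly. The approach above avoids this by doing the $p$-interpolation first, while $q$ is still general, and then using monotonicity of $h$ to replace $h(q)$ by the grid value $h(q^+)$. I also need to check the endpoint cases $q=1$ or $p=1$, where $\bar q=\hat q$ and the interpolation is trivial.
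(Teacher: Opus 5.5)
Your proposal is correct and follows the paper's proof. It uses the same three interpolation losses of $\frac{1}{4n^2}$ (for $H(q)$, $H(p)$, and $\int_0^{1-q}h(y)\dd y$), absorbed by the LP slack $\frac{3}{4n^2}$, and handles the cross term by the same monotonicity chain $h(q)\le h(\hat q)=h(\bar q^{+})\le h(\hat q^{+})$; the only difference is bookkeeping, since you interpolate in $p$ first and then in $q$, whereas the paper expands all four corners $(\bar q,\bar p),(\bar q,\hat p),(\hat q,\bar p),(\hat q,\hat p)$ at once, and the two are equivalent.
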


\begin{proof}
    Let $\bar{p},\hat{p}, z_p$ (for $p$) be defined similarly as $\bar{q},\hat{q}, z_q$ (for $q$). For each grid point $s\in[0,1]_n$, let $s^+=\min\{s+1/n,1\}$. Note that
    \[
        1-p=(1-z_p)(1-\bar{p})+z_p(1-\hat{p}).
    \]
    By \Cref{claim:error-of-int-ewf} and the monotonicity of $h$, we have
    \begin{align*}
     & \quad H(q) + H(p) + \int_0^{1-q} h(y) \dd y + (1-h(q)) \cdot (1-p) \\
    & \geq H(q) + H(p) + \int_0^{1-q} h(y) \dd y + (1-h(\hat{q})) \cdot (1-p) \\
    & = H(q) + H(p) + \int_0^{1-q} h(y) \dd y
    + (1-z_p)(1-h(\hat{q}))(1-\bar{p}) + z_p(1-h(\hat{q}))(1-\hat{p}) \\
    & \geq \Big((1-z_q)H(\bar{q})+z_qH(\hat{q})-\frac{1}{4n^2}\Big)
    + \Big((1-z_p)H(\bar{p})+z_pH(\hat{p})-\frac{1}{4n^2}\Big) \\
    & \quad + \Big((1-z_q)\int_0^{1-\bar{q}} h(y) \dd y + z_q\int_0^{1-\hat{q}} h(y) \dd y
    + \frac{z_q(1-z_q)}{2n}\big(h(1-\hat{q})-h(1-\bar{q})\big)\Big) \\
    & \quad + (1-z_p)(1-h(\hat{q}))(1-\bar{p}) + z_p(1-h(\hat{q}))(1-\hat{p}) \\
    & \geq \Big((1-z_q)H(\bar{q})+z_qH(\hat{q})-\frac{1}{4n^2}\Big)
    + \Big((1-z_p)H(\bar{p})+z_pH(\hat{p})-\frac{1}{4n^2}\Big) \\
    & \quad + \Big((1-z_q)\int_0^{1-\bar{q}} h(y) \dd y + z_q\int_0^{1-\hat{q}} h(y) \dd y-\frac{1}{4n^2}\Big) \\
    & \quad + (1-z_q)(1-z_p)\Big(1-h\big(\bar{q}^{+}\big)\Big)(1-\bar{p})
    + (1-z_q)z_p\Big(1-h\big(\bar{q}^{+}\big)\Big)(1-\hat{p}) \\
    & \quad + z_q(1-z_p)\Big(1-h\big(\hat{q}^{+}\big)\Big)(1-\bar{p})
    + z_qz_p\Big(1-h\big(\hat{q}^{+}\big)\Big)(1-\hat{p}) \\
    & = (1-z_q)(1-z_p)\Big(H(\bar{q})+H(\bar{p})+\int_0^{1-\bar{q}}h(y)\dd y
    + \Big(1-h\big(\bar{q}^{+}\big)\Big)(1-\bar{p})-\frac{3}{4n^2}\Big) \\
    & \quad + (1-z_q)z_p\Big(H(\bar{q})+H(\hat{p})+\int_0^{1-\bar{q}}h(y)\dd y
    + \Big(1-h\big(\bar{q}^{+}\big)\Big)(1-\hat{p})-\frac{3}{4n^2}\Big) \\
    & \quad + z_q(1-z_p)\Big(H(\hat{q})+H(\bar{p})+\int_0^{1-\hat{q}}h(y)\dd y
    + \Big(1-h\big(\hat{q}^{+}\big)\Big)(1-\bar{p})-\frac{3}{4n^2}\Big) \\
    & \quad + z_qz_p\Big(H(\hat{q})+H(\hat{p})+\int_0^{1-\hat{q}}h(y)\dd y
    + \Big(1-h\big(\hat{q}^{+}\big)\Big)(1-\hat{p})-\frac{3}{4n^2}\Big) \\
    & \geq (1-z_q)(1-z_p)\Gamma + (1-z_q)z_p\Gamma
    + z_q(1-z_p)\Gamma + z_qz_p\Gamma = \Gamma,
    \end{align*}
    where the last inequality follows from \Cref{constraint:u-arrive-first}, applied to the four grid-point pairs
    $(\bar{q},\bar{p})$, $(\bar{q},\hat{p})$, $(\hat{q},\bar{p})$, and $(\hat{q},\hat{p})$.
\end{proof}

\subsection{Proof of Lemma~\ref{lem:opth}}
\label{app:factor_lp}
In this section, we construct a continuous function $h$ whose rows are strictly
increasing, whose diagonal is a strictly increasing bijection, and
which satisfies the two factor-revealing inequalities for
$\Gamma=0.599$. Let
\begin{align*}
    &\Phi_1(\tau_v,\theta_v) = \theta_v \cdot h(\tau_v, \theta_v) - \int_{\tau_v}^{\theta_v} h(\tau_v, y_v) dy_v + 1-\theta_v; \\
    &\Phi_2(\tau_u,\theta_u,\tau_v,\theta_v) = \left(
    \begin{aligned}
        &\theta_u \cdot h(\tau_u, \theta_u) - \int_{\tau_u}^{\theta_u} h(\tau_u, y_u) dy_u + \theta_v \cdot h(\tau_v, \theta_v)\\
        & - \int_{\tau_v}^{\theta_v} h(\tau_v, y_v) dy_v + (1-h(\tau_u, \theta_u)) \cdot (1-\theta_v)
    \end{aligned}
    \right).
\end{align*}
The two constraints are defined as follows:
\begin{align}
	& \forall 0\le \tau_v \le \theta_v \le 1: &\Phi_1 \ge \Gamma; \tag{\ref{eqn:h-feasible-1}} \\
    & \forall 0\le \tau_u \le \theta_u \le 1, 1-\theta_u\le \tau_v \le \theta_v \le 1 :\quad  & \Phi_2 \ge \Gamma; \tag{\ref{eqn:h-feasible-2}}
\end{align}

\paragraph{The Factor-Revealing LP and the Construction of $h$} Fix an integer $n$. Let $1/n$ be the step size, and let
$[0,1]_n=\{i/n:i=0,\ldots,n\}$. The LP variables are the values
$h(\tau,\theta)$ for $\tau,\theta\in[0,1]_n$. The remaining values of
$h$ are defined by bilinear interpolation. In particular, for
$x\in[0,1)$, let $\bar x$ and $\hat x=\bar x+1/n$ be the two adjacent
grid points and let $z(x)=n(x-\bar x)$. At $x=1$, set
$\bar x=\hat x=1$ and $z(x)=0$. Thus,
$x=(1-z(x))\bar x+z(x)\hat x$. We define
\[
\begin{aligned}
h(\tau,\theta)={}&
(1-z(\tau))(1-z(\theta))h(\bar\tau,\bar\theta)
+(1-z(\tau))z(\theta)h(\bar\tau,\hat\theta)\\
&+z(\tau)(1-z(\theta))h(\hat\tau,\bar\theta)
+z(\tau)z(\theta)h(\hat\tau,\hat\theta).
\end{aligned}
\]

For grid points $\tau,\theta\in[0,1]_n$, we use the signed
trapezoidal convention
\[
\int_\tau^\theta h(\tau,y)\dd y=
\begin{cases}
\displaystyle
\sum_{\substack{y\in[0,1]_n\\ \tau\leq y<\theta}}
\frac{h(\tau,y)+h(\tau,y+\frac1n)}{2n},
&\tau<\theta,\\[2ex]
0,&\tau=\theta,\\[1ex]
\displaystyle
-\sum_{\substack{y\in[0,1]_n\\ \theta\leq y<\tau}}
\frac{h(\tau,y)+h(\tau,y+\frac1n)}{2n},
&\tau>\theta.
\end{cases}
\]
The last case is needed only for the one-cell halo used below.
With this convention, $\Phi_1$ and $\Phi_2$ remain linear functions
of the LP variables at all grid points used in the constraints.

Fix $\varepsilon=1/10000$ before solving the LP.
We use the following finite LP:
\begin{align}
\text{maximize}\qquad & \Gamma \nonumber\\
\text{subject to}\qquad
& \Gamma\leq\Phi_1(\tau_v,\theta_v)-\frac{2}{n^2},
&& \forall \tau_v,\theta_v\in[0,1]_n:\
\tau_v\leq\theta_v+\frac1n,
\label{constraint:v-arrive-first-2d}\\
& \Gamma\leq
\Phi_2(\tau_u,\theta_u,\tau_v,\theta_v)-\frac4{n^2},
&& \begin{aligned}
&\forall \tau_u,\theta_u,\tau_v,\theta_v\in[0,1]_n:\\
&\tau_u\leq\theta_u+\frac1n,\quad
\tau_v\leq\theta_v+\frac1n,\\
&\tau_v\geq1-\theta_u-\frac1n,
\end{aligned}
\label{constraint:u-arrive-first-2d}\\
& h(\tau,0)=0,\qquad h(\tau,1)=1,
&& \forall \tau\in[0,1]_n,\nonumber\\
& \frac{\varepsilon}{n}
\leq h\left(\tau,y+\frac1n\right)-h(\tau,y)
\leq\frac4n,
&& \forall \tau,y\in[0,1]_n:\ y<1,
\tag{Monotonicity and row Lipschitzness}\\
& h\left(y+\frac1n,\theta\right)-h(y,\theta)
\leq\frac4n,
&& \forall y,\theta\in[0,1]_n:\ y<1,
\tag{First-coordinate bound}\\
& h\left(y,y+\frac1n\right)
+h\left(y+\frac1n,y\right)-2h(y,y)
\geq\frac{\varepsilon}{n},
&& \forall y\in[0,1]_n:\ y<1,
\tag{Diagonal monotonicity I}\\
& 2h\left(y+\frac1n,y+\frac1n\right)
-h\left(y,y+\frac1n\right)
-h\left(y+\frac1n,y\right)
\geq\frac{\varepsilon}{n},
&& \forall y\in[0,1]_n:\ y<1.
\tag{Diagonal monotonicity II}
\end{align}

The relaxed index conditions form a one-cell halo around the continuous
feasible region. If $\tau\leq\theta$, every interpolation corner
satisfies $\tau'\leq\theta'+1/n$. Similarly, if
$1-\theta_u\leq\tau_v$, every corner satisfies
$\tau_v'\geq1-\theta_u'-1/n$. Hence every corner used in the
two- and four-dimensional interpolations appears in an LP constraint.

The last two constraints guarantee strict monotonicity of the diagonal.
Indeed, for $y\in[0,1]_n$ with $y<1$, let
\[
d_y(z)=h\left(y+\frac{z}{n},y+\frac{z}{n}\right),
\qquad 0\leq z\leq1.
\]
Bilinear interpolation gives
\begin{align*}
d_y'(z)={}&(1-z)\left(
h\left(y,y+\frac1n\right)
+h\left(y+\frac1n,y\right)-2h(y,y)\right)\\
&+z\left(
2h\left(y+\frac1n,y+\frac1n\right)
-h\left(y,y+\frac1n\right)
-h\left(y+\frac1n,y\right)\right)
\geq\frac{\varepsilon}{n}.
\end{align*}
Thus $\tau\mapsto h(\tau,\tau)$ is a strictly increasing bijection.
The row constraints and bilinear interpolation similarly imply that
$\theta\mapsto h(\tau,\theta)$ is strictly increasing for every fixed
$\tau$. The parameter $\varepsilon$ is fixed data and is not an LP
variable.

For $n=200$ and $\varepsilon=1/10000$, solving the revised finite LP
gives an objective value of $0.5997$. \footnote{The factor-revealing LP implementations are available at
\url{https://github.com/denil1111/fully-online-matching-factor-revealing-lps}.} In particular, the LP has a
feasible solution with $\Gamma=0.599$.
In the following, we prove that the optimal solution of the above LP
constructs a feasible function $h$, i.e., satisfies
Equations~\eqref{eqn:h-feasible-1} and~\eqref{eqn:h-feasible-2}, thereby
deriving the $0.599$ competitive ratio.

These constraints appear directly in the LP when the parameters (i.e.,
$\tau_u$, $\tau_v$, $\theta_u$, and $\theta_v$) are all inside
$[0,1]_n$; it remains to consider the case when some parameters are
outside $[0,1]_n$. For each interpolated coordinate $x$, define
\[
x^{(0)}=\bar x,\qquad x^{(1)}=\hat x,\qquad
\lambda_x(0)=1-z(x),\qquad\lambda_x(1)=z(x).
\]
Then write
\[
\widetilde\Phi_1(\tau_v,\theta_v)
=\sum_{a,b\in\{0,1\}}
\lambda_{\tau_v}(a)\lambda_{\theta_v}(b)
\Phi_1\bigl(\tau_v^{(a)},\theta_v^{(b)}\bigr)
\]
and
\begin{align*}
\widetilde\Phi_2(\tau_u,\theta_u,\tau_v,\theta_v)
=\sum_{a,b,c,d\in\{0,1\}}
&\lambda_{\tau_u}(a)\lambda_{\theta_u}(b)
\lambda_{\tau_v}(c)\lambda_{\theta_v}(d)\\
&\quad\cdot
\Phi_2\bigl(\tau_u^{(a)},\theta_u^{(b)},
\tau_v^{(c)},\theta_v^{(d)}\bigr).
\end{align*}
By the halo index conditions, every corner in the two sums above
belongs to the enlarged LP constraint set.  Since the coefficients are
nonnegative and sum to one, these are convex combinations, and hence
\[
\widetilde\Phi_1\geq\Gamma+\frac{2}{n^2},
\qquad
\widetilde\Phi_2\geq\Gamma+\frac4{n^2}.
\]
The key point is captured by the following two lemmas:
\begin{lemma}
\label{lem:phi1-bound}
    $\Phi_1(\tau_v,\theta_v) \geq \tilde{\Phi}_1(\tau_v,\theta_v) - \frac{2}{n^2} \geq \Gamma$.
\end{lemma}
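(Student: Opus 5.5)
The second inequality, $\widetilde\Phi_1(\tau_v,\theta_v)-\frac{2}{n^2}\geq\Gamma$, is already established: $\widetilde\Phi_1$ is a convex combination of corner values $\Phi_1(\tau_v^{(a)},\theta_v^{(b)})$, and each of these is at least $\Gamma+\frac{2}{n^2}$ by \eqref{constraint:v-arrive-first-2d} together with the halo condition. So the work is the first inequality, a two-dimensional analogue of \Cref{claim:error-of-int-ewf}.

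\textbf{Reduction.} Write $\Phi_1(\tau,\theta)=G(\tau,\theta)+1-\theta$, where
\[
G(\tau,\theta)=\theta\, h(\tau,\theta)-\int_\tau^\theta h(\tau,y)\dd y .
\]
The term $1-\theta$ is affine, so bilinear interpolation reproduces it exactly. It therefore suffices to show
\[
G(\tau,\theta)\;\geq\;\sum_{a,b}\lambda_\tau(a)\lambda_\theta(b)\,G\bigl(\tau^{(a)},\theta^{(b)}\bigr)-\frac{2}{n^2}.
\]
A preliminary check: at grid values of $\tau$, the signed trapezoidal convention equals the exact integral of the piecewise-linear row $y\mapsto h(\tau,y)$. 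This includes the reversed case $\tau>\theta$ that occurs in the halo. Hence the LP quantities are genuine values of $G$ at the corners.

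\textbf{Step 1 (the $\tau$ direction, with $\theta$ fixed).} For $\tau=(1-z)\bar\tau+z\hat\tau$, bilinearity gives $h(\tau,y)=(1-z)h(\bar\tau,y)+zh(\hat\tau,y)$ exactly. So $\theta h(\tau,\theta)$ interpolates exactly, and only the moving lower limit causes error. Splitting $\int_\tau^\theta=\int_{\bar\tau}^\theta-\int_{\bar\tau}^\tau$ for the $\bar\tau$-row and $\int_\tau^\theta=\int_{\hat\tau}^\theta+\int_\tau^{\hat\tau}$ for the $\hat\tau$-row, the discrepancy is
\[
G(\tau,\theta)-\bigl[(1-z)G(\bar\tau,\theta)+zG(\hat\tau,\theta)\bigr]
=(1-z)\int_{\bar\tau}^{\tau}h(\bar\tau,y)\dd y-z\int_{\tau}^{\hat\tau}h(\hat\tau,y)\dd y .
\]
Both integrals run over intervals of length $z/n$ and $(1-z)/n$ respectively, so their leading terms are $\frac{z(1-z)}{n}h(\bar\tau,\bar\tau)$ and $\frac{z(1-z)}{n}h(\hat\tau,\hat\tau)$. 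These cancel up to the difference between $h$ at two points of a single grid cell, which is controlled as follows:
\begin{itemize}
\item the row Lipschitz bound $\frac4n$ limits variation within each integral;
\item the first-coordinate bound $h(y+\frac1n,\theta)-h(y,\theta)\leq\frac4n$ limits variation across rows.
\end{itemize}
With these I expect a lower bound on the discrepancy of the form $-\frac{c\,z(1-z)}{n^2}$, with $c$ small enough that the total in Step 1 is at most $\frac{3}{2n^2}$ (note $z(1-z)\leq\frac14$). The diagonal-monotonicity constraints may help fix the sign of the dominant term, since they say the diagonal increases from $h(\bar\tau,\bar\tau)$ to $h(\hat\tau,\hat\tau)$.

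\textbf{Step 2 (the $\theta$ direction, on each grid row).} For grid $\tau'\in\{\bar\tau,\hat\tau\}$, the row $h(\tau',\cdot)$ is piecewise linear in $\theta$, and $\tau'$ is a grid point, so it never falls strictly inside the cell $[\bar\theta,\hat\theta]$. The computation in \Cref{claim:error-of-int-ewf} then carries over verbatim, with Lipschitz constant $\frac4n$ in place of $\frac2n$:
\[
G(\tau',\theta)-\bigl[(1-z_\theta)G(\tau',\bar\theta)+z_\theta G(\tau',\hat\theta)\bigr]=\frac{z_\theta(1-z_\theta)}{2n}\bigl(h(\tau',\bar\theta)-h(\tau',\hat\theta)\bigr)\geq-\frac{1}{2n^2}.
\]
Apply this inside the Step 1 expansion for both rows, so that $G(\bar\tau,\theta)$ and $G(\hat\tau,\theta)$ are each replaced by their interpolations over $\theta$. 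The total error is then the Step 1 error plus $\frac1{2n^2}$.

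\textbf{Main obstacle.} The hard part is Step 1: making the constant in the moving-lower-limit discrepancy tight enough that Steps 1 and 2 together lose at most $\frac{2}{n^2}$. I would first try to bound each integral by its endpoint values using row monotonicity, reducing the discrepancy to $\frac{z(1-z)}{n}\bigl(h(\bar\tau,\cdot)-h(\hat\tau,\cdot)\bigr)$ evaluated at nearby points. If that bound is too lossy, the fallback is to expand both integrals exactly using the bilinear form of $h$ on the cell $[\bar\tau,\hat\tau]^2$ and minimize the resulting quadratic in $z$.
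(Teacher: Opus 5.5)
Your outline follows the paper's proof. You split off the affine term $1-\theta$ and interpolate in $\tau$ first, where only the moving lower limit contributes, giving the discrepancy $(1-z)\int_{\bar\tau}^{\tau}h(\bar\tau,y)\,\dd y-z\int_{\tau}^{\hat\tau}h(\hat\tau,y)\,\dd y$. You then interpolate in $\theta$ on the two grid rows with loss $\frac{1}{2n^2}$. The second inequality via the convex combination of halo corners is also as in the paper. The gap is that Step 1, which carries the constant, is left as an expectation, and your first-choice way of closing it does not reach $\frac{2}{n^2}$. Bounding each integral by an endpoint value via row monotonicity compares $h$ in \emph{different} rows and columns. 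It yields $\frac{z(1-z)}{n}\bigl(h(\bar\tau,\bar\tau)-h(\hat\tau,\hat\tau)\bigr)$, and the only available control on this diagonal jump is $h(\hat\tau,\hat\tau)-h(\bar\tau,\bar\tau)\le\frac{8}{n}$ (one first-coordinate step plus one row step, each at most $\frac4n$). That allows a Step 1 loss of up to $\frac{2}{n^2}$, hence a total of $\frac{5}{2n^2}$ rather than $\frac{2}{n^2}$. The diagonal-monotonicity constraints do not help here. They make this term negative, which is the harmful sign, and they are needed only for strict monotonicity of $\tau\mapsto h(\tau,\tau)$.

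The missing step is your fallback, carried out exactly. For each grid row $\tau'\in\{\bar\tau,\hat\tau\}$, the function $y\mapsto h(\tau',y)$ is linear on the cell $[\bar\tau,\hat\tau]$, so both integrals are exact trapezoids. The discrepancy therefore equals
\[
\frac{z(1-z)}{2n}\Bigl[\bigl(h(\bar\tau,\bar\tau)-h(\hat\tau,\hat\tau)\bigr)+\bigl(h(\bar\tau,\tau)-h(\hat\tau,\tau)\bigr)\Bigr].
\]
The first bracket is at least $-\frac8n$ as above. The second involves only a first-coordinate step at the fixed column $\tau$, so it is at least $-\frac4n$, by interpolating the first-coordinate bound between columns $\bar\tau$ and $\hat\tau$. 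Hence the discrepancy is at least $-\frac{z(1-z)}{2n}\cdot\frac{12}{n}\ge-\frac{3}{2n^2}$. The exact trapezoid replaces half of the $\frac8n$ diagonal jump with a $\frac4n$ single-coordinate jump. This is what brings the Step 1 loss down to $\frac{3}{2n^2}$, so that together with your Step 2 the total is $\frac{2}{n^2}$, as the lemma requires.
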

\begin{lemma}
\label{lem:phi2-bound}
    $\Phi_2(\tau_u,\theta_u,\tau_v,\theta_v) \geq \tilde{\Phi}_2(\tau_u,\theta_u,\tau_v,\theta_v) - \frac{4}{n^2} \geq \Gamma$.
\end{lemma}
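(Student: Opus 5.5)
The second inequality is immediate from the convex-combination bound $\widetilde\Phi_2\geq\Gamma+\frac{4}{n^2}$ established just above, so the plan concentrates on the first inequality $\Phi_2\geq\widetilde\Phi_2-\frac{4}{n^2}$. Since $\Phi_2$ is additive, I would split it into three pieces and compare each piece with its interpolated counterpart:
\[
\Phi_2=\Psi(\tau_u,\theta_u)+\Psi(\tau_v,\theta_v)+\bigl(1-h(\tau_u,\theta_u)\bigr)(1-\theta_v),
\qquad
\Psi(\tau,\theta)=\theta h(\tau,\theta)-\int_\tau^\theta h(\tau,y)\dd y .
\]
Each $\Psi$ term depends on only two coordinates. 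Its interpolation in $\widetilde\Phi_2$ therefore collapses, because the weights of the other two coordinates sum to one, to the two-dimensional interpolation $\widetilde\Psi$. This is exactly the object handled in Lemma~\ref{lem:phi1-bound}. I would therefore first prove $\Psi\geq\widetilde\Psi-\frac{2}{n^2}$ and reuse it; indeed Lemma~\ref{lem:phi1-bound} is this statement plus the linear term $1-\theta_v$, which interpolates exactly.

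For the bound on $\Psi$, fix $\tau$ and interpolate in $\theta$ first. Bilinear interpolation is linear in $\theta$ along grid cells, so the argument of Lemma~\ref{claim:error-of-int-ewf} applies row by row. The product $\theta h(\tau,\theta)$ is off by $\frac{z(1-z)}{n}\bigl(h(\tau,\bar\theta)-h(\tau,\hat\theta)\bigr)$, and the trapezoidal integral is off by half of that. Using the row Lipschitz constant $4/n$, the net error is at most $\frac{1}{2}\cdot\frac{4}{n^2}\cdot\frac14=\frac{1}{2n^2}$ in magnitude.

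The $\tau$ direction is the main obstacle, because $\tau$ also appears as the lower limit of the integral. Varying $\tau$ inside a cell changes both the integrand row and the start of the integral. The dependence on the lower limit contributes $+h(\tau,\tau)$ times the displacement. Relative to the linear interpolation of the corner values this produces a curvature error of order $z(1-z)\cdot\frac{1}{n}\cdot\frac{1}{n}\cdot\mathrm{Lip}$, where the relevant Lipschitz constants come from the first-coordinate bound $4/n$ and the diagonal behaviour. The signed trapezoidal convention with the one-cell halo, where $\tau>\theta$, is needed precisely so that these corner values are defined. I would write
\[
\Psi(\tau,\theta)-\widetilde\Psi(\tau,\theta)
\]
as a sum of second-order remainder terms in $z(\tau)$ and $z(\theta)$, and bound each one using the Lipschitz constraints. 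The target is a total below $\frac{2}{n^2}$. Because the mixed term $z(\tau)z(\theta)$ may produce cross errors, some bookkeeping will be required here.

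The cross term $(1-h(\tau_u,\theta_u))(1-\theta_v)$ is multilinear in $(h,\theta_v)$, and $h$ is itself bilinear in $(\tau_u,\theta_u)$. Hence this term equals its four-dimensional interpolation exactly, contributing zero error. Summing the two $\Psi$ errors, at most $\frac{2}{n^2}$ each, gives $\Phi_2\geq\widetilde\Phi_2-\frac{4}{n^2}\geq\Gamma$.

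If the crude bound on $\Psi$ turns out to be too loose, I would sharpen it by a one-sided argument: show that the sign of the $\theta$-error is favorable, as in Lemma~\ref{claim:error-of-int-ewf}, where the error lies in $[-\frac{1}{4n^2},0]$. Then only the $\tau$-direction error would need the $\frac{4}{n}$ Lipschitz bounds.
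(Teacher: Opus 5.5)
Your architecture is the paper's. You split $\Phi_2$ into $H(\tau_u,\theta_u)+H(\tau_v,\theta_v)$ plus the cross term, treat the cross term as exact by multi-affinity, prove the intermediate bound $H\geq\widetilde H-\frac{2}{n^2}$, and your $\frac{1}{2n^2}$ computation in the $\theta$-direction matches. The second inequality also follows, as you say, from the convex-combination bound.

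\textbf{The gap: the $\tau$-direction estimate.} This step must absorb the remaining $\frac{3}{2n^2}$ of each $H$-term's budget, but you only describe it qualitatively (``curvature error of order \dots'', ``some bookkeeping''). The LP slack is exactly $\frac{4}{n^2}$, so an order-of-magnitude estimate does not close the argument.

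The missing computation is short. Fix $\theta$ and let $z=z(\tau)$. Bilinearity gives $h(\tau,y)=(1-z)h(\bar\tau,y)+zh(\hat\tau,y)$ for every $y$. Hence $\theta h(\tau,\theta)$ interpolates exactly in $\tau$, and only the moving lower limit of the integral contributes. Since both rows are affine on $[\bar\tau,\hat\tau]$,
\[
H(\tau,\theta)-(1-z)H(\bar\tau,\theta)-zH(\hat\tau,\theta)
=(1-z)\int_{\bar\tau}^{\tau}h(\bar\tau,y)\dd y-z\int_{\tau}^{\hat\tau}h(\hat\tau,y)\dd y
=\frac{z(1-z)}{2n}\bigl(h(\bar\tau,\bar\tau)+h(\bar\tau,\tau)-h(\hat\tau,\tau)-h(\hat\tau,\hat\tau)\bigr).
\]
The two differences inside the bracket are bounded as follows.
\begin{itemize}
\item $h(\hat\tau,\tau)-h(\bar\tau,\tau)\leq\frac4n$ by the first-coordinate bound.
\item $h(\hat\tau,\hat\tau)-h(\bar\tau,\bar\tau)\leq\frac8n$ by combining the first-coordinate and row Lipschitz bounds. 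It does not come from the diagonal-monotonicity constraints, which are lower bounds and play no role here.
\end{itemize}
So the bracket is at least $-\frac{12}{n}$, and the error is at least $-\frac{1}{8n}\cdot\frac{12}{n}=-\frac{3}{2n^2}$. Apply this one-dimensional bound and your $\theta$-bound in sequence, with weights summing to one. Then no $z(\tau)z(\theta)$ cross errors arise, and the bookkeeping you anticipate is unnecessary.

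\textbf{The fallback would not work.} The $\theta$-error is
\[
\frac{z(1-z)}{2n}\bigl(h(\tau,\bar\theta)-h(\tau,\hat\theta)\bigr)\leq0 .
\]
On each cell $\partial_\theta H=\theta\,\partial_\theta h$ is nondecreasing, so $H(\tau,\cdot)$ lies below its chords. The sign is therefore unfavorable, not favorable, and the one-sided argument cannot recover any slack. The full $\frac{1}{2n^2}$ loss must be paid, which is exactly why the $\tau$-direction constant has to come out as $\frac{3}{2n^2}$.
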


To prove these two lemmas, we define a crucial function below:
\begin{align*}
&H(\tau,\theta) = \theta h(\tau,\theta) - \int_{\tau}^{\theta} h(\tau,y) dy; \\
&\tilde{H}(\tau,\theta) = \left(
\begin{aligned}
    &(1-z(\tau))(1-z(\theta))H(\bar{\tau},\bar{\theta})
    + (1-z(\tau))\cdot z(\theta) H(\bar{\tau},\hat{\theta}) \\
    &+ z(\tau)\cdot(1-z(\theta))H(\hat{\tau},\bar{\theta})
    + z(\tau)\cdot z(\theta) H(\hat{\tau},\hat{\theta})
\end{aligned}
\right).
\end{align*}
We claim the following lemma.

\begin{lemma}
    $H(\tau,\theta) \geq \tilde{H}(\tau,\theta) - \frac{2}{n^2}$.
\end{lemma}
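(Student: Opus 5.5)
The plan is to split $H$ into pieces that are each either exactly multilinear in $(z(\tau),z(\theta))$ or differ from their bilinear interpolant by an explicitly controlled quadratic error. Fix $\tau,\theta$ with cell corners $\bar\tau,\hat\tau,\bar\theta,\hat\theta$, and write $z_\tau=z(\tau)$, $z_\theta=z(\theta)$. I would first record a consistency fact. For every fixed $\tau'$, the map $y\mapsto h(\tau',y)$ is piecewise linear between grid points. Hence the signed trapezoidal convention at grid points coincides with the exact integral of the interpolant, including the halo case $\tau>\theta$. So $H$ can be treated uniformly as
\[
H(\tau,\theta)=\theta\, h(\tau,\theta)-\int_0^{\theta}h(\tau,y)\dd y+\int_0^{\tau}h(\tau,y)\dd y ,
\]
and $\tilde H$ is the bilinear interpolation of this same expression taken from the four corners.

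Next I would handle the three terms separately.

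\emph{Term $\theta h(\tau,\theta)$.} The interpolant is linear in $z_\tau$ for fixed $\theta$, so interpolation in $\tau$ is exact. In $\theta$, the computation in Lemma~\ref{claim:error-of-int-ewf} gives the deviation $\frac{z_\theta(1-z_\theta)}{n}\bigl(h(\tau,\bar\theta)-h(\tau,\hat\theta)\bigr)$. This deviation is nonpositive by row monotonicity. By the row Lipschitz bound $4/n$ and $z_\theta(1-z_\theta)\le 1/4$, it is at least $-1/n^2$.

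\emph{Term $-\int_0^\theta h(\tau,y)\dd y$.} This is again linear in $z_\tau$. In $\theta$, the same lemma shows the deviation of $\int_0^\theta$ from its interpolant is $\frac{z_\theta(1-z_\theta)}{2n}\bigl(h(\tau,\bar\theta)-h(\tau,\hat\theta)\bigr)\le 0$. With the minus sign this term therefore only helps.

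\emph{Term $G(\tau):=\int_0^\tau h(\tau,y)\dd y$.} This depends only on $\tau$, so its bilinear interpolant is the linear interpolation $(1-z_\tau)G(\bar\tau)+z_\tau G(\hat\tau)$. Write $G(\tau)=\int_0^{\bar\tau}h(\tau,y)\dd y+\int_{\bar\tau}^{\tau}h(\tau,y)\dd y$. The first piece is linear in $z_\tau$ and matches its interpolant exactly. The second piece is an explicit cubic polynomial in $z_\tau$, obtained by plugging in the bilinear formula on the square with corners $h(\bar\tau,\bar\tau)$, $h(\bar\tau,\hat\tau)$, $h(\hat\tau,\bar\tau)$, $h(\hat\tau,\hat\tau)$. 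Its deviation from the chord between $z_\tau=0$ and $z_\tau=1$ is $z_\tau(1-z_\tau)/n$ times a combination of differences of these corner values. Each such difference is bounded by $4/n$, using the row Lipschitz bound and the first-coordinate bound. I then need to check that the resulting lower bound is at least $-1/n^2$. Here the sign information helps: $h(\hat\tau,\cdot)-h(\bar\tau,\cdot)\le 4/n$ only from above, while the row differences are nonnegative.

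Summing the three contributions gives $H-\tilde H\ge -1/n^2+0-1/n^2=-2/n^2$. I expect the third term to be the main obstacle. It is the only place where $\tau$ enters both the integrand and the limit, so the deviation is genuinely cubic and mixes both Lipschitz constraints. There I would first write out the cubic explicitly, factor out $z_\tau(1-z_\tau)$, and bound the remaining linear-in-$z_\tau$ factor at its two endpoints $z_\tau\in\{0,1\}$ using the LP constraints. If the constant comes out slightly worse than $1$, I would instead borrow slack from the nonnegative contribution of the second term, which is available whenever $z_\theta\in(0,1)$.
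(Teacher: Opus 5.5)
Your rewriting $H(\tau,\theta)=\theta h(\tau,\theta)-\int_0^\theta h(\tau,y)\,dy+G(\tau)$ and your deviation formulas for the first two pieces are correct. However, the step you single out as the main obstacle fails as planned. The deviation of your $G$ cannot be bounded by $-1/n^2$ from the LP constraints, and the sign information does not help. Write $z=z(\tau)$. Let $r_1=h(\bar\tau,\hat\tau)-h(\bar\tau,\bar\tau)$ and $r_2=h(\hat\tau,\hat\tau)-h(\hat\tau,\bar\tau)$ be the row differences, and $s_1=h(\hat\tau,\bar\tau)-h(\bar\tau,\bar\tau)$ and $s_2=h(\hat\tau,\hat\tau)-h(\bar\tau,\hat\tau)$ the first-coordinate differences. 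Then the exact deviation is
\[
G(\tau)-(1-z)G(\bar\tau)-zG(\hat\tau)=-\frac{z(1-z)}{2n}\Bigl[(1-z)(2s_1+r_2)+z(2s_2+r_1)\Bigr].
\]
Every difference enters with a minus sign, so nonnegativity of the row differences works against you. Only the upper bounds $r_i,s_i\le 4/n$ matter, and the bracket can reach $12/n$. This value is attained: take $z=1/2$ with $h(\bar\tau,\bar\tau)=\eta$, $h(\bar\tau,\hat\tau)=h(\hat\tau,\bar\tau)=\eta+4/n$, and $h(\hat\tau,\hat\tau)=\eta+8/n$. This configuration satisfies the row, first-coordinate and both diagonal constraints, and gives deviation exactly $-3/(2n^2)$. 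So the accounting $-1/n^2+0-1/n^2$ is not available.

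Your fallback is the right repair, but it has to be an exact pairing rather than a separate reserve of slack. At $z_\theta\in\{0,1\}$ the second piece gains nothing, though the first piece also loses nothing there. In general, the second piece's gain is pointwise exactly half of the first piece's loss. Together they contribute $\frac{z_\theta(1-z_\theta)}{2n}\bigl(h(\tau,\bar\theta)-h(\tau,\hat\theta)\bigr)\ge-\frac{1}{2n^2}$, and $-\frac{1}{2n^2}-\frac{3}{2n^2}=-\frac{2}{n^2}$ closes the argument.

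Once repaired, this is exactly the paper's proof. Its $\theta$-interpolation step is your first two pieces combined, with error $-1/(2n^2)$. Its $\tau$-interpolation error is precisely your $G$-deviation, written as $\frac{z(1-z)}{2n}\bigl(h(\bar\tau,\bar\tau)+h(\bar\tau,\tau)-h(\hat\tau,\tau)-h(\hat\tau,\hat\tau)\bigr)$. The paper bounds it via $h(\hat\tau,\tau)-h(\bar\tau,\tau)\le 4/n$ and $h(\hat\tau,\hat\tau)-h(\bar\tau,\bar\tau)\le 8/n$. Your split of $H$ is a slightly tidier bookkeeping of the same computation: it makes visible that the $\theta$-error is uniform in $\tau$ and that the $\tau$-error does not depend on $\theta$.
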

\begin{proof}
    We first restrict $\tau \in [0,1]_n$ but relax $\theta$ to be in $[0,1]$, and prove that
    $$
    H(\tau,\theta) \geq (1-z(\theta))\cdot H(\tau,\bar{\theta}) + z(\theta)\cdot H(\tau,\hat{\theta}) - \frac{1}{2n^2}.
    $$
    Recall that $H(\tau,\theta) = \theta h(\tau,\theta) - \int_{\tau}^{\theta} h(\tau,y) dy$. Considering the first term, we have
    \begin{align*}
        & \theta h(\tau,\theta) - (1-z(\theta))\cdot \bar{\theta} h(\tau,\bar{\theta}) - z(\theta)\cdot \hat{\theta} h(\tau,\hat{\theta}) \\
        ={}& \theta\cdot(1-z(\theta))\cdot h(\tau,\bar{\theta}) + \theta\cdot z(\theta)h(\tau,\hat{\theta}) - (1-z(\theta))\cdot \bar{\theta} h(\tau,\bar{\theta}) - z(\theta)\cdot \hat{\theta} h(\tau,\hat{\theta}) \\
        ={}& \frac{z(\theta)}{n} \cdot (1-z(\theta)) \cdot h(\tau,\bar{\theta}) - \frac{1-z(\theta)}{n} \cdot z(\theta) \cdot h(\tau,\hat{\theta}) \\
        ={}& \frac{z(\theta)(1-z(\theta))}{n} \cdot (h(\tau,\bar{\theta}) -h(\tau,\hat{\theta})).
    \end{align*}
    % \begin{align*}
    %     & - \int_{\tau}^{\theta} h(\tau,y)dy + (1-z(\theta)) \cdot \int_{\tau}^{\bar{\theta}} h(\tau,y)dy + z(\theta) \cdot \int_{\tau}^{\hat{\theta}} h(\tau,y)dy \\
    %     = & - \int_{\bar{\theta}}^{\theta} h(\tau,y)dy + z(\theta) \cdot \int_{\hat{\theta}}^{\bar{\theta}} h(\tau,y)dy \\
    %     \geq & 0. \qquad \text{(By the Monotonicity of $h$)}
    % \end{align*}
{
% new version
    % \color{red}
    For the other term, linear interpolation gives
    \begin{align*}
        & - \int_{\tau}^{\theta} h(\tau,y)\dd y
        + (1-z(\theta)) \int_{\tau}^{\bar{\theta}} h(\tau,y)\dd y
        + z(\theta) \int_{\tau}^{\hat{\theta}} h(\tau,y)\dd y \\
        ={}&
        - \int_{\bar{\theta}}^{\theta} h(\tau,y)\dd y
        + z(\theta) \int_{\bar{\theta}}^{\hat{\theta}} h(\tau,y)\dd y \\
        ={}&
        -\frac{z(\theta)(1-z(\theta))}{2n}
        \bigl(h(\tau,\bar\theta)-h(\tau,\hat\theta)\bigr).
    \end{align*}
    Subtracting the integral interpolation error from the first-term
    interpolation error, we obtain
    \[
    H(\tau,\theta)-(1-z(\theta))H(\tau,\bar\theta)
    -z(\theta)H(\tau,\hat\theta)
    =\frac{z(\theta)(1-z(\theta))}{2n}
    \bigl(h(\tau,\bar\theta)-h(\tau,\hat\theta)\bigr)
    \geq-\frac{1}{2n^2},
    \]
    where the last inequality follows from the row Lipschitz bound.
}
    This proves the first claim. Next, for any fixed
    $\theta\in[0,1]$, we allow $\tau$ to be in $[0,1]$ and prove that
    $$
    H(\tau,\theta) \geq (1-z(\tau))\cdot H(\bar{\tau},\theta) + z(\tau)\cdot H(\hat{\tau},\theta) - \frac{3}{2n^2}.
    $$
    Similarly, considering the first term, we have
    \begin{align*}
        & \theta h(\tau,\theta) - (1-z(\tau))\cdot \theta h(\bar{\tau},\theta) - z(\tau)\cdot \theta h(\hat{\tau},\theta) = 0.
    \end{align*}
    For the second term, we have
    \begin{align*}
        & - \int_{\tau}^{\theta} h(\tau,y)dy + (1-z(\tau)) \cdot \int_{\bar{\tau}}^{\theta} h(\bar{\tau},y)dy + z(\tau) \cdot \int_{\hat{\tau}}^{\theta} h(\hat{\tau},y)dy \\
        = & - (1-z(\tau)) \cdot \int_{\tau}^{\theta} h(\bar{\tau},y)dy - z(\tau) \cdot \int_{\tau}^{\theta} h(\hat{\tau},y)dy \\
        & + (1-z(\tau)) \cdot \int_{\bar{\tau}}^{\theta} h(\bar{\tau},y)dy + z(\tau) \cdot \int_{\hat{\tau}}^{\theta} h(\hat{\tau},y)dy \\
        ={}& (1-z(\tau)) \cdot \int_{\bar{\tau}}^{\tau} h(\bar{\tau},y)dy  - z(\tau) \cdot \int_{\tau}^{\hat{\tau}} h(\hat{\tau},y)dy \\
        ={}& (1-z(\tau)) \cdot \frac{z(\tau)}{2n} \cdot(h(\bar{\tau},\bar{\tau}) + h(\bar{\tau},\tau))  - z(\tau) \cdot \frac{1-z(\tau)}{2n} \cdot(h(\hat{\tau},\tau) + h(\hat{\tau},\hat{\tau}))\\
        \geq & -\frac{z(\tau)\cdot(1-z(\tau))}{2n}\cdot \frac{12}{n} \geq -\frac{3}{2n^2}  \qquad \text{(By the Lipschitzness of $h$)}
    \end{align*}

The first-coordinate bound, extended to arbitrary second coordinates by
linear interpolation, gives
\[
h(\hat\tau,\tau)-h(\bar\tau,\tau)\leq\frac4n \quad \text{and} \quad h(\hat\tau,\hat\tau)-h(\bar\tau,\bar\tau)
\leq h(\hat\tau,\hat\tau)-h(\bar\tau,\hat\tau) + h(\bar\tau,\hat\tau)-h(\bar\tau,\bar\tau)
\leq\frac8n.
\]
Thus the difference of the two sums inside the preceding trapezoidal
expression is at most $12/n$, which yields the stated
$3/(2n^2)$ error.  The lower margin
$\varepsilon/n$ is used only for strict monotonicity and
does not add another approximation loss.
Applying the $\tau$-interpolation bound first and then the
$\theta$-interpolation bound at $\bar\tau$ and $\hat\tau$, whose
weights sum to one, gives
$H(\tau,\theta)\geq\tilde H(\tau,\theta)-2/n^2$.
\end{proof}

Finally, we conclude the proof of Lemma~\ref{lem:phi1-bound} and Lemma~\ref{lem:phi2-bound}.
\begin{proof}[Proof of Lemma~\ref{lem:phi1-bound}]
\begin{align*}
    \Phi_1(\tau_v,\theta_v)
    = H(\tau_v,\theta_v) + 1 - \theta_v
    \geq \tilde{H}(\tau_v,\theta_v) + 1 - \theta_v - \frac{2}{n^2} \geq \tilde{\Phi}_1(\tau_v,\theta_v) - \frac{2}{n^2} \geq \Gamma.
\end{align*}
\end{proof}
\begin{proof}[Proof of Lemma~\ref{lem:phi2-bound}]
\begin{align*}
    &\theta_u \cdot h(\tau_u, \theta_u) - \int_{\tau_u}^{\theta_u} h(\tau_u, y_u) dy_u + \theta_v \cdot h(\tau_v, \theta_v) - \int_{\tau_v}^{\theta_v} h(\tau_v, y_v) dy_v + (1-h(\tau_u, \theta_u)) \cdot (1-\theta_v) \\
    ={}& H(\tau_u,\theta_u) + H(\tau_v,\theta_v) + (1-h(\tau_u,\theta_u))\cdot(1-\theta_v)\\
    \geq& \tilde{H}(\tau_u,\theta_u) + \tilde{H}(\tau_v,\theta_v) + (1-h(\tau_u,\theta_u))\cdot(1-\theta_v) - \frac{4}{n^2}\\
    ={}& \tilde{H}(\tau_u,\theta_u) + \tilde{H}(\tau_v,\theta_v) + 1-h(\tau_u,\theta_u) - \theta_v + \theta_v h(\tau_u,\theta_u) - \frac{4}{n^2} \\
    ={}& \tilde{\Phi}_2(\tau_u,\theta_u,\tau_v,\theta_v) - \frac{4}{n^2} \geq \Gamma.
\end{align*}
The cross term
\[
(1-h(\tau_u,\theta_u))(1-\theta_v)
\]
is multi-affine in the interpolation coordinates.  Its corner
interpolation is therefore exact, so it introduces no error beyond the
two $H$-terms.  Consequently, the total loss is $4/n^2$.
\end{proof}

\section{Primal-Dual Connection between the Upper and Lower Bounds} \label{sec:wtf_gh_dual}

We provide an interesting primal-dual connection between the primal-dual analysis in Section~\ref{sec:wtf_lb} and the hard instance in Section~\ref{sec:wtf_ub}, which guides our choice of the function $h$ in Section~\ref{sec:wtf_ub}. The following derivation is heuristic and only explains how we
discovered the hard instance; no main result relies on the functional
duality below.

Recall the following lower bound established in the proof of Lemma~\ref{thm:wtf_lower}:
\begin{align*}
\alpha_u + \alpha_v 
\ge \min \Big\{ 
& \int_0^1 g(x)\,dx, 
\min_{p_u,\,x_v} 
\bigl\{
    \int_0^{p_u} g(x)\,dx
    + (1-p_u)\bigl(1-g(x_v)\bigr)
    + \int_0^{x_v} g(x)\,dx
\bigr\}
\Big\}.
\end{align*}
For the candidate relevant to our construction, the first term in the
above minimum is nonbinding, so we formally focus on the second family:
\begin{align*}
\max_g \quad & r \\
\text{s.t.} \quad 
& r \le \int_0^x g(y)\,dy + \int_0^z g(y)\,dy + (1-x)\bigl(1-g(z)\bigr), 
\quad \forall\, x,z \in [0,1].
\end{align*}
The tight constraints suggest the following reduced program:
\begin{align*}
(P)\quad 
\max_g \quad & r \\
\text{s.t.} \quad 
& r \le \int_0^x g(y)\,dy + \int_0^{c-x} g(y)\,dy + (1-x)\bigl(1-g(c-x)\bigr),
\quad \forall\, x\in[0,c],
\end{align*}
where $c = 2 - \sqrt{2}$.
For the relevant solution, this reduced program yields the same
candidate value, and formal complementary slackness suggests that tight
pairs satisfy $p_u+x_v=c$ when \wtf is executed.

According to the instance structure and the argument in Section~\ref{sec:wtf_ub}, it suffices to find a function $h:[0,1]\to[0,1]$ satisfying
\[
\int_0^x \frac{1-\tau(y)}{1-y+h(y)}\,dy = \tau(h(x)) = c - \tau(x),
\quad \forall\, x\in[0,1].
\]
Here, $\tau:[0,1]\to[0,c]$ denotes the stationary water level, yielding the first equality.
Moreover, the perfect partner corresponding to $x$ has water level $\tau(h(x))$, which we require to equal $c - \tau(x)$, yielding the second equality.
Therefore, $h(x) = \tau^{-1}\bigl(c - \tau(x)\bigr)$. 
Let $f(x) = \tau^{-1}(x)$.
Differentiating the above relation, we require the existence of functions $f,\tau,h$ satisfying
\begin{align*}
\frac{1-\tau(x)}{1-x+h(x)} = -\tau'(x)
&\;\Leftrightarrow\;
\frac{1-\phi}{1-f(\phi)+f(c-\phi)} = -\frac{1}{f'(\phi)} \\
&\;\Leftrightarrow\;
1 - f(\phi) + f(c-\phi) + (1-\phi)f'(\phi) = 0,
\quad \forall\, \phi\in[0,c].
\end{align*}

% Now consider the dual program of $(P)$:
% \begin{align*}
% \min_q \quad & \int_0^c (1-x)q(x)\,dx \\
% \text{s.t.} \quad 
% & 1 - \int_0^c q(x)\,dx \le 0, \\
% & \int_0^x q(y)\,dy + \int_{c-x}^c q(y)\,dy - (1-x)q(x) \le 0,
% \quad \forall\, x\in[0,c].
% \end{align*}
% By the primal--dual theory, the optimal dual solution $q(x)$ satisfies
% \[
% \int_0^c q(x)\,dx = 1,
% \qquad
% \int_0^x q(y)\,dy + \int_{c-x}^c q(y)\,dy + (1-x)q(x) = 0.
% \]
% Let $Q(x) = 1 - \int_0^x q(y)\,dy = \int_x^c q(y)\,dy$.
% Then
% \[
% 1 - Q(x) + Q(c-x) + (1-x)Q'(x) = 0,
% \quad \forall\, x\in[0,c],
% \]
% which is exactly the same functional equation required for $f$.
{
% \color{red}
Now consider the dual program of $(P)$:
\begin{align*}
\min_{q\geq 0} \quad & \int_0^c (1-x)q(x)\,dx \\
\text{s.t.} \quad
& 1 - \int_0^c q(x)\,dx \le 0, \\
& \int_0^x q(y)\,dy
  + \int_{c-x}^c q(y)\,dy
  - (1-x)q(x) \le 0,
\quad \forall\, x\in[0,c].
\end{align*}
Formally invoking strong duality and complementary slackness makes the
relevant constraints tight at the optimal dual solution $q(x)$, giving
$
\int_0^c q(x)\,dx = 1$,
and 
$
\int_0^x q(y)\,dy
+\int_{c-x}^c q(y)\,dy
-(1-x)q(x)=0.
$
Let
$
Q(x)=1-\int_0^x q(y)\,dy=\int_x^c q(y)\,dy.
$
Since $q(x)=-Q'(x)$, the preceding equality becomes
\[
1-Q(x)+Q(c-x)+(1-x)Q'(x)=0,
\quad \forall\,x\in[0,c],
\]
which is exactly the same functional equation required for $f$.
}

\end{document}